%
\documentclass[runningheads]{llncs}
\usepackage{mycommands}


\newif\ifextended
\extendedtrue  

\newcommand{\extrefn}[2]{%
  \ifextended
    \Cref{#1}\ignorespaces
  \else
    #2\cite{this2026Extended}\ignorespaces
  \fi
}
\newcommand{\extref}[1]{\extrefn{#1}{}}

\newcommand{\AbbrExtRefn}[2]{%
  \ifextended
    \AbbrCref{#1}\ignorespaces
  \else
    #2\cite{this2026Extended}\ignorespaces
  \fi
}
\newcommand{\AbbrExtRef}[1]{\AbbrExtRefn{#1}{}}



\let\oldappendix\appendix
\renewcommand{\appendix}{\oldappendix\inappendixtrue}

\newif\ifinappendix
\inappendixfalse

\newcommand{\annotationsize}{\fontsize{8}{10}\selectfont}
\newcommand{\refproof}[1]{%
  \unless\ifinappendix
    \ifextended
      \hfill\mbox{\emph{\annotationsize{[Proof in~\AbbrCref{#1}]}}}%
    \else
      \hfill\mbox{\emph{\annotationsize{(Proof in~\cite{this2026Extended})}}}%
    \fi
  \fi
}

\newcommand{\reforig}[1]{%
  \ifinappendix
    \hfill\mbox{\emph{\annotationsize{[$\uparrow$ \hyperref[#1]{main}]}}}%
  \fi
}

\usepackage[T1]{fontenc}
%
\usepackage{graphicx}
%
\usepackage{color}
\usepackage{url}

\urlstyle{rm}
\usepackage[utf8]{inputenc}

\newcommand{\startpara}[1]{{%
\vskip6pt\noindent
{\bf #1.}}}

\begin{document}
\title{Robust Verification of \\ Concurrent Stochastic Games}
%
\author{Angel Y. He~\orcidID{0009-0008-6481-0754} \and
David Parker~\orcidID{0000-0003-4137-8862}}
\authorrunning{A. Y. He \& D. Parker}
%
\institute{Department of Computer Science, University of Oxford, Oxford OX1 2JD, UK 
\email{angel.he@balliol.ox.ac.uk, david.parker@cs.ox.ac.uk}
}
\maketitle              
\begin{abstract}
Autonomous systems often operate in multi-agent settings and need to make concurrent, strategic decisions, typically in uncertain environments.
Verification and control problems for these systems can be tackled with concurrent stochastic games (CSGs),
but this model requires transition probabilities to be precisely specified --- an unrealistic~requirement in many real-world settings.
%
We introduce \emph{robust CSGs} and their subclass \emph{interval CSGs} (ICSGs),
which capture epistemic uncertainty about transition probabilities in CSGs.
We propose a novel framework for \emph{robust} verification of these models under worst-case assumptions about transition uncertainty. 
Specifically, we develop the underlying theoretical foundations and efficient algorithms, for finite- and infinite-horizon objectives in both zero-sum and nonzero-sum settings, the latter based on (social-welfare optimal) Nash equilibria.
We build an implementation in the PRISM-games model checker and demonstrate the feasibility of robust verification of ICSGs across a selection of large benchmarks.



\keywords{Robust quantitative verification \and Probabilistic model checking \and Concurrent stochastic games \and Epistemic uncertainty.}
\end{abstract}

\section{Introduction}
\label{sec:intro}
Autonomous and intelligent systems are increasingly deployed in environments that are \emph{nondeterministic}, \emph{stochastic} and \emph{concurrent}, such as autonomous vehicle coordination, robotic exploration and networked market interactions. In these settings, decision-making involves simultaneous strategic interactions between multiple agents, often within uncertain and dynamic environments.

\emph{Concurrent stochastic games} (CSGs)~\cite{shapley1953SGs}, also known as \emph{Markov games}, provide a powerful framework for modelling such multi-agent systems. Unlike the simpler model of \emph{turn-based} stochastic games (TSGs) \cite{condon1992complexity}, CSGs allow players to select their actions simultaneously, without knowledge of each other’s choices. The outcomes depend probabilistically on the players' joint actions.

Formal verification techniques for CSGs provide a means to establish quantitative guarantees on the behaviour of these stochastic multi-agent systems, e.g., ensuring that ``a drone can safely reach its target with at least 95\% probability, regardless of the actions of other aircraft''.
They can also be used to automatically synthesise controllers or strategies that achieve these guarantees.
Early work on these models focused on the zero-sum setting (e.g., \cite{de2007concurrent,de2001quantitative,chatterjee2013strategy}),
whilst more recent work has added support for various temporal logics and the use of nonzero-sum game-theory solution concepts such as Nash equilibria (NE) and their variants~\cite{kwiatkowska2021automatic,kwiatkowska2022correlated}, along with widely used tool support~\cite{KNPS20}.

Despite their modelling effectiveness, a limitation of CSGs is that they assume transition probabilities are \textit{precisely known}. In reality, system dynamics are often only partially known due to abstraction, modelling inaccuracies, noise, or limited data in learned statistical models.
This is particularly evident in data-driven contexts like (model-based) reinforcement learning (RL), 
where transition probabilities are estimated from data.
These issues limit the reliability of guarantees from verification
and can make synthesised strategies sub-optimal.


In recent years, there has been growing interest in principled approaches to reasoning about \emph{epistemic uncertainty} in probabilistic models for verification~\cite{badings2023decision}.
For the simpler, single-agent setting, where decision making is performed using Markov decision processes (MDPs), a well studied approach is \emph{robust MDPs} (RMDPs) \cite{iyengar2005robust,nilim2005robust,wiesemann2013robust}, which capture model uncertainty via a set of possible transition probability functions. A common subclass is \emph{interval MDPs} (IMDPs)~\cite{givan2000bounded}, where transition probabilities are bounded within intervals.
Verification techniques then provide guarantees or synthesise optimal controllers in a \emph{robust} manner, i.e., making \emph{worst-case} assumptions about model uncertainty.

However, analogous frameworks for stochastic multi-agent systems remain underdeveloped.
In this paper, we address that gap and propose \emph{robust concurrent stochastic games} (RCSGs), a novel verification framework that augments CSGs with transition uncertainty and robust solution concepts.
In fact, RMDPs already have a link to stochastic games:
they can be interpreted as a \emph{zero-sum TSG} between the agent and an adversarial \emph{nature} player that resolves uncertain transition probabilities;
this view underlies many algorithms for solving RMDPs \cite{nilim2005robust,iyengar2005robust,chatterjee2024solving,meggendorfer2025solving}.
However, extending this framework to \textit{multi-agent}, and especially \textit{concurrent} settings, is non-trivial, as the interplay between transition uncertainty and simultaneous player actions significantly complicates both the reasoning and the very definition of robustness.



\startpara{Contributions and challenges}
In this work, we develop a framework for robust verification of CSGs under adversarial transition uncertainty, covering both zero-sum and nonzero-sum settings with finite- and infinite-horizon objectives. We focus primarily on the subclass of interval CSGs (ICSGs) characterised by transition probability intervals.
Extending robustness from MDPs to concurrent multi-agent games introduces fundamental challenges: optimality requires mixed strategies; equilibria definitions must incorporate uncertainty resolutions; and, in the nonzero-sum case, the adversarial role of nature differs from standard best-response reasoning. 
To address these, we: introduce robust equilibrium notions; establish theoretical results, e.g., on value preservation under player/nature action ordering; and present novel reductions from ICSGs to (non-robust) CSGs by adding an adversarial nature player. The latter yields a 2-player game in the zero-sum case, and a more subtle 3-player construction in the nonzero-sum case where nature minimises social welfare.
Building on these results, we derive tractable algorithms for solving ICSGs and implement them in PRISM-games~\cite{KNPS20}. We show their practicality via empirical evaluations on a set of large benchmarks: verification for zero-sum ICSGs performs comparably to CSGs, while nonzero-sum methods scale effectively but also provide insights into the intrinsic challenges of robust multi-agent reasoning.
\ifextended
This paper is an extended version of~\cite{HP26}.
\else
\fi

\startpara{Related work}
In the \emph{single-agent} setting, RMDPs
are solvable via robust dynamic programming (RDP) \cite{iyengar2005robust,nilim2005robust,wiesemann2013robust}. Recent work develops generic algorithms for polytopic \cite{chatterjee2024solving,wang2023robust} and more general RMDPs with constant support (e.g., \cite{meggendorfer2025solving}) via reduction to TSGs.
Robust methods for multi-agent settings are more limited, restricted to \textit{turn-based} polytopal stochastic games \cite{castro2025polytopal}, \textit{qualitative} verification \cite{berthon2025robustmulti}; or \textit{sampling-based}, learning-driven algorithms (e.g., \cite{shi2024sample,farhat2025online,roch2025drmg}) for the similar problem of distributionally \emph{robust Markov games} \cite{littman1994markovgames,shapley1953SGs,filar2012competitive} in RL, without model-checking capabilities or verification guarantees. 

\vspace{-0.15cm}
\section{Preliminaries}
\vspace{-0.1cm}
\label{sec:prelim}
Let $\distr(X)$ denote the set of discrete probability distributions over a finite set $X$, and 
let $\ind[A]$ be the indicator function that equals $1$ if $A$ holds and $0$ otherwise.
\vspace{-0.1cm}

\subsection{Robust Markov Decision Processes}
\label{sec:prelim-RMDP}
A core model for verification and control tasks in the context of uncertainty is \emph{Markov decision processes (MDPs)} \cite{bellman1966dynamic,howard1960dynamic}.
\vspace*{0.3em}
\begin{definition}[MDP]
\label{def:MDP}
A \emph{Markov decision process} (MDP) is a tuple $M = (S, \bar{s}, A, P)$, where 
$S$ is a finite set of states with initial state $\bar{s} \in S$;
$A$ is a finite set of actions; and 
$P: S\times A \rightharpoonup \distr(S)$ is a probabilistic transition function.
\end{definition}
\vspace*{-0.7em}
$A(s)$ denotes the enabled actions in state $s$. We write $P_{sa}=P(s,a)$ for the next-state distribution at $s$ under action $a$ and $P_{sas'}=P_{sa}(s')$ for the corresponding transition probability to $s'$. 
A \emph{path} is a finite or infinite sequence $\smash{\pi = s_0 \xrightarrow{a_0} s_1 \xrightarrow{a_1} \ldots}$ such that $s_0=\sbar$, $a_i \in A(s_i)$ and $P_{s_i a_i s_{i+1}}>0$ for all $i$. 
We write $\pi(i)=s_i$, $\pi[i]=a_i$, and let $FPaths_M$ and $IPaths_M$ be the sets of finite and infinite paths in $M$, respectively. 

A \emph{strategy} (or policy) of $M$ is a function $\sigma: FPaths_M \rightarrow \distr(A)$
that resolves the choices of action in each state.
Typically, we aim to find an optimal strategy for an MDP,
e.g., one that maximises the probability of a target state set being reached or the expected value of some reward function.

In order to reason about MDPs \emph{robustly} in the context of (epistemic) uncertainty about the model itself,
we can use \emph{robust MDPs}.
\vspace*{0.3em}
\begin{definition}[RMDP]
\label{def:RMDP}
    A \emph{robust MDP} (RMDP) is a tuple $M_R =(S, \bar{s}, A, \Punc)$, where $S$, $\bar{s}$ and $A$ are as for MDPs (\Cref{def:MDP}), and 
    $\Punc: S\times A \rightharpoonup 2^{\distr(S)}$ is an uncertain probabilistic transition function. 
\end{definition}
\vspace*{-0.7em}
Intuitively, an RMDP captures unknown transition dynamics:
for each state $s$ and action $a\in A(s)$, the \emph{uncertainty set} $\Punc_{sa} = \Punc(s,a)$
represents the \emph{set} of possible next-state distributions.
Selecting a single $P_{sa}\in\Punc_{sa}$ for each $(s,a)$ yields a probabilistic transition function $P:S\times A\rightarrow\distr(S)$, giving a specific MDP.
Abusing notation slightly, we also treat $\Punc$ as a set and write $P\in\Punc$,
referring to each $P$ as an \emph{uncertainty resolution}.
Typically, we aim to find a \emph{robust~optimal} strategy,
i.e., one that is optimal against the worst-case uncertainty resolution.

An RMDP $M_R$ can be viewed as a zero-sum TSG, i.e., a game which alternates between an agent choosing an $a\in A(s)$ in each state $s$ and then an adversarial player \emph{nature} resolving the choices $P_{sa}\in\Punc_{sa}$. 
Assumptions or restrictions on the strategies for nature dictate the kind of uncertainty considered:
\begin{enumerate*}[label=\arabic*)]
    \item \emph{rectangularity} \cite{nilim2005robust,iyengar2005robust}, i.e., whether transition uncertainty is resolved independently across different states (\emph{$s$-rectangular}) or state-action pairs (\emph{$(s,a)$-rectangular});
    \item \emph{static} (\emph{stationary}) vs. \emph{dynamic} (\emph{time-varying}) semantics \cite{iyengar2005robust,nilim2005robust}, i.e., whether nature follows a \emph{memoryless} strategy that has to make consistent choices at each $(s,a)$ over time. 
\end{enumerate*}
We can also restrict the nature of the uncertainty sets $\Punc_{sa}$, notably whether they are polytopic,
i.e., next-state distributions in $\Punc_{sa}$ form a polytope.
In this work, we focus on $(s,a)$-rectangular, polytopic uncertainty. This setting includes the well-studied class of IMDPs \cite{nilim2005robust,givan2000bounded}, where each $\Punc_{sa}$ is defined by independent intervals over transition probabilities.

\vspace{-0.1cm}
\subsection{Concurrent Stochastic Games}
\label{sec:prelim-CSG}

CSGs \cite{shapley1953SGs} provide the semantic basis for the class of games we introduce. 

\begin{definition}[CSG]
\label{def:csg}
An ($n$-player) \emph{concurrent stochastic game} (CSG) is a tuple $\csg=(N,S,\sbar, A, \Delta, P)$
where $N = \{1, \ldots, n\}$ is a finite set of players; $S$, $\sbar\in S$ and $P:S\times A\rightarrow\distr(S)$ are as defined for an MDP (\Cref{def:MDP});
$A= \times_{i\in N}{(A_i \union \idleset)}$ where $A_i$ is the set of actions for player $i$ and $\idle$ is an idle action disjoint from $\union_{i\in N}{A_i}$; 
$\Delta: S \rightarrow 2^{\union_{i\in N}{A_i}}$ is an action assignment function.
\end{definition}
\vspace*{-0.9em}
A CSG $\csg$ begins in the initial state $\sbar$. At each state $s \in S$, each player ${i \in N}$ simultaneously selects an action $a_i\in A_i(s)$, where $A_i(s) = \Delta(s) \intersect A_i$ if ${\Delta(s) \intersect A_i} \\{\neq \emptyset}$ and $A_i(s) =\idleset$ otherwise.
The game then transitions to state $s'$ following the distribution $P_{sa}$, where $a=(a_1,\dots,a_n)\in A(s) := \bigtimes_{i\in N}{A_i(s)}$. 

To allow quantitative analysis of $\csg$, we augment CSGs with \emph{reward structures}.
\vspace{-1em}
\begin{definition}[Reward structure]
\label{def:reward-struct}
    A \emph{reward structure} for a CSG $\csg$ is a tuple $r=(r_A,r_S)$ where $r_A: S\times A \rightarrow \reals$ is the \emph{action reward} function,
    and $r_S: S \rightarrow \reals$ is the \emph{state reward} function.
    We denote the total reward associated with a state-action pair $(s,a)$ as $r_{sa}=r(s,a) := r_A(s,a)+r_S(s)$.
\end{definition} 
\vspace*{-0.8em}

A \emph{strategy} for player $i$ is a function $\sigma_i: FPaths_{\csg}\to\distr(A_i)$ mapping finite histories to distributions over actions. 
A \emph{strategy profile} (or just \emph{profile}) is a tuple of strategies for each player, denoted $\sigma=(\sigma_1, \ldots, \sigma_n) \in \Sigma := \bigtimes_{i\in N}{\Sigma_i}$.
An \emph{objective} (or utility function) of player $i$ is a random variable $X_i: \ipathg \rightarrow \reals$. 
In a \emph{zero-sum} game, which is 2-player by definition, players have directly opposing objectives, i.e., $X_1 = -X_2$. In this case, we will represent their objectives using a single variable $X := X_1$, so that $X_2=-X$. In the \emph{nonzero-sum} (or \emph{general-sum}) case, we write $X = (X_1,\dots,X_n)$ for the tuple of all player objectives.

In this paper we focus on the four common objectives below, two of which are finite-horizon and two infinite-horizon. We assume a set of target states $T\subseteq S$ and, for the finite-horizon case, a time horizon $k\in\nats$.
\vspace{-.3em}
\begin{itemize}
\label{lst:objs}
    \item \emph{Bounded probabilistic reachability}: $X(\pi) = \ind\left[\exists{j \leq k}. \ {\pi(j)\in T}\right]$;
    \item \emph{Bounded cumulative reward}:
    $X(\pi) = \sum_{i=0}^{k-1}{r(\pi(i),\pi[i])}$;
    \item \emph{Probabilistic reachability}: 
    $X(\pi) = \ind\left[ \exists{j \in \nats}. \ {\pi(j)\in T} \right]$; and 
    \item \emph{Reachability reward}:
    $X(\pi) = \sum_{i=0}^{k_{\min}-1}{r(\pi(i),\pi[i])}$ if $\exists{j \in \nats}. \ {\pi(j)\in T}$ and $X(\pi) =\infty$ otherwise, 
    where $k_{\min} = \min{\{ j\in \nats \mid \pi(j)\in T\}}$.
\end{itemize}
We denote the \emph{expected utility} of player $i$ from state $s$ under profile $\sigma$ in $\csg$ as $u_i(\sigma \mid s,X) :=  \valg^i(s \mid \sigma, X) := \stratgs{\ev}[X_i]$,
with the index $i$ omitted in the zero-sum case. 
In zero-sum games, the \emph{value} of $\csg$ with respect to $X$ exists if the game is \emph{determined}, i.e., the maximum payoff that player~1 can guarantee equals the minimum payoff player~2 can enforce; the corresponding strategies are said to be \emph{optimal}.
For nonzero-sum games where players may cooperate or compete, we use the concept of a \emph{Nash equilibrium} (NE): a profile in which no player can improve their utility by unilaterally deviating. A \emph{social-welfare optimal NE} (SWNE) \cite{kwiatkowska2021automatic} refers to an NE that also maximises the players' total utility. 

A special, degenerate ``one-shot'' case of a CSG is a \emph{normal form game} (NFG), which consists of a single state and a single decision round. Thus, an NFG can be represented as a simplified tuple $\nfg=(N,A,u)$, where $N$ and $A$ are as defined for a CSG, and $u=(u_1,\ldots,u_n)$ with $u_i: A\to \reals$ defining player $i$'s utility for each joint action. 
A 2-player NFG can be represented as a \emph{bimatrix game}, defined by two matrices $\Zi_1, \Zi_2 \in \reals^{l \times m}$ with entries $z^1_{ij} = u_1(a_i, b_j)$ and $z^2_{ij} = u_2(a_i, b_j)$, where $A_1 = \{a_1, \ldots, a_l\}$ and $A_2 = \{b_1, \ldots, b_m\}$. 
The game is called \emph{zero-sum} if $\forall{a \in A}. \ {u_1(a) + u_2(a) = 0}$, in which case we can represent it as a single \emph{matrix game} $\Zi \in \reals^{l \times m}$ with $z_{ij} = u_1(a_i, b_j) = -u_2(a_i, b_j)$, i.e., $\Zi = \Zi_1 = -\Zi_2$. 

\vspace{-0.15cm}
\section{Robust Concurrent Stochastic Games}
\vspace{-0.1cm}
\label{sec:RCSG}
We now propose the model of \emph{robust CSGs} (RCSGs), which unifies the notions of \emph{robustness} from RMDPs and \emph{concurrent} decision-making from CSGs. 

\begin{definition}[RCSG]
\label{def:rcsg}
    A \emph{robust CSG} (RCSG) is a tuple $\rcsg = (N,S,\sbar,A,\Delta,\Punc)$ 
    where $\Punc$ is an \emph{uncertain} transition function defined as for RMDPs in \Cref{def:RMDP}, and all other components are as defined for CSGs in \Cref{def:csg}. 
\end{definition}
\vspace{-0.35cm}
Similar to the way that fixing the transition function in an RMDP induces an MDP, fixing the transition function in an RCSG to a particular $P\in\Punc$ induces a CSG $\rcsg_P= (N, S, \sbar, A, \Delta, P)$. 
We parametrise the corresponding notation with $P$. Notably, for a state $s$ of $\rcsg$ and a strategy profile $\sigma$ (defined as for CSGs), we write $u_i(\sigma,P \mid s,X) :=  \gs{\ev}^{\sigma,P}[X]$\footnote{We use these interchangeably and omit parameters that are clear from the context.} for the expected value of an objective $X$ under $\sigma$ applied to $\rcsg_P$.

By contrast to the uncertainty semantics in RMDPs (see \Cref{sec:prelim-RMDP}), multi-player RCSGs introduce an additional dimension: whether uncertainty is resolved \emph{adversarially} or is \emph{controlled} by players \cite{castro2025polytopal}.
In the adversarial case, nature resolves uncertainty against the players: in zero-sum games, where players have directly opposing objectives, nature aligns with one player to minimise the other's payoff; in nonzero-sum games, it acts against both by minimising a joint objective such as social welfare or cost. 

The controlled case assumes that one or more players resolve uncertainty to optimise their own objectives. This corresponds to optimistic reasoning in single-agent or zero-sum settings and can be seen as the dual of the adversarial case.
However, in nonzero-sum games, assigning control to a single player can undermine fairness by attributing uncertainty to that player’s decisions, while shared control would require principled coordination among players.
We therefore focus on the adversarial resolution in both zero- and nonzero-sum games.

Next, we define the \emph{robust} analogue of several game-theoretic concepts in the context of RCSGs. 
In general, we enforce that their defining properties hold under every $P\in \Punc$, or equivalently under the worst case $P^* := \arg\min_{P\in\Punc}{\gs{\ev}^{\sigma,P}[X]}$. 

\startpara{Zero-sum RCSGs}
\label{sec:zs-RCSG}
We first adapt classical minimax concepts for (2-player) zero-sum games, assuming that player 1 maximises an objective $X$.
\vspace*{0.2em}
\begin{definition}[Robust determinacy and optimality]
\label{def:adv-robust-determinacy-optimality-value}
    A zero-sum RCSG $\csg$ is \emph{robustly determined} with respect to an objective $X$, if for any state $s \in S$:
    \vspace{-0.1cm}
    \[
    \sup_{\sigma_1\in \Sigma_1} \inf_{\sigma_2\in \Sigma_2} \inf_{P\in \Punc}{\gs{\ev}^{(\sigma_1, \sigma_2), P}[X]}
    = 
    \inf_{\sigma_2\in \Sigma_2} \sup_{\sigma_1\in \Sigma_1} \inf_{P\in \Punc}{\gs{\ev}^{(\sigma_1, \sigma_2), P}[X]}
    =: \valg(s,X)
    \vspace{-0.1cm}
    \]
    where we call $\valg(s,X)$ the \emph{robust value} of $\csg$ in $s$ with respect to $X$.~Also,~$\sigma_1^*\in \Sigma_1$ is~a \emph{robust optimal strategy} of player 1 with respect to $X$ if $\gs{\ev}^{(\sigma_1^*,\sigma_2),P}[X] \geq \valg(s,X)$ for all $s\in S, \sigma_2 \in \Sigma_2, P\in \Punc$; similarly $\sigma_2^* \in \Sigma_2$ is a~\emph{robust optimal strategy} of player 2 
    if ${\gs{\ev}^{(\sigma_1,\sigma_2^*),P}[X] \leq \valg(s,X)}$ for all~${s\in S, \sigma_1 \in \Sigma_1, P\in \Punc}$.
\end{definition}
\vspace{-0.45cm}

\startpara{Nonzero-sum RCSGs}
\label{sec:nz-RCSG}
In the nonzero-sum case, each player $i\in N$ has a distinct objective $X_i$. 
For this setting, we adopt the concept of a \textit{robust Nash equilibrium} (RNE) \cite{aghassi2006robust,klibano1993uncertainty,perchet2020finding}, which refers to a profile $\sigma^*$ that remains a Nash equilibrium under any uncertainty resolution. 
Note that, in the zero-sum case, RNE coincide with the notion of robust optimal strategies. 

As is common for CSGs, we use \emph{subgame-perfect} NE \cite{osborne2004intro},
which require equilibrium behaviour in every state of the game, not just the initial one. 
We call these \emph{subgame-perfect RNE} but, for brevity, often refer to them simply as~RNE.
%
In standard CSGs with infinite-horizon objectives, NE may not exist \cite{bouyer2014mixed}, but $\varepsilon$-NE do exist for any $\varepsilon > 0$ under the objectives we consider. We therefore work with \emph{subgame-perfect $\varepsilon$-RNE} for infinite-horizon properties.
\begin{definition}[Subgame-perfect $\varepsilon$-RNE]
\label{def:subgame-perfect-rne}
    A profile $\sigma^*$ is a \emph{subgame-perfect robust $\varepsilon$-NE} ($\varepsilon$-RNE) iff ${\varepsilon + \inf_{P\in \Punc}{\left[ u_i(\sigma_{-i}^*[\sigma_i^*], P) - u_i(\sigma_{-i}^*[\sigma_i], P) \right]} \geq 0}$ 
    for all $\sigma_i \in \Sigma_i, i\in N$ at \emph{every state} $s\in S$.
    We define $\left\langle \inf_{P\in \Punc}{u_i(\sigma^*, P)} \right\rangle_{i\in N}$ as the corresponding \emph{$\varepsilon$-RNE values}. 
    A \emph{subgame-perfect robust NE} (RNE) is an $\varepsilon$-RNE with $\varepsilon=0$. We write $\epsRNE$ for the set of all $\varepsilon$-RNE and $\RNE$ for the set of all RNE.
\end{definition}
\vspace*{-0.7em}
Even if all induced CSGs $\rcsg_P$ of an RCSG $\rcsg$ have an NE ($\varepsilon$-NE), there may not exist an RNE ($\varepsilon$-RNE).
This is because a profile that is an NE in one induced CSG may not be an NE across all others. See \extrefn{sec:RNE-existence-in-ICSG}{the extended version of this paper}~for an illustration.

Next, we propose the robust counterparts of SWNE \cite{roughgarden2010algorithmic,kwiatkowska2021automatic} as RNE that maximise the robust (worst-case) total utility of the players, denoted $\usum(\sigma,P):=\sum_{i\in N}{u_i(\sigma, P)}$ for a given profile $\sigma\in \Sigma$ and $P\in \Punc$.

\begin{definition}[RSWNE]
\label{def:rsw-rsc}
    An RNE $\sigma^*$ of $\rcsg$ is a \emph{robust social-welfare optimal NE} (RSWNE) if it maximises the \emph{robust social welfare} amongst all RNE, i.e., 
    $\sigma^* \in \arg\max_{\sigma\in \RNE}{\usum(\sigma, P^*_\sigma)}$ where $P^*_\sigma := \arg\inf_{P\in \Punc}{\usum(\sigma,P)}$. 
    We define $\left\langle {u_i(\sigma^*, P^*_{\sigma^*})} \right\rangle_{i\in N}$ as the corresponding \emph{RSWNE values}.
\end{definition}
\vspace*{-0.7em}

Like RMDPs, various uncertainty models are applicable in RCSGs, such as those characterised by $L^p$-balls \cite{strehl2004empirical,ho2018fast} and non-rectangular sets. 
However, value computation is often computationally intractable under these models, even in single-agent settings \cite{wiesemann2013robust}. By contrast, \emph{interval} uncertainty yields convex uncertainty sets, enabling tractable computation while effectively capturing bounded but unstructured estimation errors, e.g., those derived from confidence intervals \cite{strehl2005theoretical}.
Hence, for the remainder of the paper we focus on \emph{interval CSGs}, as a natural and scalable foundation for incorporating robustness into CSGs.
\vspace*{0.3em}
\begin{definition}[ICSG]
\label{def:ICSG}
    An \emph{interval CSG} (ICSG) is a tuple $\rcsg=(N,S,\sbar, A, \Delta,\\\Pcheck, \Phat)$ 
    where 
    $\Pcheck, \Phat: S\times A\times S \rightharpoonup [0,1]$ are partial functions that assign lower and upper bounds, respectively, to transition probabilities, such that $\Pcheck_{sas'} \leq \Phat_{sas'}$.
    All other components are defined as for CSGs (\Cref{def:csg}).
\end{definition}
\vspace*{-0.7em}
An ICSG is an RCSG where $\Punc_{sa} = \{ P_{sa}\in \distr(S)
\mid \forall{s' \in S}. \ {P_{sas'} \in [\Pcheck_{sas'}, \Phat_{sas'}]} \}$.
We also require that $\Pcheck_{sas'}=0 \iff \Phat_{sas'}=0$,
i.e., each transition is either excluded or assigned a non-degenerate interval with a strictly positive lower bound.
This enables the standard \emph{graph preservation} constraint \cite{chatterjee2008model,meggendorfer2025solving}, which requires that all $P \in \Punc$ share the same support. This property is essential for ensuring the tractability of RDP \cite{nilim2005robust,iyengar2005robust} over $(s,a)$-rectangular uncertainty models.

\vspace{-0.15cm}
\section{Zero-sum ICSGs}
\vspace{-0.1cm}
\label{sec:zs-ICSG}
We now establish the theoretical foundations for robust verification of ICSGs, starting with the \emph{zero-sum} case.
We fix an ICSG $\csg = (N, S, \sbar, A, \Delta, \Pcheck, \Phat)$ where~${N = \{1,2\}}$
and in which player 1 maximises an objective $X$. For now, we assume that $X$ is infinite-horizon: either probabilistic/reward reachability.

At a high level, analogously to the stochastic game view of an RMDP, we will reduce ICSG $\csg$ to a CSG $\adv{\csg}$ extended with a third player, \emph{nature}, who resolves transition uncertainty adversarially against player~1. Since player 2 and 3 (nature) share the same objective, they can be merged into a single coalition, making $\adv{\csg}$ a 2-player CSG between coalitions $\{1\}$ and $\{2,3\}$.
We refer to $\adv{\csg}$ as the \emph{adversarial expansion} of $\csg$.
We will establish a one-to-one correspondence between optimal values and strategies in~$\adv{\csg}$ and their robust counterparts in~$\csg$, allowing us to reduce robust verification of zero-sum ICSGs to solution of zero-sum CSGs. The latter can be performed with value iteration~\cite{kwiatkowska2021automatic} although, as for RMDPs, explicit construction of the full CSG $\adv{\csg}$ is not required. 

\startpara{Player-first vs. nature-first semantics}
In the CSG reduction, a natural question to consider is the order in which uncertainty is resolved relative to players' moves.
Under the \emph{player-first} semantics, nature acts \emph{after} both players have chosen their actions. 
While this aligns closely with the adversarial interpretation of robustness (see \Cref{def:adv-robust-determinacy-optimality-value}), it requires nature's minimisation problem (against player 1's objective) to be solved separately for every player profile $\sigma\in\Sigma$, which is computationally demanding.
By contrast, the \emph{nature-first} semantics assumes that nature first commits to a realisation of $\Punc$ \emph{before} any player acts, thereby inducing a fixed CSG upfront.
This formulation allows the use of efficient dynamic programming techniques, such as \emph{robust value iteration} (RVI) \cite{iyengar2005robust,nilim2005robust}, which we adopt for solving these games.

This distinction corresponds to the difference between \emph{agent first} and \emph{nature first} semantics for robust partially observable MDPs in \cite{bovy2024imprecise}. While in general~this assumption can affect the game value, we establish in \Cref{thm:infh-nature-or-player-first-val-equiv} that both semantics yield the same value in our setting (finitely-branching zero-sum ICSGs).

\begin{restatable}[Player/nature-first Value Equivalence]{theorem}{PlayerNatureEq}
\label{thm:infh-nature-or-player-first-val-equiv}
From any $s \in S$, $\valg(s)$ is invariant under the player-first or nature-first semantics: 
\[
\sup_{\sigma_1\in \Sigma_1}{\inf_{\sigma_2\in \Sigma_2}{\inf_{P\in \Punc}{\gs{\ev}^{(\sigma_1, \sigma_2), P}[X]}}}
=
\inf_{P\in \Punc}{\sup_{\sigma_1\in \Sigma_1}{\inf_{\sigma_2\in \Sigma_2}{\gs{\ev}^{(\sigma_1, \sigma_2), P}[X]}}}.
\quad \reforig{thm:infh-nature-or-player-first-val-equiv}
\]
\end{restatable}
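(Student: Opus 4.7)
The plan is to avoid a direct minimax argument and instead derive the equality from determinacy of two zero-sum CSGs plus a quantifier-commutation step. First, I would observe that $\inf_{\sigma_2} \inf_P = \inf_{(\sigma_2, P)}$, so
\[
\mathrm{LHS} \;=\; \sup_{\sigma_1 \in \Sigma_1} \inf_{(\sigma_2, P) \in \Sigma_2 \times \Punc} \gs{\ev}^{(\sigma_1, \sigma_2), P}[X].
\]
The inequality $\mathrm{LHS} \leq \mathrm{RHS}$ is then immediate from the standard weak max-min inequality applied to $\sigma_1$ and $P$, with $\inf_{\sigma_2}$ fixed as the inner quantifier. The substantive direction is $\mathrm{LHS} \geq \mathrm{RHS}$.

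For that direction, I would invoke determinacy of zero-sum CSGs twice. First, for each fixed $P \in \Punc$, the induced game $\csg_P$ is a classical finitely-branching zero-sum CSG with a reachability or reachability-reward objective; it admits mixed memoryless optimal strategies and in particular $\sup_{\sigma_1} \inf_{\sigma_2} = \inf_{\sigma_2} \sup_{\sigma_1}$ pointwise in $P$. Applying $\inf_P$ on both sides and commuting the infimums gives
\[
\mathrm{RHS} \;=\; \inf_{(\sigma_2, P)} \sup_{\sigma_1} \gs{\ev}^{(\sigma_1, \sigma_2), P}[X].
\]
Second, viewing the adversarial expansion $\adv{\csg}$ as a 2-player zero-sum CSG in which $\{2,\textrm{nature}\}$ forms a single adversarial coalition against player~1, the same determinacy theorem yields $\sup_{\sigma_1} \inf_{(\sigma_2, P)} = \inf_{(\sigma_2, P)} \sup_{\sigma_1}$, closing the chain LHS = LHS-rewritten = RHS-rewritten = RHS.

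The main obstacle is justifying that $(\sigma_2, P) \in \Sigma_2 \times \Punc$ faithfully represents the effective coalition strategies of $\{2,\textrm{nature}\}$ in $\adv{\csg}$: a priori, nature in $\adv{\csg}$ could use a history-dependent mixed strategy, which is strictly more general than a memoryless deterministic $P \in \Punc$. This is reconciled by $(s,a)$-rectangularity and graph preservation: the infimum over $\Punc$ decomposes as $\prod_{s,a} \inf_{P_{sa} \in \Punc_{sa}}$, and at each state--action the minimisand is linear in $P_{sa}$ over the compact convex polytope $\Punc_{sa}$, so a memoryless pure selection already attains the infimum against any fixed opposing profile. This is the same reduction that underlies RVI for RMDPs; the only extra care here is verifying that simultaneously introducing player~2 causes no interference, which holds because $\sigma_2$ and $P$ act on disjoint coordinates of the joint move and their infimums commute. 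The argument is uniform in the choice of infinite-horizon objective, covering both probabilistic and reachability-reward cases under the standing assumptions.
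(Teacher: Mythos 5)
Your proposal is correct and follows essentially the same route as the paper's proof: both rely on determinacy of the finite adversarial expansion $\adv{\csg}$ (with $\{2,\text{nature}\}$ as a coalition) to swap $\sup_{\sigma_1}$ with $\inf_{(\sigma_2,P)}$, and on determinacy of each induced finitely-branching CSG $\csg_P$ to swap $\sup_{\sigma_1}$ with $\inf_{\sigma_2}$ pointwise in $P$, with the coalition-strategy correspondence justified via $(s,a)$-rectangularity exactly as in the paper's memoryless-sufficiency and utility-preservation lemmas. The only difference is presentational: you split the argument into $\leq$ (weak max-min) and $\geq$, whereas the paper writes a single chain of equalities.
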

\vspace*{-0.7em}
\begin{proof}[Sketch]
We prove the result top-down via construction of the \emph{adversarial expansion} $\adv{\csg}$ (\Cref{def:infh-adv-resolution}) and subsequent determinacy and value preservation results (\Cref{cor:infh-adv-value-preservation}) established in this section. Specifically, determinacy of the finite CSG $\adv{\csg}$ justifies exchanging the order of optimisation between players and nature without changing the game value. Full proof in \extref{prf:infh-nature-or-player-first-val-equiv}.
\end{proof}

This value equivalence justifies using the nature-first semantics in implementations, so that nature's minimisation problem is solved only \emph{once} per step, after which player strategies are derived from the minimising distributions $P^*_{sa}$. 
%
Henceforth, without loss of generality, we focus on the player-first semantics.

We next observe that optimal strategies for ICSGs with infinite-horizon objectives admit a memoryless form, which allows the game to be analysed via fixed-point equations over the state space.
\begin{restatable}[Strategy class sufficient for optimality]{lemma}{MemorylessSufficiency}
\label{lem:infh-memoryless-optimal-strategies}
   Given an \emph{infinite-horizon} objective $X$ for $\csg$, each player has a \emph{memoryless} robust optimal strategy, and nature has a \emph{deterministic memoryless} optimal strategy.
   \reforig{lem:infh-memoryless-optimal-strategies}
\end{restatable}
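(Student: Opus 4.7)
The plan is to lift the claim to the adversarial expansion $\adv{\csg}$ introduced earlier in this section and invoke known results on zero-sum CSGs together with polytopal optimisation arguments. By \Cref{thm:infh-nature-or-player-first-val-equiv}, the player-first and nature-first semantics agree on the robust value, so we may work under whichever is more convenient; for nature's half of the claim the nature-first view is natural, whereas for the two players the player-first view directly matches \Cref{def:adv-robust-determinacy-optimality-value}. I will establish memoryless sufficiency separately for the players and for nature, then glue the three strategies into a single triple certifying the robust value.

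For the two players, the plan is to observe that $\adv{\csg}$ is a finite 2-coalition zero-sum CSG (player~1 vs.\ the coalition $\{2,\text{nature}\}$). Classical results on concurrent stochastic games with probabilistic reachability and reachability-reward objectives guarantee that both coalitions admit memoryless (mixed) optimal strategies: the value function is the least fixed point of a Shapley-style Bellman operator, which at each state reduces to a one-shot matrix/bimatrix game whose optimal strategies depend only on the current state. Projecting the coalition's joint memoryless optimal strategy onto its player~2 component yields a memoryless robust optimal strategy for player~2, and the analogous projection onto player~1 gives the desired memoryless robust optimal strategy for player~1.

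For nature, the key observation is that the inner Bellman update
\[
\inf_{P_{sa}\in\Punc_{sa}} \sum_{s'\in S} P_{sas'} \, V(s')
\]
is a linear program over the polytope $\Punc_{sa}$ (a product of intervals intersected with the simplex, by \Cref{def:ICSG}). Its optimum is attained at a vertex of $\Punc_{sa}$, i.e.\ at a single distribution in $\Punc_{sa}$, so nature need neither randomise nor remember past history: selecting such a vertex for each $(s,a)$ defines a deterministic memoryless nature strategy that realises the infimum. Combined with the fixed-point characterisation used for the players, this gives a triple $(\sigma_1^*,\sigma_2^*,P^*)$ that attains the robust value simultaneously for both semantics.

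The main obstacle I anticipate is the last step: certifying that the memoryless strategies obtained by solving the Bellman operator in $\adv{\csg}$ really are \emph{robustly} optimal in the sense of \Cref{def:adv-robust-determinacy-optimality-value}, i.e.\ the inequalities hold for \emph{every} $P\in\Punc$ and \emph{every} opposing strategy, not just against the particular worst-case $P^*$ that the value iteration converges on. The plan is to discharge this using the saddle-point property of the Shapley operator together with \Cref{thm:infh-nature-or-player-first-val-equiv}: the value equivalence allows one to swap the order of $\inf_{P}$ and the player optimisations, so a strategy optimal against $P^*$ is optimal against the supremum/infimum over $P$ as required. A secondary technical point, the existence of fixed-point values for reachability-reward objectives, is handled by the usual least-fixed-point argument under the graph-preservation assumption that all $P\in\Punc$ share the same support.
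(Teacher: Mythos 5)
Your proposal takes a genuinely different route from the paper, and as the development stands it contains a circular dependency. You prove the lemma by lifting to the adversarial expansion $\adv{\csg}$ and invoking \Cref{thm:infh-nature-or-player-first-val-equiv}, but in the paper that theorem is proved \emph{via} the expansion and its value-preservation results (\Cref{cor:infh-adv-value-preservation}), and the expansion machinery itself presupposes this lemma: the restriction of nature to the finitely many vertices $\verts[\Punc_{sa}]$, the reading of $\Punc$ as the set of \emph{memoryless} nature strategies, and the robust Bellman equation \eqref{eq:infh-adv-bellman} are all justified downstream of \Cref{lem:infh-memoryless-optimal-strategies}. To make your route non-circular you would have to first establish the path/probability/objective correspondence between $\csg$ and $\adv{\csg}$ over arbitrary \emph{history-dependent} strategies, which is not what the paper's bijection (\Cref{lem:infh-util-preserving-strategy-eq}) is set up for. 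A second soft spot is the appeal to ``classical results'' giving memoryless \emph{optimal} (not merely $\varepsilon$-optimal) strategies for both coalitions in finite zero-sum concurrent reachability games: exact optimal strategies for the reachability player need not exist in general concurrent games, so this citation would need to be hedged to the specific objective classes and assumptions in force.

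The paper instead argues directly and elementarily. First, $(s,a)$-rectangularity gives a decomposition of nature's minimisation into independent per-$(s,a)$ subproblems (\Cref{lem:infh-nature-inner-problem}, an exchange argument). Then, for memorylessness, additivity of the objectives means that for two histories $\pi_1,\pi_2$ ending in the same state the accumulated values $X(\pi_1),X(\pi_2)$ are constants with respect to the continuation optimisation, so the witnessing optimal choice (for nature, and symmetrically for each player) coincides on both histories; histories therefore collapse to states. Finally, determinism of nature follows because the value under a randomised choice over $\Punc_{sa}$ is a convex combination of the values of the pure choices, hence no better than deterministically committing to the minimising $P^*_{sa}$. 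Your LP-over-a-polytope observation is essentially the paper's separate \Cref{lem:optimality-of-extrema} (needed later for finiteness of $\adv{\csg}$), but it is stronger than what determinism requires and does not by itself address memorylessness of the two players, which is the part your proposal delegates entirely to the expansion and external results.
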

\vspace*{-0.7em}
\begin{proof}[Sketch]
    Under $(s,a)$-rectangularity, nature's optimal resolution and players' continuation values depend only on the current state, so histories ending in the same state can be collapsed. Further, nature's independent choices across state–action pairs define a single transition function, so nature \emph{deterministically} commits to one such function. 
    Full proof in \extref{prf:infh-memoryless-optimal-strategies}.
\end{proof}
Henceforth in the zero-sum setting, we let $\Sigma_i$ denote the set of \emph{memoryless} strategies for each player $i\in \{1,2\}$, and interpret $\Punc$ as the set of transition functions resulting from \emph{memoryless} nature strategies.

Using also the $(s,a)$-rectangularity of ICSGs, the \emph{robust Bellman equation} for $\csg$ (proved in \extref{sec:zs-ICSG-supp}) is given by:
\vspace{-0.2cm}
\begin{align}
V(s) 
&= \sup_{\sigma_1\in \distr(A_1(s))} \inf_{\sigma_2\in \distr(A_2(s))}
\left\{ r_{s}^{\sigma} +
    \sum_{a\in A(s)}{\sigma_{sa} \inf_{P_{sa}\in \Punc_{sa}}{\sum_{s'\in S}{P_{sas'} \cdot V(s')}}}
\right\}
\label{eq:infh-adv-bellman}
\end{align}
where $\sigma_{sa} = \sigma_1(s,a_1) \sigma_2(s,a_2)$ with $a=(a_1,a_2)$ and $r_{s}^{\sigma} = \sum_{a\in A(s)}{\sigma_{sa} r_{sa}}$.

This characterises the fixed-point that our game solving algorithms will later compute iteratively.
We remark that, unlike the standard Bellman equation for CSGs, \Cref{eq:infh-adv-bellman} includes nature's \emph{inner problem} $\inf_{P_{sa} \in \Punc_{sa}}$, which captures transition uncertainty and is solved using a greedy algorithm adapted from the IMDP setting~\cite{nilim2005robust}; details are provided in \extref{sec:infh-rvi-solve-inner}. 
Furthermore, whereas the Bellman equations for TSGs and RMDPs~\cite{nilim2005robust,iyengar2005robust} optimise over \emph{deterministic} player strategies by selecting pure actions, here we optimise over \emph{randomised} (memoryless) strategies, i.e., distributions over actions. This reflects the added complexity of concurrent multi-agent interaction.

\startpara{The CSG reduction}
\label{sec:infh-adv-resolution}
We now formally define the \emph{adversarial expansion} $\adv{\csg}$ for ICSG $\csg$, which is a CSG containing intermediate states representing state–action pairs of $\csg$. 
In the following, we use the operator $\adv{\cdot}$ to denote a structure associated with $\adv{\csg}$. We write $\verts(K)$ for the vertices of a polytope $K$ and use notation such as ``$\adv{s}=s \in S$'' as shorthand for ``$\exists{s\in S}.\,{\adv{s}=s}$''.

\begin{definition}[Adversarial expansion]
\label{def:infh-adv-resolution}
    We define the \emph{adversarial expansion} of ICSG $\csg$ as a 2-player CSG $\adv{\csg} = (\{1,2\}, \adv{S}, \sbar, \adv{A}, \adv{\Delta}, \adv{P})$ where:
    \begin{itemize}[nosep]
        \item $\adv{S}=S\union S'$, with $S' = \{(s,a) \mid s\in S, a\in A\}$;
        \item $\adv{A} = (\adv{A_1}\union \idleset) \times (\adv{A_2}\union \idleset)$, with $\adv{A_1}=A_1$, $\adv{A_2}=A_2\union \left( \bigcup_{s\in S, a\in A}{\verts[\Punc_{sa}]} \right)$;
        \item $\adv{\Delta}: \adv{S} \rightarrow 2^{\adv{A_1}\union \adv{A_2}}$, such that if $\adv{s}=s\in S$ then $\adv{\Delta}(\adv{s})= \Delta(s)$, else if $\adv{s}=(s,a)\in S'$ then $\adv{\Delta}(\adv{s})= \verts[\Punc_{sa}]$, else $\adv{\Delta}(\adv{s})=\emptyset$;
        \item $\adv{P}: \adv{S} \times \adv{A} \rightarrow \distr(\adv{S})$ such that if $\adv{s} = s\in S \land s' = (s,a) \in S'$ then $\adv{P}(\adv{s},\adv{a},s') = 1$, else if $\adv{s} = (s,a) \in S' \land \adv{a} = (\idle, P_{sa}) \land s' \in S$ then $\adv{P}(\adv{s},\adv{a},s') = P_{sas'}$, and $\adv{P}(\adv{s},\adv{a},s') = 0$ otherwise.
    \end{itemize}
\end{definition}
\vspace{-0.4cm}
As discussed earlier, player 2 in $\adv{\csg}$ acts as a coalition of nature and player 2 in $\csg$, such that the original player 2 acts at the $S$-states and nature moves at the $S'$-states. 
Given a choice $P_{sa} \in \Punc_{sa}$ of nature, a $\csg$-transition $s\xrightarrow{a} s'$ corresponds to the two-step $\adv{\csg}$-transition $s \xrightarrow{a} (s,a) \xrightarrow{(\idle,P_{sa})} s'$, and vice versa. 
In essence, the dynamics at $S$-states are unchanged. At an auxiliary state~${(s,a)\in S'}$, both players receive zero reward; and player 1 stays idle whilst player 2 deterministically selects a next-state distribution $P_{sa}\in \Punc_{sa}$, with $P^*_{sa}$ being an optimal such choice as characterised by \Cref{lem:infh-memoryless-optimal-strategies}. 
The notion of adversarial expansion naturally extends to strategies, paths, rewards, and objectives (see \extref{sec:infh-adv-supp}). 

We highlight that $\adv{\csg}$ is a finite-state, \emph{finite-action} CSG. As formalised in \extref{sec:infh-fin-act-resolution}, since an ICSG is polytopic, we can restrict player 2's actions to the vertices of the polytope $\Punc_{sa}$ at each $S'$-state without loss of optimality. 
The following results formalise the relationship between $\csg$ and $\adv{\csg}$.
\begin{restatable}[Utility-preserving strategy bijection]{lemma}{UtilPreservingStrategy}
\label{lem:infh-util-preserving-strategy-eq}
    For any $\csg$-profile $\sigma=(\sigma_1,\sigma_2)$, there exists a corresponding $\adv{\csg}$-profile $\adv{\sigma} = (\adv{\sigma_1}, \adv{\sigma_2})$ and vice versa, such that 
    $\inf_{P\in \Punc}{u_1(\sigma,P)} = \adv{u_1}(\adv{\sigma})$ 
    and 
    $\sup_{P\in \Punc}{u_2(\sigma,P)} = \adv{u_2}(\adv{\sigma})$, 
    where $\adv{u_i}(\adv{\sigma})$ denotes player $i$'s expected utility in $\adv{\csg}$ under $\adv{\sigma}$. 
\end{restatable}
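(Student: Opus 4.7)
The plan is to set up a measure-preserving correspondence at the level of paths and then derive the utility equalities from it. A $\csg$-path $s_0 \xrightarrow{a_0} s_1 \xrightarrow{a_1} \ldots$ under a profile $\sigma$ and nature resolution $P$ corresponds to an $\adv{\csg}$-path of double length interleaving $S$-states with auxiliary $S'$-states, $s_0 \xrightarrow{a_0} (s_0, a_0) \xrightarrow{(\idle, P_{s_0 a_0})} s_1 \xrightarrow{a_1} \ldots$. The bijection $\sigma \leftrightarrow \adv{\sigma}$ is then: $\adv{\sigma_i}$ mirrors $\sigma_i$ at $S$-states (via the canonical projection collapsing $\adv{\csg}$-histories to $\csg$-histories), while $\adv{\sigma_2}$ at each $S'$-state $(s,a)$ encodes a chosen $P_{sa} \in \Punc_{sa}$ as a distribution over vertices, namely $\adv{\sigma_2}(s,a) = \sum_k \lambda_k \delta_{P^k_{sa}}$ where $P_{sa} = \sum_k \lambda_k P^k_{sa}$ is any convex decomposition into vertices $P^k_{sa} \in \verts[\Punc_{sa}]$ (which exists because $\Punc_{sa}$ is a polytope). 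The inverse map restricts $\adv{\sigma_i}$ to $S$-states and reads off the nature resolution induced by $\adv{\sigma_2}|_{S'}$.

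For the forward direction, given $\sigma$ I would pick $P^* \in \arg\inf_{P \in \Punc} u_1(\sigma, P)$ (existence following from compactness of $\Punc$ and continuity of $u_1$ in $P$) and construct $\adv{\sigma}$ from $(\sigma, P^*)$ via the above recipe. A routine induction on path prefixes then shows that the $\adv{\csg}$-probability of a lifted path equals the $\csg$-probability of its projection: each $\adv{\csg}$-pair of steps contributes $\sigma_{sa} \cdot \sum_k \lambda_k P^k_{sas'} = \sigma_{sa} \cdot P^*_{sas'}$, matching the single $\csg$-step. Because $S'$-rewards are zero by construction of $\adv{\csg}$, accumulated rewards also agree, giving $\adv{u_1}(\adv{\sigma}) = u_1(\sigma, P^*) = \inf_{P\in\Punc} u_1(\sigma,P)$. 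The second equality $\sup_{P\in\Punc} u_2(\sigma, P) = \adv{u_2}(\adv{\sigma})$ then follows immediately from the zero-sum identities $u_2 = -u_1$ in $\csg$ and $\adv{u_2} = -\adv{u_1}$ in $\adv{\csg}$.

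The main obstacle I anticipate lies in the ``vice versa'' direction: an arbitrary $\adv{\sigma_2}$ may depend on $\adv{\csg}$-history and so encode a history-dependent nature strategy in $\csg$, which need not belong to the memoryless set $\Punc$. To navigate this I would invoke the memoryless sufficiency result (\Cref{lem:infh-memoryless-optimal-strategies}) and restrict to memoryless $\adv{\sigma_2}|_{S'}$ without loss, under which the induced nature strategy is a well-defined $P \in \Punc$; the same path bijection then applies in reverse, with $\sigma$ chosen as $\adv{\sigma}|_S$, to yield the reversed equalities. A subsidiary technicality is the existence of the vertex decomposition of $P^*_{sa}$, which is just the standard fact that any point in a polytope is a convex combination of its vertices.
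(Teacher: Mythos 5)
Your proposal is correct and follows essentially the same route as the paper: lift $(\sigma,P^*)$ with $P^*$ the minimising resolution to a profile of $\adv{\csg}$ in which nature's move at each auxiliary state $(s,a)$ realises $P^*_{sa}$, establish preservation of path probabilities, rewards and objective values, sum over paths, and obtain the player-2 equality from the zero-sum identity $u_2=-u_1$. The only (harmless) divergence is that you encode $P_{sa}$ as a mixture over the vertices of $\Punc_{sa}$ directly in the finite-action game, whereas the paper first argues over an infinite-action expansion where $P_{sa}$ itself is an action and separately reduces to vertex actions via an optimality-of-extrema argument; your aggregation $\sum_k \lambda_k P^k_{sas'}=P^*_{sas'}$ makes the correspondence many-to-one on paths rather than a bijection, but the expected utilities still coincide.
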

\begin{proof}
\label{prf:infh-util-preserving-strategy-eq}
As shown in \extref{prop:infh-path-bijection,prop:infh-prob-preserved,prop:infh-obj-preserved}, $\adv{\csg}$ preserves the set of possible paths, their probabilities and objective values. 
It follows directly that: 
\[
u_1(\sigma,P) 
= \sum_{\pi \in \gs{IPaths}}{\prob^{\sigma} (\pi) \cdot X(\pi)} 
= \sum_{\adv{\pi} \in IPaths_{\adv{\csg},s}}{\prob^{\sigma}(\adv{\pi}) \cdot \adv{X}(\adv{\pi})}
= \adv{u_1}(\adv{\sigma})
\]
where $\prob^{\sigma}$ is the path probability function under profile $\sigma$.
Then by the zero-sum structure: $\sup_{P}{u_2(\sigma,P)} = u_2(\sigma,P^*) = -u_1(\sigma,P^*)=-\adv{u_1}(\adv{\sigma}) = \adv{u_2}(\adv{\sigma})$.

\end{proof}

\vspace*{-0.7em}

\begin{restatable}[Determinacy and Value Preservation]{corollary}{DetValPreservation}
\label{cor:infh-adv-value-preservation}
    $\csg$ is determined iff $\adv{\csg}$ is determined. 
    Further, if both games are determined, then the robust value of $\csg$ is equal to the value of $\adv{\csg}$, i.e., $\valg(s,X) = \vag(s,\adv{X})$.
    \refproof{prf:infh-adv-value-preservation}
    \reforig{cor:infh-adv-value-preservation}
\end{restatable}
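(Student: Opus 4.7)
The plan is to build on \Cref{lem:infh-util-preserving-strategy-eq} and exploit that $\adv{\csg}$, by construction, is a finite-state, finite-action 2-player zero-sum CSG with a reachability/reward objective and hence determined by classical results (e.g.~Shapley's theorem and its extensions to concurrent stochastic games). I would transport the value of $\adv{\csg}$ back to $\csg$ via the bijection and then sandwich the two quantities in \Cref{def:adv-robust-determinacy-optimality-value} between the lower and upper values of $\adv{\csg}$.

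First, I would factor the bijection of \Cref{lem:infh-util-preserving-strategy-eq} per player. By \Cref{def:infh-adv-resolution}, $\adv{A_1}=A_1$ and player 1 is forced idle at all $S'$-states, so $\adv{\sigma_1}$ is literally $\sigma_1$; on the coalition side, any $\adv{\sigma_2}$ canonically splits into $(\sigma_2,\tau)$ with $\sigma_2$ acting at $S$-states and $\tau$ selecting vertex distributions at $S'$-states, and by \Cref{lem:infh-memoryless-optimal-strategies} together with $(s,a)$-rectangularity a deterministic memoryless $\tau$ is without loss, giving a bijection $\tau\leftrightarrow P\in\Punc$. Combining this with the utility preservation of \Cref{lem:infh-util-preserving-strategy-eq} yields the pointwise identity $u_1(\sigma_1,\sigma_2,P) = \adv{u_1}(\adv{\sigma_1},\adv{\sigma_2})$ for every matched quadruple.

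Second, I would rewrite the lower and upper values of $\adv{\csg}$ in $\csg$-terms, noting that both $\inf_{\sigma_2}$ and $\inf_P$ are minimisations and can be freely reordered:
\[
\sup_{\adv{\sigma_1}}\inf_{\adv{\sigma_2}}\adv{u_1} \;=\; \sup_{\sigma_1}\inf_{\sigma_2}\inf_{P}u_1, \qquad \inf_{\adv{\sigma_2}}\sup_{\adv{\sigma_1}}\adv{u_1} \;=\; \inf_{\sigma_2}\inf_{P}\sup_{\sigma_1}u_1.
\]
The first of these is precisely the left-hand quantity in \Cref{def:adv-robust-determinacy-optimality-value}. For the right-hand quantity $\inf_{\sigma_2}\sup_{\sigma_1}\inf_{P}u_1$, the standard sup--inf inequality gives the chain $\sup_{\sigma_1}\inf_{\sigma_2}\inf_{P}u_1\le\inf_{\sigma_2}\sup_{\sigma_1}\inf_{P}u_1\le\inf_{\sigma_2}\inf_{P}\sup_{\sigma_1}u_1$. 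Since $\adv{\csg}$ is determined, its lower and upper values agree, so the two outer terms coincide and the middle is squeezed to the same common value. This simultaneously yields robust determinacy of $\csg$ per \Cref{def:adv-robust-determinacy-optimality-value} and the identity $\valg(s,X)=\vag(s,\adv{X})$; the $(\Leftarrow)$ direction of the biconditional is vacuous as $\adv{\csg}$ is always determined.

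The main obstacle is justifying the per-player factorisation of the bijection rigorously --- specifically, that the quantifier $\sup_{\adv{\sigma_1}}$ in $\adv{\csg}$ corresponds to $\sup_{\sigma_1}$ in $\csg$ \emph{independently} of the opponent's choice, and that $\inf_{\adv{\sigma_2}}$ decomposes into the joint minimisation $\inf_{\sigma_2}\inf_{P}$. Both hinge on player 1 having no move at $S'$-states and on $(s,a)$-rectangularity, which lets nature's vertex distributions at different state--action pairs be chosen independently of one another and of $\sigma_2$. Once this factorisation is in place, the remainder is routine manipulation of infima/suprema together with the min--max inequality.
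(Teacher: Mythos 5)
Your proposal is correct, and it rests on the same underlying reduction as the paper (the utility-preserving correspondence of \Cref{lem:infh-util-preserving-strategy-eq} between $\csg$-profiles with a nature choice and coalition strategies in $\adv{\csg}$), but it closes the argument differently. The paper proves an intermediate result matching \emph{both} order-of-quantifier characterisations of $\csg$'s robust value with the lower and upper values of $\adv{\csg}$ separately, and then derives the biconditional as a genuine two-way transfer of determinacy, deferring the appeal to determinacy of the finite CSG $\adv{\csg}$ to a later paragraph. You instead invoke that classical determinacy up front, identify the lower and upper values of $\adv{\csg}$ with $\sup_{\sigma_1}\inf_{\sigma_2}\inf_{P}u_1$ and $\inf_{\sigma_2}\inf_{P}\sup_{\sigma_1}u_1$ respectively, and squeeze the quantity $\inf_{\sigma_2}\sup_{\sigma_1}\inf_{P}u_1$ from \Cref{def:adv-robust-determinacy-optimality-value} between them; the biconditional then becomes vacuous because both games are always determined. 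What your route buys is a cleaner treatment of the upper value: the paper's intermediate lemma pairs $\inf_{\adv{\sigma_2}}\sup_{\adv{\sigma_1}}\adv{u_1}$ directly with $\inf_{\sigma_2}\sup_{\sigma_1}\inf_{P}u_1$, whereas the coalition-commits-first semantics most immediately gives $\inf_{\sigma_2}\inf_{P}\sup_{\sigma_1}u_1$ (the two differ a priori by a minimax inequality); your sandwich argument supplies exactly the missing step and is, if anything, the more careful formulation. What it costs is that you prove a strictly stronger unconditional statement rather than the conditional transfer the corollary literally asserts, which is harmless here but means your proof cannot be reused in a setting where determinacy of $\adv{\csg}$ is not already known. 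Two minor points: the determinacy of finite concurrent reachability/reward games over randomised strategies is the Blackwell/Martin determinacy result the paper cites, not Shapley's discounted theorem; and your per-player factorisation of $\adv{\sigma_2}$ into $(\sigma_2,\tau)$ with $\tau\leftrightarrow P\in\Punc$ does require the restriction to memoryless strategies (and, for the identification with $\Punc$, deterministic choices at the $S'$-states), which you correctly flag via \Cref{lem:infh-memoryless-optimal-strategies} and $(s,a)$-rectangularity.
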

\vspace*{-0.7em}

\begin{restatable}[RNE$_\csg$ $\Leftrightarrow $ NE$_{\adv{\csg}}$]{theorem}{RNEnNEBijection}
\label{thm:infh-adv-rne-to-ne}
    In a determined zero-sum ICSG $\csg$, for any $\csg$-profile $\sigma\in \Sigma$, 
    $\sigma$ is an RNE in $\csg$ with value $\valg(s,X)$ iff $\adv{\sigma}$ is an NE in $\adv{\csg}$ with value $\vag(s,\adv{X})=\valg(s,X)$.
    \reforig{thm:infh-adv-rne-to-ne}
\end{restatable}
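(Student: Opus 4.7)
The plan is to exploit the zero-sum structure to reduce the stated equivalence to a correspondence between optimal strategy pairs, and then transport this correspondence across the two games using the utility-preserving bijection of \Cref{lem:infh-util-preserving-strategy-eq} and the value preservation of \Cref{cor:infh-adv-value-preservation}. In any determined zero-sum game, a profile is an NE iff each component is an optimal strategy, and---as remarked after \Cref{def:subgame-perfect-rne}---an RNE in the zero-sum $\csg$ coincides with a pair of robust optimal strategies (in the sense of \Cref{def:adv-robust-determinacy-optimality-value}). The theorem then reduces to showing, for each $i \in \{1,2\}$, that $\sigma_i$ is robustly optimal in $\csg$ iff $\adv{\sigma_i}$ is optimal in $\adv{\csg}$, with the value equality $\vag(s,\adv{X}) = \valg(s,X)$ already supplied by \Cref{cor:infh-adv-value-preservation}.

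The key structural observation is that player~2 in $\adv{\csg}$ plays on the disjoint state types $S$ and $S'$, so every $\adv{\sigma_2'} \in \adv{\Sigma_2}$ decomposes as a pair $(\sigma_2',P')$ consisting of a $\csg$-player-2 component and a memoryless nature resolution $P' \in \Punc$ (sufficient by \Cref{lem:infh-memoryless-optimal-strategies}); a trivial analogue applies to $\adv{\sigma_1'}$ since player~1 is idle on $S'$-states. Under this decomposition, the bijection gives $\adv{u_1}(\adv{\sigma_1}, \adv{\sigma_2'}) = u_1((\sigma_1,\sigma_2'),P')$, with a symmetric identity for deviations of player~1.

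For the forward direction, I would take an RNE $\sigma$ in $\csg$ and build $\adv{\sigma}$ with nature's component of $\adv{\sigma_2}$ set to a worst-case resolution for $(\sigma_1,\sigma_2)$, so that $\adv{u_1}(\adv{\sigma}) = \inf_{P\in\Punc} u_1(\sigma,P) = \valg(s,X)$. Robust optimality of $\sigma_2$ then bounds $\adv{u_1}(\adv{\sigma_1'}, \adv{\sigma_2}) \leq \valg(s,X)$ for every player-1 deviation, and robust optimality of $\sigma_1$ bounds $\adv{u_1}(\adv{\sigma_1}, \adv{\sigma_2'}) \geq \valg(s,X)$ for every player-2 deviation, establishing the NE conditions for $\adv{\sigma}$ at value $\vag(s,\adv{X})$. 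The backward direction reverses this chain: the NE inequalities of $\adv{\sigma}$ range over all $\adv{\sigma_2'} = (\sigma_2',P')$ and thus yield $u_1((\sigma_1,\sigma_2'),P') \geq \valg(s,X)$ for every $(\sigma_2',P')$, recovering robust optimality of $\sigma_1$, and symmetrically for $\sigma_2$.

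The main obstacle will be handling the subgame-perfect aspect together with the coalition structure of player~2 in $\adv{\csg}$. Both notions require the equilibrium condition at \emph{every} state, so the correspondence must be checked pointwise: at each auxiliary state in $S'$ the NE condition reduces to nature deterministically picking a vertex-minimiser of the continuation value (guaranteed by \Cref{lem:infh-memoryless-optimal-strategies} together with the vertex restriction built into \Cref{def:infh-adv-resolution}), while at each state in $S$ it mirrors the RNE condition in $\csg$ through the bijection. A final step is verifying that nature's polytope-vertex action set spans all $P \in \Punc$ under convex combination, so that the ``for all $P$'' quantifier in \Cref{def:adv-robust-determinacy-optimality-value} is fully covered by the ``for all $\adv{\sigma_2'}$'' quantifier in the NE condition on $\adv{\csg}$.
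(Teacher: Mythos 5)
Your forward direction is essentially sound and close to the paper's own argument: the paper likewise verifies the NE inequalities in $\adv{\csg}$ directly from the RNE condition via the utility-preserving correspondence of \Cref{lem:infh-util-preserving-strategy-eq} (using $\inf_P f - \inf_P g \geq \inf_P(f-g)$ rather than your detour through robust optimality, but with the same content). The genuine gap is in your backward direction, at ``and symmetrically for $\sigma_2$''. The situation is \emph{not} symmetric, precisely because nature is merged into player~2's coalition. The NE condition on $\adv{\sigma}$ for coalition deviations quantifies over all pairs $(\sigma_2',P')$, which is why you correctly obtain $u_1((\sigma_1,\sigma_2'),P')\geq\valg(s,X)$ for \emph{all} $P'$, i.e.\ robust optimality of $\sigma_1$. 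But the NE condition for player~1's unilateral deviations leaves nature's component of $\adv{\sigma_2}$ untouched: it only yields $u_1((\sigma_1',\sigma_2),P^*)\leq\valg(s,X)$ at the single resolution $P^*$ baked into $\adv{\sigma_2}$, whereas the player-2 half of the RNE condition (equivalently, robust optimality of $\sigma_2$ per \Cref{def:adv-robust-determinacy-optimality-value}) requires this bound for \emph{every} $P\in\Punc$. The universal quantifier over $P$ sits on the wrong side of the coalition and cannot be recovered ``by symmetry''.

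The paper closes exactly this hole using determinacy of $\csg$, which is a hypothesis of the theorem: for the player-2 case it argues by contradiction against the \emph{dual} characterisation $\valg(s,X)=\inf_{\sigma_2}\sup_{\sigma_1}\inf_{P}\,\gs{\ev}^{(\sigma_1,\sigma_2),P}[X]$, together with value preservation (\Cref{cor:infh-adv-value-preservation}); your sketch never invokes the inf--sup form. A secondary concern: you lean on the remark following \Cref{def:subgame-perfect-rne} that zero-sum RNE coincide with robust optimal strategy pairs. The paper states this without proof, and the direction ``RNE $\Rightarrow$ robust optimal pair'' hides the same all-$P$ issue for $\sigma_2$ (the RNE condition bounds $u_1((\sigma_1',\sigma_2),P)$ by $u_1(\sigma,P)$, which may exceed $\valg(s,X)$ for non-worst-case $P$), so appealing to it does not avoid the missing step --- it only relocates it. To repair the proof you should add, for the player-2 side, an argument that a violation of robust optimality of $\sigma_2$ at some $P\neq P^*$ contradicts $\valg(s,X)=\inf_{\sigma_2}\sup_{\sigma_1}\inf_{P}\,\gs{\ev}^{(\sigma_1,\sigma_2),P}[X]$.
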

\vspace*{-0.7em}
\begin{proof}[Sketch]
    Both directions follow from \Cref{def:subgame-perfect-rne} of RNE. The forward case additionally uses utility preservation (\Cref{lem:infh-util-preserving-strategy-eq}); the reverse relies on \Cref{def:adv-robust-determinacy-optimality-value} of the game value and value preservation (\Cref{cor:infh-adv-value-preservation}). 
    Full proof in \extref{prf:infh-adv-rne-to-ne}.
\end{proof}

\vspace{-0.3cm}
\startpara{Solving zero-sum ICSGs}
Finally, combining the above results, since $\adv{\csg}$ is finite-state and finitely-branching, it is determined for all the objectives we consider \cite{martin1998determinacy}. By \Cref{cor:infh-adv-value-preservation}, the original game $\csg$ is therefore also robustly determined with the same value. 
Moreover, since $\csg$ is zero-sum, any RNE profile and its value coincides with an optimal profile and the game value.

Hence, by \Cref{thm:infh-adv-rne-to-ne}, we can perform robust verification of an ICSG $\csg$ by solving the 2-player CSG $\adv{\csg}$, e.g, with the value iteration approach from~\cite{kwiatkowska2021automatic}.
In fact, we do not need to explicitly construct $\adv{\csg}$, nor its auxiliary states $S'$ corresponding to the possible $(s,a)$ pairs. Instead, for each state $s$, we first solve an inner optimisation problem over uncertainty set $\Punc_{sa}$ for each joint action $a$, and then solve a linear programming (LP) problem of size $|A|$ using the resulting values. We discuss this further in \Cref{sec:mc-ICSG} and give full details in \extref{sec:zs-mc-supp}.

\startpara{Finite-horizon properties}
\label{sec:zero-sum-finh}
When $X$ is a \emph{bounded} probabilistic reachability or cumulative reward objective, the previous results still hold under two changes: 
\begin{enumerate*}[label=\arabic*)]
    \item Robust optimal strategies are now \emph{time-varying}, i.e., with the extended signature $\sigma: S\times H \rightharpoonup A$ and $P^*: S\times A \times H \rightharpoonup \distr(S)$, where $H$ represents \emph{finite-memory} used to track the time-step. 
    Accordingly, we extend $\adv{\csg}$ with time-augmented states.
    \item Finite-horizon objectives are evaluated over $k < \infty$ steps, thus \emph{exact} game values can be computed via \emph{robust backward induction} (RBI) \cite{nilim2005robust,iyengar2005robust}, noting that $\csg$ is always determined due to the finite game tree.
\end{enumerate*}
We provide the detailed construction of $\adv{\csg}$ in this setting in \extref{sec:finh-supp}.

\vspace{-0.15cm}
\section{Nonzero-sum ICSGs}
\label{sec:nz-ICSG}
Next, we consider \emph{nonzero-sum} 2-player ICSGs, where each player $i$ maximises a distinct objective $X_i$, assuming for now that both objectives are infinite-horizon.%
\footnote{Following the usual approach for nonzero-sum CSGs~\cite{kwiatkowska2021automatic}, we focus on ICSGs that can be seen as a variant of \emph{stopping games} \cite{chen2013complexity}, where each player's target set is reached with probability 1 from all states under all profiles.}
As in the zero-sum~case, we continue to focus on player-first semantics and memoryless strategies, as both \Cref{thm:infh-nature-or-player-first-val-equiv} and \Cref{lem:infh-memoryless-optimal-strategies} extend to the nonzero-sum setting (see proofs in \extref{sec:nz-nature-first-supp}).

We again reduce a 2-player ICSG $\csg$ to its \emph{adversarial expansion} $\adv{\csg}$, which is a 3-player CSG, but in which the \emph{nature} player now acts adversarially against both other players, aiming to minimise their social welfare. The reduction is more complex than the zero-sum case and requires an additional \emph{filtering} step per iteration to identify robust equilibria.
This reduction again allows us to build on standard CSG solution methods \cite{kwiatkowska2021automatic}.

Our goal for nonzero-sum ICSGs is to find subgame-perfect $\varepsilon$-RNE (\Cref{def:subgame-perfect-rne}), and more specifically, $\varepsilon$-RSWNE (\Cref{def:rsw-rsc}),
which consider the \emph{sum} of the utilities for the two players.
We add a subscript $+$ to the relevant game notation (e.g., $r,u,X,V$) to indicate this.
%
The \emph{robust Bellman equation} is:
\vspace{-0.15cm}
\begin{equation}
\Vsum(s) = \sup_{\sigma \in \epsRNE} \inf_{P \in \Punc} \gs{\ev}^{\sigma, P}[\Xsum]
= \sup_{\sigma \in \epsRNE} \left[ 
       \rsum(s,\sigma) + \sum_{a\in A} \inf_{P_{sa}\in \Punc_{sa}} f_{sa}^{\sigma,P}
     \right]
\label{eq:nz-infh-bellman}
\end{equation}
where $f_{sa}^{\sigma,P}:=\sigma_{sa} \sum_{s'\in S} P_{sas'} \Vsum(s')$
and $\epsRNE$ denotes the set of \textit{one-shot} $\varepsilon$-RNE.
The second equality follows from a similar proof to the zero-sum case (see \extref{sec:zs-ICSG-supp}).
In this formulation, nature's inner problem is now to minimise the social welfare $\usum :=u_1 + u_2$, while each player $i\in \{1,2\}$ maximises their individual expected payoff $u_i$. Consequently, we maximise over the set of $\varepsilon$-RNE, capturing equilibrium behaviour under worst-case uncertainty.

The adversarial expansion $\adv{\rcsg}$ of ICSG $\rcsg$ for the nonzero-sum case follows a similar construction to the zero-sum setting (\Cref{def:infh-adv-resolution}), but now models nature as an explicit third player distinct from player~2.
Henceforth, if $a=(a_1,a_2)$, let $*a$ denote the flattened tuple $a_1,a_2$.

\begin{definition}[Adversarial expansion]
\label{def:nz-infh-adv-resolution} 
    We define the \emph{adversarial expansion} of $\csg$ as a \emph{3-player} CSG $\adv{\csg} = (\adv{N}, \adv{S}, \sbar, \adv{A}, \adv{\Delta}, \adv{P})$ where:
    \begin{itemize}[nosep]
        \item $\adv{N} = \{1,2,3\}$, with player 3 representing \emph{nature};
        \item $\adv{S}=S\union S'$, with $S' = \{(s,a) \mid s\in S, a\in A\}$;
        \item $\adv{A} = \times_{l=1}^3{(\adv{A_l} \union \idleset)}$ where $\adv{A_1} = A_1$, $\adv{A_2}=A_2$ and $\adv{A_3}=\union_{s\in S, a\in A}{\verts[\Punc_{sa}]}$;

        \item $\adv{\Delta}: \adv{S} \rightarrow 2^{\union_{i=1}^3{\adv{A_i}}}$, such that if $\adv{s}=s\in S$ then 
        $\adv{\Delta}(\adv{s})=\Delta(s)$, 
        else if $\adv{s}=(s,a)\in S'$ then $\adv{\Delta}(\adv{s})=\verts[\Punc_{sa}]$ and $\emptyset$ otherwise.
        \item 
        $\adv{P}: \adv{S} \times \adv{A} \rightarrow \distr(\adv{S})$, such that if $\adv{s} = s \in S \land \adv{a} = (*a, \idle) \land s' = (s, a) \in S'$ then $\adv{P}(\adv{s}, \adv{a}, s') = 1$, else if $\adv{s}=(s,a) \in S' \land \adv{a} = (\idle, \idle, P_{sa}) \land s' \in S$ then $\adv{P}(\adv{s}, \adv{a}, s') = P_{sas'}$, and $\adv{P}(\adv{s}, \adv{a}, s') =0$ otherwise.
    \end{itemize}
\end{definition}
\vspace*{-1.2em}
Under this definition of $\adv{\csg}$, a $\csg$-transition $s \xrightarrow{(a_1,a_2)} s'$ corresponds to a two-step $\adv{\rcsg}$-transition $s \xrightarrow{(a_1,a_2,\idle)} (s,(a_1,a_2)) \xrightarrow{(\idle,\idle,P_{s(a_1,a_2)})} s'$. 
Adversarial expansions~of paths, strategies, rewards and objectives follow analogously to the zero-sum case, and are formalised in \extref{sec:nz-infh-adv-resolution-supp}.
The following results relate $\rcsg$ and $\adv{\rcsg}$. Their proofs mirror the zero-sum case, with adaptations to the nonzero-sum definition of $\adv{\rcsg}$.
\begin{restatable}[Utility-preserving strategy bijection]{lemma}{UtilPreservingStrategiesNZ}
\label{lem:nz-infh-util-preserving-strategy-eq}
    For any $\csg$-profile $\sigma$ under nature's choice of $P\in \Punc$, there exists a corresponding $\adv{\csg}$-profile $\advP{\sigma}$ and vice versa such that $u_i(\sigma, P) = \adv{u_i}(\advP{\sigma})$ for $i\in \{1,2\}$. 
    Further, for any $\sigma\in \Sigma$, there exists a corresponding $\adv{\sigma}\in \adv{\Sigma}$ and vice versa such that 
    $\inf_{P\in \Punc}{\usum(\sigma,P)} = \adv{\usum}(\adv{\sigma}) = \adv{u_1}(\adv{\sigma}) + \adv{u_2}(\adv{\sigma})$.
\end{restatable}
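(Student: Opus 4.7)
The plan is to mirror the structure of the zero-sum analogue (\Cref{lem:infh-util-preserving-strategy-eq}) by lifting the path-level bijection and probability/objective preservation to the 3-player adversarial expansion of \Cref{def:nz-infh-adv-resolution}, and then splitting the argument into two cases corresponding to the two assertions: a fixed $P$ versus the adversarial infimum over $P$.

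For the first claim, I would fix a $\csg$-profile $\sigma=(\sigma_1,\sigma_2)$ and a resolution $P\in\Punc$, and define $\advP{\sigma}=(\adv{\sigma_1},\adv{\sigma_2},\adv{\sigma_3})$ as follows. At each $S$-state, $\adv{\sigma_i}$ mimics $\sigma_i$ for $i\in\{1,2\}$ and nature plays $\idle$; at each $S'$-state $(s,a)$, players~$1$ and~$2$ play $\idle$ and nature selects a distribution over $\verts[\Punc_{sa}]$ whose vertex-expectation equals $P_{sa}$, which is admissible since $\Punc_{sa}$ is convex by $(s,a)$-rectangularity. The inverse map projects $\advP{\sigma}$ onto its first two components at $S$-states and reads off the induced $P$ from nature's choices at $S'$-states. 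Lifting the path bijection, probability preservation and objective preservation propositions from the zero-sum case (extending them to the richer action space of 3-player $\adv{\csg}$) implies that the distributions over paths in $\rcsg_P$ under $\sigma$ and in $\adv{\csg}$ under $\advP{\sigma}$ coincide with matching per-path objective values; taking expectations then gives $u_i(\sigma,P)=\adv{u_i}(\advP{\sigma})$ for $i\in\{1,2\}$.

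For the second claim, I would invoke the nonzero-sum extension of \Cref{lem:infh-memoryless-optimal-strategies} (proved in the referenced appendix) to obtain a deterministic memoryless minimiser $P^\ast$ of $\usum(\sigma,\cdot)$; this uses $(s,a)$-rectangularity to decouple nature's minimisation across state--action pairs. Letting $\adv{\sigma}$ be the $\adv{\csg}$-profile produced by the first claim applied to $(\sigma,P^\ast)$, we obtain $\adv{u_i}(\adv{\sigma})=u_i(\sigma,P^\ast)$ for $i\in\{1,2\}$, and linearity of expectation then yields $\adv{\usum}(\adv{\sigma})=\adv{u_1}(\adv{\sigma})+\adv{u_2}(\adv{\sigma})=\usum(\sigma,P^\ast)=\inf_{P\in\Punc}\usum(\sigma,P)$. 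For the converse direction, I would restrict attention to those $\adv{\sigma}\in\adv{\Sigma}$ whose nature component realises a minimiser of $\usum$ against the projected $(\sigma_1,\sigma_2)$; the first claim then runs the same chain of equalities backwards.

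The main obstacle will be ensuring that the $\adv{\sigma}$ produced in the second claim faithfully realises the \emph{adversarial} infimum over $P\in\Punc$ rather than some arbitrary action pattern of nature in $\adv{\csg}$. This is precisely where the nonzero-sum analogue of \Cref{lem:infh-memoryless-optimal-strategies} is indispensable: without a guaranteed deterministic, memoryless minimiser, a single memoryless nature strategy in $\adv{\csg}$ need not realise the infimum over $\Punc$, and the map $\sigma \mapsto \adv{\sigma}$ would be ill-defined. A secondary technicality, handled by convex combinations over $\verts[\Punc_{sa}]$, is dealing with randomised nature responses in the converse direction.
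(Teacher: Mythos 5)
Your proposal is correct and takes essentially the same route as the paper: the first claim is obtained by lifting the path-bijection, probability-preservation and objective-preservation propositions from the zero-sum expansion to the 3-player expansion of \Cref{def:nz-infh-adv-resolution}, and the second by instantiating the first claim at a deterministic memoryless minimiser $P^*$ of $\usum(\sigma,\cdot)$ (whose existence is exactly the nonzero-sum extension of \Cref{lem:infh-memoryless-optimal-strategies}), matching the paper's convention $\adv{\sigma}:=\advPstar{\sigma}$ in \Cref{def:nz-infh-adv-components} and your restriction of the converse to profiles whose nature component realises that minimiser. The only (sound) divergence is that you realise a non-vertex $P_{sa}$ by letting nature mix over $\verts[\Punc_{sa}]$, whereas the paper first works with an infinite-action expansion in which nature deterministically selects any $P_{sa}\in\Punc_{sa}$ and only later restricts to vertices via its optimality-of-extrema lemma.
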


\vspace*{-0.6em}
\begin{restatable}[$\varepsilon$-RNE$_\csg$ $\Rightarrow$ $\varepsilon$-NE$_{\adv{\csg}}$]{lemma}{EpsRNEtoNEInjection}
\label{lem:nz-infh-rne-to-ne}
    For any $\csg$-profile $\sigma\in \Sigma$, 
    if $\sigma$ is an $\varepsilon$-RNE in $\csg$ then $\adv{\sigma}\in \adv{\Sigma}$ is an $\varepsilon$-NE in $\adv{\csg}$.
    \refproof{prf:nz-infh-rne-to-ne}
    \reforig{lem:nz-infh-rne-to-ne}
\end{restatable}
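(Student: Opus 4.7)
The plan is to establish the $\varepsilon$-NE property of $\adv{\sigma}$ in $\adv{\csg}$ by separately handling unilateral deviations by each of the three players (players~1, 2, and nature), using the utility-preserving strategy bijection in \Cref{lem:nz-infh-util-preserving-strategy-eq} as the main tool. I would fix $\adv{\sigma}$ to be the $\adv{\csg}$-profile that mirrors $\sigma$ at $S$-states together with a nature strategy that commits to a welfare-worst-case resolution $P^*_\sigma \in \arg\inf_{P\in \Punc} \usum(\sigma, P)$ at each $S'$-state, represented as a (mixture over) vertices of $\Punc_{sa}$.

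For a deviation $\adv{\sigma_i'}$ by player $i \in \{1,2\}$, which only acts at $S$-states, the bijection pulls $\adv{\sigma_i'}$ back to a $\csg$-strategy $\sigma_i'$. Applying the first part of \Cref{lem:nz-infh-util-preserving-strategy-eq} to $\adv{\sigma}$ and to $\adv{\sigma_{-i}}[\adv{\sigma_i'}]$ (both paired with the fixed nature response realising $P^*_\sigma$) yields $\adv{u_i}(\adv{\sigma}) - \adv{u_i}(\adv{\sigma_{-i}}[\adv{\sigma_i'}]) = u_i(\sigma, P^*_\sigma) - u_i(\sigma_{-i}[\sigma_i'], P^*_\sigma)$. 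The $\varepsilon$-RNE condition on $\sigma$ gives $\inf_{P\in \Punc}[u_i(\sigma, P) - u_i(\sigma_{-i}[\sigma_i'], P)] \geq -\varepsilon$, and since the infimum lower-bounds this quantity at the specific point $P^*_\sigma \in \Punc$, the required $\varepsilon$-NE inequality for player $i$ follows. Subgame-perfection carries over by invoking this argument at every state $s\in S$.

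For nature's deviation $\adv{\sigma_3'}$, which only acts at $S'$-states, the second part of \Cref{lem:nz-infh-util-preserving-strategy-eq} pulls it back to some $P' \in \Punc$ and gives $\adv{u_1}(\adv{\sigma_{-3}}[\adv{\sigma_3'}]) + \adv{u_2}(\adv{\sigma_{-3}}[\adv{\sigma_3'}]) = \usum(\sigma, P')$. Since nature's utility is $-\adv{\usum}$ and $\usum(\sigma, P') \geq \usum(\sigma, P^*_\sigma) = \adv{u_1}(\adv{\sigma}) + \adv{u_2}(\adv{\sigma})$ by choice of $P^*_\sigma$, nature cannot strictly improve its (welfare-minimising) utility by deviating, so the $\varepsilon$-NE condition holds with slack $0$, and \emph{a fortiori} for any $\varepsilon \geq 0$. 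Subgame-perfection at each $S'$-state then follows from the $(s,a)$-rectangularity of $\Punc$, which decouples the worst-case resolution across $(s,a)$ pairs and ensures $P^*_\sigma$ is locally optimal for nature.

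The main obstacle is routing the argument through the two parts of \Cref{lem:nz-infh-util-preserving-strategy-eq} correctly: the first (parameterised by a fixed $P$) is needed to compare player deviations at a common nature response $P^*_\sigma$, while the second (where the infimum over $P$ is absorbed into the bijection) is needed so that the pulled-back $P^*_\sigma$ truly realises the worst case over the whole game. A secondary subtlety is that \emph{a priori} nature may deviate to a history-dependent strategy; since the original players play memoryless strategies, however, nature faces a Markov environment and memoryless best responses suffice, so reducing every nature deviation to a resolution $P' \in \Punc$ via the bijection is without loss of generality.
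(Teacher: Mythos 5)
Your proposal is correct and follows essentially the same route as the paper's proof: player deviations for $i\in\{1,2\}$ are handled by pulling them back through the utility-preserving bijection at the fixed worst-case resolution $P^*_\sigma$ and then lower-bounding the utility difference by the infimum over $P\in\Punc$ from the $\varepsilon$-RNE condition, while nature's deviation is dispatched by noting that $\adv{\sigma_3}$ already realises the minimiser of $\usum(\sigma,\cdot)$, so any alternative $P'$ can only raise the social welfare and hence cannot improve nature's utility $-\adv{\usum}$. The additional remarks on memoryless nature deviations and per-state subgame-perfection are consistent with the paper's standing conventions and do not change the argument.
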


Unlike the zero-sum setting, for \Cref{lem:nz-infh-rne-to-ne} the converse does \emph{not} necessarily hold: if $\adv{\sigma}$ is an $\varepsilon$-NE in $\adv{\csg}$, $\sigma$ need not be an $\varepsilon$-RNE in $\csg$. This is because $\adv{\sigma_3}$ selects a transition function $P^*$ that minimises the \emph{total} utility of player 1 and 2, rather than each player's utility individually as required by the $\varepsilon$-RNE condition (see \Cref{def:subgame-perfect-rne}). Therefore, any ICSG where nature's minimisation of the sum induces asymmetric incentives suffices as an example (see \extref{eg:nz-infh-epsNE-neq-epsRNE}).

Consequently, before identifying the RSWNE, for each state $s\in S$ we filter the set of $\varepsilon$-NE in $\adv{\csg}$ to retain only those that correspond to an $\varepsilon$-RNE in $\csg$, i.e., $\advepsRNE := \{\adv{\sigma} \in \adv{\Sigma} \mid \sigma \in \epsRNE \}$. Note that filtering is applied separately at each state since we construct subgame-perfect equilibria.


\startpara{Filtering $\advepsNE$ for $\epsRNE$}
\label{sec:nz-infh-filter-ne-for-rne}
Our method of filtering is based on a notion of \emph{deviations} made by players.
In this section, we fix a state $s\in S$ and candidate $\varepsilon$-RNE profile 
$\sigma \in \Sigma$. We designate player $i$ as the \emph{deviator}, whose strategy deviations $\sigma_i' \in \Sigma_i$ from $\sigma$ will be evaluated.
Further, we define the \emph{deviation gain} of player $i$ under its deviation $\sigma_i'$ and nature's choice of $P\in \Punc$ as:
\begin{equation}
\label{eq:rve-ui-delta}
    u_i^\Delta(\sigma_i', P) := u_i(\sigma_{-i}[\sigma_i'], P) - u_i(\sigma, P).
\end{equation}

\vspace{-0.1cm}
\begin{restatable}[$\varepsilon$-RNE condition over pure deviations]{lemma}{PureDevFiltersRNE}
\label{lem:nz-infh-pure-dev-to-filter-rne}
Let $\pureSigmai$ denote the set of (memoryless) deterministic strategies for player $i \in N=\{1,2\}$. 
A profile $\sigma \in \Sigma$ is an $\varepsilon$-RNE
iff the following condition holds:
\begin{equation*}
\Vbar_{i,\sigma}  := \sup_{P\in \Punc}\sup_{\sigma_i' \in \pureSigmai}{u_i^\Delta(\sigma_i',P)} \ \leq \varepsilon
\quad 
\forall{i\in N}.
\quad \reforig{lem:nz-infh-pure-dev-to-filter-rne}
\end{equation*}
\end{restatable}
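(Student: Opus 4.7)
The plan is to unfold the definition of subgame-perfect $\varepsilon$-RNE, rewrite it as a supremum inequality, and then use a best-response MDP argument to replace the supremum over mixed deviations by one over pure deviations.

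First, at each state $s\in S$ and for each player $i$, \Cref{def:subgame-perfect-rne} requires $\varepsilon + \inf_{P\in \Punc}[u_i(\sigma_{-i}[\sigma_i], P) - u_i(\sigma_{-i}[\sigma_i'], P)] \geq 0$ for all $\sigma_i' \in \Sigma_i$. Using $\inf_P(-f(P)) = -\sup_P f(P)$, this is equivalent to $\sup_{P\in \Punc} u_i^\Delta(\sigma_i', P) \leq \varepsilon$ for every $\sigma_i' \in \Sigma_i$, and hence to
\[
\sup_{\sigma_i' \in \Sigma_i}\sup_{P\in \Punc} u_i^\Delta(\sigma_i', P) \leq \varepsilon.
\]
Swapping the two suprema is immediate, so the condition becomes $\sup_{P\in \Punc}\sup_{\sigma_i' \in \Sigma_i} u_i^\Delta(\sigma_i', P) \leq \varepsilon$. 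This is almost $\bar V_{i,\sigma}$, except the inner supremum still ranges over all of $\Sigma_i$ rather than $\pureSigmai$.

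Next, I would show that for each fixed $P\in \Punc$, $\sup_{\sigma_i' \in \Sigma_i} u_i^\Delta(\sigma_i', P) = \sup_{\sigma_i' \in \pureSigmai} u_i^\Delta(\sigma_i', P)$. Since the subtrahend $u_i(\sigma, P)$ in \Cref{eq:rve-ui-delta} does not depend on $\sigma_i'$, this reduces to showing $\sup_{\sigma_i' \in \Sigma_i} u_i(\sigma_{-i}[\sigma_i'], P)$ is attained by some deterministic strategy. Fixing the opponent's memoryless strategy $\sigma_{-i}$ (memoryless-sufficiency transfers from the zero-sum case, as argued in the nonzero-sum setup) together with nature's choice $P$ collapses the ICSG into a finite MDP in which player $i$ is the sole decision maker. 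For the infinite-horizon objectives considered (probabilistic reachability and reachability reward, under the stopping-game assumption), classical MDP theory guarantees a pure memoryless strategy that is optimal from every state, so the equality holds uniformly in $s$. Combining with the rewriting above yields the claimed iff in both directions.

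The main obstacle will be justifying that a \emph{single} pure memoryless deviation is simultaneously optimal at every state, as demanded by subgame perfection. This relies on the classical fact that in a finite MDP with the objectives under consideration, a pure memoryless optimum exists uniformly across starting states; the stopping-game assumption on $\rcsg$ ensures the induced MDP remains well-behaved so that this result applies. Once secured, the rest of the argument is the straightforward sup/inf manipulation outlined above, and the two directions of the iff follow symmetrically.
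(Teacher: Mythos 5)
Your proposal is correct and follows essentially the same route as the paper's proof: the same reformulation of the $\varepsilon$-RNE condition into the double-supremum inequality $\sup_{P\in\Punc}\sup_{\sigma_i'\in\Sigma_i}u_i^\Delta(\sigma_i',P)\leq\varepsilon$, followed by the same key reduction from mixed to pure deviations via linearity of the expected utility in $\sigma_i'$. The only presentational difference is that you justify pure-deviation sufficiency by collapsing the game (with $\sigma_{-i}$ and $P$ fixed) into a best-response MDP and citing classical memoryless-deterministic optimality, whereas the paper carries out the explicit per-state computation showing the convex combination over $C_{a_i}$ is maximised at a vertex of the action simplex; both arguments establish the same fact.
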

\vspace*{-0.7em}
\begin{proof}[Sketch]
    The $\varepsilon$-RNE condition for $\sigma$ (\Cref{def:subgame-perfect-rne}) can be rewritten~as $\sup_{P}\sup_{\sigma_i' \in \Sigma_i}{u_i^\Delta(\sigma_i',P)} \leq \varepsilon$. 
    Since the expected utility (from any state $s$) is linear in the deviation $\sigma_i'$, the maximal gain is attained by a deterministic deviation. So it suffices to consider pure deviations. 
    Full proof in \extref{prf:nz-infh-pure-dev-to-filter-rne}.
\end{proof}

%
Observe that $\Vbar_{i,\sigma}$ corresponds exactly to the \emph{optimistic value} of an IMDP $\rve{\csg_{i,\sigma}}$ in which player $i$ acts as the agent. We refer to this IMDP as the \textit{deviation IMDP} and formalise it in \extref{sec:dev-IMDP-supp}. The value correspondence relies on the IMDP's construction whereby, given state space $\adv{S}= S\union S'$ of $\adv{\csg}$, transitions from states in $S$ to $S'$ depend exclusively on the fixed player's strategy, whilst those from $S'$ to $S$ are governed by nature’s choice of $P \in \Punc$. Further, the reward assigned to each state $s \in S$ corresponds to the expected reward gain at $s$ if player $i$ deviates from $\sigma_i$ to $\sigma_i'$.

Therefore, following \Cref{lem:nz-infh-pure-dev-to-filter-rne}, if the deviation IMDP has optimistic value $\Vbar_{i,\sigma} \leq \varepsilon$ for both players $i \in \{1,2\}$, then the candidate profile $\sigma$ constitutes an $\varepsilon$-RNE profile of $\csg$. We can thus characterise our goal in solving $\csg$ as identifying:
\begin{equation}
\label{eq:advRNE-with-Vbar}
\epsRNE = \left\{ \sigma \in \Sigma \ \middle| \ \adv{\sigma} \in \advepsNE 
\ \land \
\forall i\in\{1,2\}.\,{\Vbar_{i,\sigma} \leq \varepsilon}
\right\}.
\end{equation}

\vspace*{-.5em}
\startpara{Solving nonzero-sum ICSGs}
\label{sec:nz-infh-find-rswne}
Altogether, a profile $\sigma \in \Sigma$ is an $\varepsilon$-RSWNE in $\csg$ iff its corresponding $\adv{\sigma} \in \advepsRNE$ maximises $\adv{\usum}$ in $\adv{\csg}$. Hence, computing $\varepsilon$-RSWNE in the 2-player ICSG $\csg$ reduces to finding SWNE in the 3-player CSG $\adv{\csg}$ over $\advepsRNE$.
While this would in principle require a general 3-player CSG solver (e.g., \cite{kwiatkowska2020multi}), such algorithms rely on nonlinear programming and are computationally expensive. 
However, by exploiting the zero-sum coalitional structure of $\adv{\csg}$, the \emph{trimatrix} game at each state $s\in S$ can be reduced to a \emph{bimatrix} game (see \extref{sec:nz-mc-supp}). This can thus be solved more efficiently using the 2-player nonzero-sum CSG solution approach from \cite{kwiatkowska2021automatic}, together with the inner-problem solution algorithm in \extref{sec:infh-rvi-solve-inner}~to ensure robustness. 
As in the zero-sum case, this computation does not require explicit construction of $\adv{\csg}$.

Once we have the set of one-shot $\varepsilon$-NE of $\adv{\csg}$ at each $s\in S$, we filter for the $\varepsilon$-RNE equivalents: for each profile $\adv{\sigma} \in \advepsNE$, we: 
\begin{enumerate*}[label=\arabic*)]
    \item compute $\Vbar_{i,\sigma}$~for each player $i \in \{1,2\}$ on the deviation IMDP $\rve{\csg_{i,\sigma}}$; and
    \item retain $\sigma$ if all $\Vbar_{i,\sigma} \leq \varepsilon$.
\end{enumerate*}
The resulting profiles correspond to the $\varepsilon$-RNE in $\csg$, from which we select the one maximising $\usum$. This gives an $\varepsilon$-RSWNE profile and values in $\csg$ (by \Cref{lem:nz-infh-util-preserving-strategy-eq}).


\startpara{Finite- and mixed-horizon properties}
\label{sec:nz-finh-icsg}
Since players' objectives are distinct, they may differ in time horizon. 
If both $X_1$ and $X_2$ are finite-horizon, then we define the analysis horizon as the maximum of the two, i.e., $k := \max(k_1, k_2)$. In mixed-horizon cases, where one objective is finite-horizon and the other infinite-horizon, we transform the game into an equivalent one with two infinite-horizon objectives on an augmented model, following \cite{kwiatkowska2019equilibria}. Thus, we focus on cases where both objectives are either finite- or infinite-horizon.

The infinite-horizon framework generalises to finite-horizon objectives in a similar way to the zero-sum setting. Additionally, we consider exact RNE and RSWNE (i.e., $\varepsilon=0$), and account for potentially different player horizons by labelling time-augmented states that record when each player’s target is reached within their respective horizon. The full construction is detailed in \extref{sec:nz-finh-supp}.

\vspace{-0.15cm}
\section{Value Computation for Two-player ICSGs}
\label{sec:mc-ICSG}
\Cref{sec:zs-ICSG,sec:nz-ICSG} have presented the theoretical foundations for solving zero-sum and non-zero sum ICSGs, respectively, and described how a reduction to an \emph{adversarial expansion} CSG
provides the basis for iterative solution methods.
In this section, we present some additional implementation details
and discuss correctness and complexity.
Full details are provided in \extref{sec:zs-mc-supp,sec:nz-mc-supp}.

Both approaches use elements of robust dynamic programming for RMDPs,
i.e, robust value iteration (RVI) for infinite-horizon properties
and robust backward induction (RBI) for finite-horizon properties,
and of value iteration based methods for CSGs~\cite{kwiatkowska2021automatic}.
For both zero-sum and nonzero-sum objectives,
the procedure performed per iteration for each state~$s\in S$ is:
\vspace{-0.8cm}
\begin{center}
\resizebox{0.88\textwidth}{!}{%
\begin{minipage}{\textwidth}
\begin{algorithm}[H]
\caption{RVI/RBI update for state $s\in S$ in $\csg=(N,S,\sbar, A, \Delta, \Pcheck, \Phat)$}
\label{alg:RVI-RBI-iteration}
\begin{algorithmic}[1]
\ForAll{$a \in A(s)$}
    \State $P^*_{sa} \gets \textsc{SolveInnerProblem}(s, a, V_{prev}, \Pcheck, \Phat)$
    \Comment{\mbox{\annotationsize{see~\AbbrExtRef{sec:infh-rvi-solve-inner}}}}
\EndFor
\State $\Zi \gets \textsc{ConstructNFG}(P^*_s,V_{prev})$
\Comment{\mbox{%
  \ifextended
    \annotationsize{see~\AbbrCref{sec:zs-mc-supp} (zero-sum) and \ref{sec:nz-mc-supp} (nonzero-sum)}\ignorespaces
  \else
    \annotationsize{see~\cite{this2026Extended}}\ignorespaces
  \fi
}}
\State $V_{next}[s] \gets \textsc{SolveNFG}(\Zi)$

\end{algorithmic}
\end{algorithm}
\end{minipage}%
}
\end{center}

The \textsc{SolveInnerProblem} function (line 2) uses an algorithm adapted from a greedy method for IMDPs \cite{nilim2005robust}, which we detail in \extref{sec:infh-rvi-solve-inner}. 
At line 4, we build a normal form game: for zero-sum ICSGs this a \emph{matrix} game, reusing the zero-sum CSG algorithms in \cite{kwiatkowska2021automatic}; 
for nonzero-sum ICSGs, we build a (general-sum) \emph{bimatrix} game using multi-player CSG algorithms in \cite{kwiatkowska2020multi}, which coincide with those for nonzero-sum 2-player CSGs in \cite{kwiatkowska2021automatic}. 
Our derivations appear in 
\ifextended
  \Cref{sec:zs-mc-supp} (zero-sum) and \ref{sec:nz-mc-supp} (nonzero-sum)\ignorespaces
\else
  \cite{this2026Extended}\ignorespaces
\fi, where we demonstrate how to directly compute values of $\csg$ without explicit construction of $\adv{\csg}$. 

At line 5, \textsc{SolveNFG} computes the matrix game value via an LP formulation \cite{v1928theorie,von1944theory} in the zero-sum case. For nonzero-sum ICSGs, this is a multi-step procedure which involves:
\begin{enumerate*}[label=\arabic*)]
\item enumerating NE for bimatrix games, using e.g., the \emph{Lemke-Howson algorithm} \cite{lemke1964equilibrium}; and 
\item filtering these for RNE as outlined in \Cref{sec:nz-infh-filter-ne-for-rne} using deviation IMDPs. The latter can be done using IMDP verification algorithms already supported in PRISM-games \cite{KNPS20}.
\end{enumerate*}
If no profiles remain, our algorithm terminates early; otherwise the value of state $s$ is updated to the RSWNE value of $\Zi$.

\startpara{Correctness}
\label{sec:mc-correctness}
This relies on the reduction of solving an ICSG $\csg$ to solving a standard CSG $\adv{\csg}$, whose correctness is established in \Cref{sec:zs-ICSG} (zero-sum) and \ref{sec:nz-ICSG} (nonzero-sum). Correctness of the underlying CSG algorithm over $\adv{\csg}$ is inherited from \cite{kwiatkowska2021automatic,kwiatkowska2020multi}, which relies on classical results, e.g., correctness of value iteration and backward induction \cite{raghavan1991algorithms} and solution of matrix games \cite{v1928theorie,von1944theory}. 
Robustness is ensured by solving the inner problem, for which the correctness of our adapted algorithm in \extref{sec:infh-rvi-solve-inner}~follows from \cite{nilim2005robust}. 

\startpara{Complexity}
\label{sec:mc-complexity}
Runtime depends on the number of iterations and per-iteration cost, both of which are dependent on game size. Finite-horizon objectives require exactly $k$ iterations, while infinite-horizon ones iterate until convergence, which may be exponential in $|A|$ in the worst case \cite{chatterjee2012survey} (even for MDPs \cite{hansen2011complexity}). 
Since line 2 takes $O(|S|\log{|S|})$ time per execution, the per-iteration cost for zero-sum ICSGs is $O\left(|S|^2 |A| \log{|S|} + |S|L_{|A|} \right)$, where $L_{|A|}$ is the cost of solving an LP problem of size $|A|$. This is polynomial in $|S|$ and $|A|$ with e.g., Karmarkar's algorithm \cite{karmarkar1984new}; or exponential under the simplex algorithm \cite{dantzig1949simplex}, which is PSPACE-complete in the worst case \cite{fearnley2015complexity} but performs well on average \cite{todd2002many}. 
For nonzero-sum ICSGs, each iteration takes worst-case exponential time due to NE enumeration \cite{avis2010enumeration}.

\startpara{Strategy synthesis}
\label{sec:zs-strategy-synthesis}
%
At each state, solving the NFG yields both the value(s) and optimal player strategies, while nature's optimal strategy is given by the returned values of \textsc{SolveInnerProblem} (line 2 of \Cref{alg:RVI-RBI-iteration}). These are memoryless and taken from the final RVI iteration for infinite-horizon objectives; but time-varying and taken from each RBI step for finite-horizon objectives.

\vspace{-0.15cm}
\section{Tool support and Experimentation}
\label{sec:implementation-and-experimentation}
We extended PRISM-games~\cite{KNPS20} to support modelling and solution of 2-player ICSGs, building on its existing functionality for CSGs and IMDPs. The tool and case studies are available at \cite{artifact_2026_18189280}.

\startpara{Experimental setup}
\label{sec:experiment-method}
We evaluate the efficiency and scalability of our techniques,
comparing, as a point of reference, to standard (non-robust) CSG solution from PRISM-games.%
\footnote{Due to improvements in PRISM-games, some statistics differ slightly from 
\cite{kwiatkowska2021automatic}.}
We use the benchmarks from \cite{kwiatkowska2021automatic},
obtaining ICSGs by perturbing all non-0/1 probability CSG transitions with a two-sided uncertainty $\pm \epsilon$. 
This includes a combination of finite-, infinite- and mixed-horizon properties specified in the logic rPATL~\cite{chen2013automatic,kwiatkowska2021automatic}.
All experiments were run on a 3.2 GHz Apple M1 with 16 GB memory.
Further statistics for benchmark models and an extended set of results can be found in \extref{sec:experiment-results-supp}.

\begin{table}[!t]
\renewcommand{\arraystretch}{0.95}
\centering
\caption{Zero-sum verification results, with full statistics in \extref{tab:mc-stats-full}.}
\label{tab:mc-stats}

\begin{adjustbox}{width=\textwidth}
\begin{tabular}{|c|r||c|r|r|r|r|r|r|r|}
\hline
\multicolumn{1}{|c|}{\textbf{Case study: [params], $\epsilon$}} & 
\multicolumn{1}{c||}{\textbf{Param.}} & 
\multicolumn{1}{c|}{\textbf{Avg \#}} & 
\multicolumn{1}{c|}{\textbf{States}} & 
\multicolumn{2}{c|}{\textbf{Val. Iters}} & 
\multicolumn{2}{c|}{\textbf{Verif. time (s)}} & 
\multicolumn{2}{c|}{\textbf{Value}} \\
\cline{5-10}
\multicolumn{1}{|c|}{\textbf{Property}} &
\multicolumn{1}{c||}{\textbf{values}} & 
\multicolumn{1}{c|}{\textbf{actions}} & 
\multicolumn{1}{c|}{} & 
\multicolumn{1}{c|}{\textbf{CSG}} & 
\multicolumn{1}{c|}{\textbf{ICSG}} & 
\multicolumn{1}{c|}{\textbf{CSG}} & 
\multicolumn{1}{c|}{\textbf{ICSG}} & 
\multicolumn{1}{c|}{\textbf{CSG}} & 
\multicolumn{1}{c|}{\textbf{ICSG}} \\
\hline \hline

\multirow{3}{*}{\shortstack{\textbf{Robot coordination:} \\ $[l], \ 0.01$ \\ $\llangle{rbt_1}\rrangle \Rt_{\min=?} [\eventually g_1]$}} 
 & 4 & 2.07,2.07 & 226 & 19 & 18 & 0.15 & 0.27 & 4.55 & 4.39 \\
 & 8 & 2.52,2.52 & 3,970 & 29 & 29 & 2.50 & 3.61 & 8.89 & 8.63 \\
 & 12 & 2.68,2.68 & 20,450 & 39 & 37 & 16.22 & 31.73 & 13.15 & 12.84 \\
\hline

\multirow{2}{*}{\shortstack{\textbf{User centric network:} \\ $[K], \ 0.01$; $\llangle{usr}\rrangle  \Rt_{\min=?} [\eventually f]$}} 
 & 3 & 2.11,1.91 & 32,214 & 60 & 59 & 789.66 & 834.92 & 0.04 & 0.03 \\
 & 4 & 2.31,1.92 & 104,897 & 81 & 81 & 3525.67 & 3729.40 & 4.00 & 4.00 \\
\hline

\multirow{4}{*}{\shortstack{\textbf{Aloha:} \\ $[b_{\max}], \ 1/257$ \\ $\llangle{u_2},{u_3}\rrangle \Rt_{\min=?} [\eventually s_{2,3}]$}} 
 & 2 & 1.00120,1.00274 & 14,230 & 105 & 103 & 5.31 & 5.61 & 4.34 & 4.28 \\
 & 3 & 1.00023,1.00054 & 72,566 & 128 & 125 & 18.50 & 26.39 & 4.54 & 4.46 \\
 & 4 & 1.00004,1.00009 & 413,035 & 195 & 190 & 225.69 & 291.72 & 4.62 & 4.53 \\
 & 5 & 1.00001,1.00002 & 2,237,981 & 343 & 327 & 4669.54 & 4260.59 & 4.65 & 4.54 \\
\hline

\multirow{4}{*}{\shortstack{\textbf{Jamming radio systems:} \\ $[chans,slots], \ 0.01$ \\ $\llangle{u}\rrangle  \Pt_{\max=?} [ \eventually ({sent}\geq {slots}/2) ]$ }} 
 & 4,6 & 2.17,2.17 & 531 & 7 & 7 & 0.32 & 0.46 & 0.84 & 0.80 \\
 & 4,12 & 2.49,2.49 & 1,623 & 13 & 13 & 1.39 & 2.94 & 0.77 & 0.71 \\
 & 6,6 & 2.17,2.17 & 531 & 7 & 7 & 0.25 & 0.45 & 0.84 & 0.80 \\
 & 6,12 & 2.49,2.49 & 1,623 & 13 & 13 & 1.46 & 2.37 & 0.77 & 0.71 \\
\hline
\end{tabular}
\end{adjustbox}
\end{table}

\vspace{-0.2cm}
\begin{table}[!t]
\renewcommand{\arraystretch}{1.1}
\centering
\vspace{-0.35cm}
\caption{Nonzero-sum verification results, with full statistics in \extref{tab:nz-mc-stats-full}.}
\label{tab:nz-mc-stats}

\begin{adjustbox}{width=\textwidth}
\begin{tabular}{|c|r||c|r|r|r|r|r|r|r|}
\hline
\multicolumn{1}{|c|}{\textbf{Case study: [params], $\epsilon$}} & 
\multicolumn{1}{c||}{\textbf{Param.}} & 
\multicolumn{1}{c|}{\textbf{Avg \#}} & 
\multicolumn{1}{c|}{\textbf{States}} & 
\multicolumn{2}{c|}{\textbf{Val. Iters}} & 
\multicolumn{2}{c|}{\textbf{Verif. time (s)}} & 
\multicolumn{2}{c|}{\textbf{Value}} \\
\cline{5-10}
\multicolumn{1}{|c|}{\textbf{Property}} &
\multicolumn{1}{c||}{\textbf{values}} & 
\multicolumn{1}{c|}{\textbf{actions}} & 
\multicolumn{1}{c|}{} & 
\multicolumn{1}{c|}{\textbf{CSG}} & 
\multicolumn{1}{c|}{\textbf{ICSG}} & 
\multicolumn{1}{c|}{\textbf{CSG}} & 
\multicolumn{1}{c|}{\textbf{ICSG}} & 
\multicolumn{1}{c|}{\textbf{CSG}} & 
\multicolumn{1}{c|}{\textbf{ICSG}} \\
\hline \hline

\multirow{3}{*}{\shortstack{\textbf{Robot coordination}: $[l,k], \ 0.01$ \\ $\llangle{r_1:r_2}\rrangle_{\max=?} \left( \Pt [ \lnot{c} \until^{\leq k} g_1 ] + \Pt [ \lnot{c} \until^{\leq k} g_2 ] \right)$}}
& 4,4 & 2.07,2.07 & 226 & 4 & 4 & 0.26 & 0.60 & 1.55 & 1.50 \\
& 8,8 & 2.52,2.52 & 3,970 & 8 & 8 & 1.03 & 54.74 & 0.92 & 0.84 \\
& 12,12 & 2.68,2.68 & 20,450 & 12 & 12 & 8.55 & 2895.60 & 0.49 & 0.40 \\
\hline

\multirow{2}{*}{\shortstack{\textbf{Robot coordination}: $[l,k], \ 0.01$ \\ $\llangle{r_1:r_2}\rrangle_{\max=?} \left( \Pt [ \lnot{c} \until^{\leq k} g_1 ] + \Pt [ \lnot{c} \until\, g_2 ] \right)$}}
& 4,8 & 2.10,2.04 & 226 & 14 & 14 & 1.22 & 17.54 & 2.00 & 2.00 \\
& 4,16 & 2.12,2.05 & 3,970 & 14 & 11 & 2.08 & 75.23 & 2.00 & 2.00 \\
\hline

\multirow{3}{*}{\shortstack{\textbf{Aloha (deadline)}: $[b_{\max},D], \ 1/257$\\ $\llangle u_1:u_2,u_3 \rrangle_{\max=?} \left( \Pt[\eventually s_1] + \Pt[\eventually s_{2,3}] \right)$}} 
 & 1,8 & 1.0048,1.0111 & 14,230 & 23 & 23 & 0.45 & 5.13 & 1.99 & 1.99 \\ 
 & 2,8 & 1.0012,1.0027 & 72,566 & 23 & 23 & 1.17 & 107.63 & 1.98 & 1.97 \\ 
 & 3,8 & 1.0002,1.0005 & 413,035 & 22 & 22 & 3.96 & 3306.24 & 1.97 & 1.97 \\ 
\hline

\multirow{2}{*}{\shortstack{\textbf{Medium access}: $[e_{\max},k_1,k_2], \ 0.01$ \\ $\llangle p_1:p_2,p_3 \rrangle_{\max=?} \left(\Rt[\cumreward{k_1}] + \Rt[\cumreward{k_2}] \right)$}} 
 & 10,20,25 & 1.91,3.63 & 10,591 & 25 & 25 & 577.60 & 614.46 & 26.10 & 25.88 \\
 & 15,20,25 & 1.94,3.75 & 33,886 & 25 & 25 & 1148.02 & 6109.14 & 34.35 & 34.06 \\
\hline

\end{tabular}
\end{adjustbox}
\vspace{-0.5cm}
\end{table}


\paragraph{Q1. How does verification time for ICSGs compare to CSGs?}

Results for solving CSGs and ICSGs under adversarial uncertainty, completed within a 2-hour time limit, are shown in \Cref{tab:mc-stats} for zero-sum games and \Cref{tab:nz-mc-stats} for nonzero-sum games. Overall, the increase in verification time when moving from CSGs to ICSGs is significantly more pronounced in the nonzero-sum setting, highlighting the added complexity introduced by the nonzero-sum formulation.

Across all benchmarks, verification times for zero-sum ICSGs remained within a factor of two of their CSG counterparts (see \Cref{tab:mc-stats}), with all test instances (except for \emph{User-centric network} with $K > 4$) solved within 2 hours. These include models with over 2 million states and 10 million transitions. 
By contrast, for nonzero-sum ICSGs, verification completes within 2 hours for models up to 0.4 million states and 1.3 million transitions. This aligns with the underlying reduction framework: in the zero-sum setting, the problem reduces to a 2-player game, whereas in the nonzero-sum case it is a more complex 3-player game.

While the main overhead arises from solving the inner problem at each iteration, in the zero-sum setting this is often offset by faster RVI convergence. For example, in the \textit{Aloha} model with $b_{\max}=5$, ICSG verification outperformed CSGs in runtime and required noticeably fewer iterations to converge.
Additionally, consistent with \cite{kwiatkowska2021automatic}, for both zero- and nonzero-sum CSGs and ICSGs, verification time appears to depend more on the number of actions per player/coalition than the number of states. For example, the minimally branched \textit{Aloha} instances (averaging close to one action per coalition) are verified relatively efficiently compared to other games with similarly sized state spaces.

\vspace{-0.1cm}
\paragraph{Q2. How does $\epsilon$ affect verification time?} 
Interestingly, as shown in \Cref{fig:rbt-time-vs-eps}, verification times are often the lowest when $\epsilon$ is very small or close to its maximum value, while intermediate $\epsilon$ values tend to be slower. This non-monotonic behaviour again reflects the trade-off due to uncertainty: larger $\epsilon$ increases the per-iteration cost by giving nature more choices, but also accelerates convergence by flattening the value landscape \cite{nilim2005robust}, thus reducing the number of iterations required until convergence. 
Thus $\epsilon$ can be tuned to balance robustness and efficiency. 
However, the net effect of $\epsilon$ on verification time is model-specific: e.g., in the \textit{Robot Coordination} case with $l = 12$ from \Cref{tab:mc-stats}, ICSG verification required fewer iterations but still resulted in a longer overall verification time.

\vspace{-0.1cm}
\paragraph{Q3. How does $\epsilon$ influence the computed value?}
Increasing $\epsilon$ yields more conservative results, as shown in \Cref{fig:IDS-val-vs-rounds}. This is to be expected, as expanding the uncertainty set enables nature to select ``worse'' transitions that reduce the game value. The effect is amplified in larger, more connected models, in which pessimistic transitions propagate over longer paths \cite{wiesemann2013robust}. 
\Cref{fig:IDS-val-vs-rounds} also illustrates a practical use of ICSG verification in computing two-sided, $\epsilon$-parametrised bounds on verification results, which can be interpreted as confidence intervals under transition uncertainty. This is useful in safety-critical and performance-sensitive settings, where robustness must be ensured against worst-case security attacks, while enabling estimation of, e.g., optimistic operational performance.

\begin{figure}[!t]
\centering
\begin{minipage}[t]{0.43\linewidth}
\centering
\vspace*{0pt}
\begin{filecontents*}{robot-time-vs-eps.csv}
eps,4,8,12
0.01,2.350194553,53.14951456,338.8648332
0.02,2.556420233,59.6631068,487.0890579
0.025,3.042801556,64.30485437,739.587361
0.03,2.972762646,67.50873786,533.1378584
0.04,2.968871595,64.64563107,404.7474547
0.049,1.856031128,41.7223301,400.7538912
\end{filecontents*}

\vspace{0pt}
\centering
\begin{tikzpicture}
  \begin{axis}[
    width=0.9\linewidth,
    height=5.5cm,
    xlabel={$\epsilon$},
    ylabel={Relative time (log scale)},
    ylabel style={font=\footnotesize},
    ticklabel style={font=\scriptsize},
    scaled y ticks=false,
    xmin=0.001,
    xtick={0.01, 0.02, 0.03, 0.04, 0.049},
    xticklabel style={/pgf/number format/fixed, /pgf/number format/precision=3},
    ymode=log,
    log basis y={10},
    ymin=1.5, 
    ymax=1150,
    grid=both,
    grid style={line width=.1pt, draw=gray!10},
    major grid style={line width=.2pt,draw=gray!50},
    legend style={
      at={(0.02,0.14)}, anchor=south west,
      font=\scriptsize,
      cells={align=left}
    },
    cycle list={
      {blue, mark=*, mark options={solid, scale=1.2}},
      {red, mark=square*, mark options={solid, scale=1.1}},
      {green!60!black, mark=triangle*, mark options={solid, scale=1.2}},
      {purple, mark=diamond*, mark options={solid, scale=1.2}},
    },
  ]
    \addplot table[x=eps, y=4, col sep=comma] {robot-time-vs-eps.csv};
    \addlegendentry{$l=4$}

    \addplot table[x=eps, y=8, col sep=comma] {robot-time-vs-eps.csv};
    \addlegendentry{$l=8$}

    \addplot table[x=eps, y=12, col sep=comma] {robot-time-vs-eps.csv};
    \addlegendentry{$l=12$}

    \draw[thick, black, dashed] 
      (axis cs:\pgfkeysvalueof{/pgfplots/xmin},1) -- (axis cs:\pgfkeysvalueof{/pgfplots/xmax},1);

    \addplot[thick, black, dashed] coordinates {(0,1) (0,1)};
  \end{axis}
\end{tikzpicture}
\vspace{-0.15cm}
\caption{Total verification time relative to CSG baseline in the first nonzero-sum \textit{Robot coordination} case study, which requires $\epsilon<0.05$.}
\label{fig:rbt-time-vs-eps}
\end{minipage}
\hfill
\begin{minipage}[t]{0.52\textwidth}
\centering
\vspace*{0pt}
\begin{filecontents*}{IDS-value-vs-rounds.csv}
rounds,"0","0.01","0.05","0.10","0.19"
25,14.58217891,14.23483711,12.87202009,11.24360517,8.599345508
50,29.98889891,29.24096137,26.31049741,22.82308381,17.22318296
100,60.80253199,59.25338114,53.18754638,45.98207206,34.47085794
200,122.4297982,119.2782207,106.9416443,92.30004855,68.9662079
\end{filecontents*}

\begin{filecontents*}{IDS-opt-value-vs-rounds.csv}
rounds,"0.01","0.05","0.10","0.19"
25,14.9317662,16.3485159,18.1459815,21.3854943
50,30.7419542,33.7953279,37.6679196,44.6230868
100,62.3625455,68.689259,76.7122025,91.0987474
200,125.603728,138.477121,154.800768,184.050069
\end{filecontents*}

\tikzset{
  mycircle/.style={mark=*, mark options={solid, scale=0.9}, thin, blue},
  mysquare/.style={mark=square*, mark options={solid, scale=0.9}, thin, blue!85!white},
  mytriangle/.style={mark=triangle*, mark options={solid, scale=1.2}, thin, blue!65!white},
  mydiamond/.style={mark=diamond*, mark options={solid, scale=1.2}, thin, blue!45!white}
}

\begin{tikzpicture}
\begin{axis}[
    width=0.97\linewidth,
    height=5.5cm,
    xlabel={rounds},
    ylabel={Value},
    xlabel style={
        font=\footnotesize,
        at={(axis description cs:0.5,-0.07)}, anchor=north
    },
    xticklabel style={font=\scriptsize}, 
    yticklabel style={font=\scriptsize},
    xmin=25, xmax=200,
    xtick={25, 50, 100, 200},
    ymin=0, ymax=185,
    legend style={
        at={(0.02,0.98)}, anchor=north west,
        font=\scriptsize,
        cells={align=left},
        inner xsep=1pt,  
        inner ysep=0.5pt,  
        column sep=2pt,  
        /tikz/every even column/.append style={column sep=2pt},  
    },
    cycle list name=linestyles,
    mark size=1.8pt,
]

\addplot+[mark=x, very thick, gray] table[x=rounds, y="0", col sep=comma] {IDS-value-vs-rounds.csv};
\addlegendentry{CSG}

\addplot+[mycircle, solid] table[x=rounds, y="0.01", col sep=comma] {IDS-value-vs-rounds.csv};
\addlegendentry{$\epsilon=0.01$}

\addplot+[mysquare, solid] table[x=rounds, y="0.05", col sep=comma] {IDS-value-vs-rounds.csv};
\addlegendentry{$\epsilon=0.05$}

\addplot+[mytriangle, solid] table[x=rounds, y="0.10", col sep=comma] {IDS-value-vs-rounds.csv};
\addlegendentry{$\epsilon=0.10$}

\addplot+[mydiamond, solid] table[x=rounds, y="0.19", col sep=comma] {IDS-value-vs-rounds.csv};
\addlegendentry{$\epsilon=0.19$}

\addplot[name path=csg, draw=none] table[x=rounds, y="0", col sep=comma] {IDS-value-vs-rounds.csv};
\addplot[name path=eps001, draw=none] table[x=rounds, y="0.01", col sep=comma] {IDS-value-vs-rounds.csv};
\addplot[name path=eps005, draw=none] table[x=rounds, y="0.05", col sep=comma] {IDS-value-vs-rounds.csv};
\addplot[name path=eps010, draw=none] table[x=rounds, y="0.10", col sep=comma] {IDS-value-vs-rounds.csv};
\addplot[name path=eps019, draw=none] table[x=rounds, y="0.19", col sep=comma] {IDS-value-vs-rounds.csv};

\addplot[fill=blue!65!white] fill between[of=csg and eps001];
\addplot[fill=blue!45!white] fill between[of=eps001 and eps005];
\addplot[fill=blue!25!white] fill between[of=eps005 and eps010];
\addplot[fill=blue!10!white] fill between[of=eps010 and eps019];

\addplot+[mycircle, dashed] table[x=rounds, y="0.01", col sep=comma] {IDS-opt-value-vs-rounds.csv};
\addplot+[mysquare, dashed] table[x=rounds, y="0.05", col sep=comma] {IDS-opt-value-vs-rounds.csv};
\addplot+[mytriangle, dashed] table[x=rounds, y="0.10", col sep=comma] {IDS-opt-value-vs-rounds.csv};
\addplot+[mydiamond, dashed] table[x=rounds, y="0.19", col sep=comma] {IDS-opt-value-vs-rounds.csv};

\addplot[name path=opt001, draw=none] table[x=rounds, y="0.01", col sep=comma] {IDS-opt-value-vs-rounds.csv};
\addplot[name path=opt005, draw=none] table[x=rounds, y="0.05", col sep=comma] {IDS-opt-value-vs-rounds.csv};
\addplot[name path=opt010, draw=none] table[x=rounds, y="0.10", col sep=comma] {IDS-opt-value-vs-rounds.csv};
\addplot[name path=opt019, draw=none] table[x=rounds, y="0.19", col sep=comma] {IDS-opt-value-vs-rounds.csv};

\addplot[fill=blue!65!white] fill between[of=csg and opt001];
\addplot[fill=blue!45!white] fill between[of=opt001 and opt005];
\addplot[fill=blue!25!white] fill between[of=opt005 and opt010];
\addplot[fill=blue!10!white] fill between[of=opt010 and opt019];

\end{axis}
\end{tikzpicture}

\caption{ICSG values over game size in the  \textit{Intrusion Detection} case study, under two resolutions of uncertainty: \emph{adversarial} (solid lines) and \emph{controlled} by coalition 1 (dashed).}
\label{fig:IDS-val-vs-rounds}
\end{minipage}
\label{fig:vs-eps}
\vspace*{-0.45cm}
\end{figure}


\vspace{-0.2cm}
\section{Conclusion}
\label{sec:conclusion}
We have introduced robust CSGs, an extension of classical CSGs with transition uncertainty that enables principled analysis of multi-agent, concurrent stochastic systems with imprecise dynamics. 
Focusing on interval uncertainty, i.e., ICSGs, we developed verification algorithms for the 2-player setting, for both finite- and infinite-horizon objectives. Our approach relies on a value-preserving reduction of ICSGs to standard CSGs,
thereby allowing reuse of elements of the solution methods for both CSGs and RMDPs, plus custom adaptations and filtering.
Our implementation in PRISM-games shows that solution in the zero-sum case scales comparably to standard CSGs, with runtime increases below a factor of two. 
In the nonzero-sum case, computational demands are higher but we still scale successfully to large CSGs.
Future work could explore RCSGs with richer uncertainty models and/or alternative solution concepts (e.g., robust correlated equilibria), and consider objectives with more general temporal specifications. 

\unless\ifextended




\startpara{Data Availability Statement}
The models, tools, and scripts to reproduce our experimental evaluation are archived and available at \cite{artifact_2026_18189280}.

\fi

\startpara{Acknowledgments}
Supported by the EPSRC Centre for Doctoral Training no. EP/Y035070/1 and the UKRI AI Hub on Mathematical Foundations of AI.



%
%
\bibliographystyle{splncs04}
\bibliography{references}


\newpage
\appendix

\ifextended
  \setcounter{section}{0}
\setcounter{secnumdepth}{3}
\renewcommand{\thesection}{\Alph{section}}
\renewcommand{\thesubsection}{\thesection.\arabic{subsection}} 
\renewcommand{\thesubsubsection}{\thesubsection.\arabic{subsubsection}} 

\counterwithin{equation}{section}
\counterwithin{figure}{section}
\counterwithin{table}{section}
\counterwithin{theorem}{section}  
\counterwithin{lemma}{section}
\counterwithin{proposition}{section}
\counterwithin{remark}{section}
\counterwithin{definition}{section}
\counterwithin{corollary}{section}
\counterwithin{algorithm}{section}

\renewcommand{\theequation}{\thesection.\arabic{equation}}
\renewcommand{\thefigure}{\thesection.\arabic{figure}}
\renewcommand{\thetable}{\thesection.\arabic{table}}
\renewcommand{\thetheorem}{\thesection.\arabic{theorem}}
\renewcommand{\thelemma}{\thesection.\arabic{lemma}}
\renewcommand{\theproposition}{\thesection.\arabic{proposition}}
\renewcommand{\theremark}{\thesection.\arabic{remark}}
\renewcommand{\thedefinition}{\thesection.\arabic{definition}}
\renewcommand{\thecorollary}{\thesection.\arabic{corollary}}
\renewcommand{\thealgorithm}{\thesection.\arabic{algorithm}}
\renewcommand{\theexample}{\thesection.\arabic{example}}


\section*{Appendix}



\section{Existence of RNE in ICSGs}
\label{sec:RNE-existence-in-ICSG}

\begin{example}[ICSG where RNE exists]
\label{eg:RNE-exists}
Consider the ICSG shown in \Cref{fig:RNE-exists-eg}. From state 0, the best response for player 1 is clearly to play action $a_1$, as any other choice yields zero reward. Similarly, player 2's best response is to play action $b_1$, as playing any other action (i.e., $b_2$) would prevent her from reaching state 2 under any realisation of $p$ selected by nature. Hence, the unique RNE is the deterministic profile in which player 1 plays $a_1$ with probability 1 and player 2 plays $b_1$ at state 0.

\vspace{-0.7cm}
\begin{figure}[H]
\centering
\begin{tikzpicture}[
  ->, 
  >=stealth,
  every node/.style={draw=black, circle, minimum size=8mm, inner sep=1pt},
  labelstyle/.style={fill=none, draw=none}
  ]

  \node (s0) at (0, 0) {0}; 
  \node (s1) at (-3, 0) {1};
  \node (s2) at (3, 0) {2};

  \draw[->, thick] (0, -0.7) -- (s0);

  \coordinate (mid1) at (0, 1.7);

  \draw (s0) -- node[pos=0.4, right, labelstyle] {\footnotesize$(a_1,b_1)$} (mid1);
  \draw (mid1) to[bend right=15] node[left=0.5, labelstyle] {\footnotesize$1-p$} (s1);
  \draw (mid1) to[bend left=15] node[right=0.5, labelstyle] {\footnotesize$p\in [0.7, 0.9]$} (s2);

  \draw (s0) -- node[below=-0.5, labelstyle] {\footnotesize$(a_2,b_1), 1$} (s2);

  \draw (s0) -- node[below=-0.5, labelstyle] {\footnotesize$(\_,b_2), 1$} (s1);

  \draw (s1) edge[loop left] ();
  \draw (s2) edge[loop right] ();

\end{tikzpicture}

\vspace*{-0.3em}
\caption{ICSG with actions $A_1=\{a_1,a_2\}$ and $A_2=\{b_1,b_2\}$. Player 1 aims to maximise her 1-step cumulative reward, receiving $r_1(s_0,(a_1,\_))=1$ and 0 otherwise. Player 2 aims to eventually reach $T_2=\{s_2\}$.}
\label{fig:RNE-exists-eg}
\end{figure}
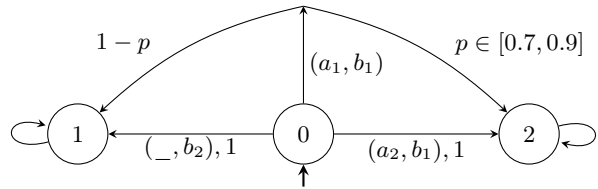
\end{example}

\vspace*{-1cm}
\begin{example}[ICSG where RNE does not exist]
\label{eg:no-RNE}
Consider the ICSG shown in \Cref{fig:no-RNE-eg}. Since the transition probabilities from state 0 depend solely on player 2’s action, we need only consider player 2’s strategy. Suppose player 2 selects action $b_1$ with probability $\alpha \in [0,1]$ and $b_2$ with probability $1 - \alpha$. Then her expected utility is $\alpha p + (1-\alpha) q = \alpha(p-q) + q$. Thus, if nature selects a realisation where $p > q$, player 2 maximises her utility by choosing $\alpha = 1$, i.e., playing $b_1$ deterministically. Conversely, if $p < q$, her best response is $\alpha = 0$, i.e., playing $b_2$ deterministically. When $p = q$, player 2 is indifferent. Thus, for any (mixed) strategy $\alpha \in [0,1]$, there exists a realisation of $p$ and $q$ under which player 2 has an incentive to unilaterally deviate. Therefore, no strategy of player~2 can be a best response across all realisations of $p$ and $q$, and so no RNE exists.

\vspace{-1cm}
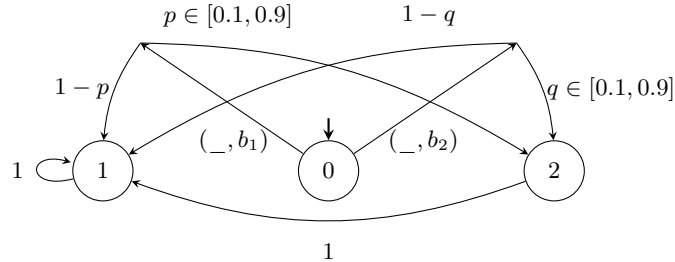
\begin{figure}[H]
\centering
\begin{tikzpicture}[
  ->, 
  >=stealth,
  every node/.style={draw=black, circle, minimum size=8mm, inner sep=1pt},
  labelstyle/.style={fill=none, draw=none}
  ]

  \node (s0) at (0, 0) {0}; 
  \node (s1) at (-3, 0) {1};
  \node (s2) at (3, 0) {2};

  \draw[->, thick] (0, 0.7) -- (s0) node[midway, right, labelstyle] {};

  \coordinate (mid1) at (-2.5, 1.7);
  \coordinate (mid2) at (2.5, 1.7);

  \draw (s0) -- node[pos=0.11, left, xshift=-4pt, labelstyle] {\footnotesize$(\_,b_1)$} (mid1);
  \draw (mid1) to[bend right=15] node[left, labelstyle] {\footnotesize$1-p$} (s1);
  \draw (mid1) to[bend left=15] node[pos=0.2, above=-0.5, labelstyle] {\footnotesize$p\in[0.1,0.9]$} (s2);

  \draw (s0) -- node[pos=0.11, right, xshift=4pt, labelstyle] {\footnotesize$(\_,b_2)$} (mid2);
  \draw (mid2) to[bend right=15] node[pos=0.2, above, labelstyle] {\footnotesize$1-q$} (s1);
  \draw (mid2) to[bend left=15] node[right, labelstyle] {\footnotesize$q\in [0.1,0.9]$} (s2);

  \draw (s1) edge[loop left] node[labelstyle, left, xshift=4pt] {\footnotesize$1$} (s1);

  \draw (s2) to[bend left=20] node[below, labelstyle] {\footnotesize$1$} (s1);

\end{tikzpicture}

\vspace*{-0.7em}
\caption{ICSG with actions $A_1=\{a_1,a_2\}$ and $A_2=\{b_1,b_2\}$. Player 1 and 2's objective is to reach $T_1=\{s_1\}$ and $T_2=\{s_2\}$, respectively, within $k=2$ steps.}
\label{fig:no-RNE-eg}
\end{figure}
\end{example}

\section{Proofs for Infinite-horizon Zero-sum ICSGs}
\label{sec:zs-supp}

\subsection[Sufficiency of Memoryless Strategies for Optimality]{Proof of \Cref{lem:infh-memoryless-optimal-strategies}}
\label{sec:infh-memoryless-optimal-strategies-proof}

\begin{lemma}[Nature's Inner Problem]
\label{lem:infh-nature-inner-problem}
    In an ICSG $\csg$, nature can minimise its value from state $s$ by independently choosing a distribution $P_{sa} \in \Punc_{sa}$ at each state-action pair $(s,a)$ that minimises the expected next-state value, i.e.
    \begin{equation}
    \label{eq:infh-natures-min-problem}
    \inf_{P\in \Punc}{\sum_{a\in A,s'\in S}{\sigma_{sa}{P_{sas'}V(s')}}}
    = \sum_{a\in A}{\sigma_{sa} \inf_{P_{sa}\in \Punc_{sa}}{\sum_{s'\in S}{P_{sas'}V(s')}}}.
    \end{equation}
\end{lemma}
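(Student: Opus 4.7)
The plan is to exploit the $(s,a)$-rectangularity of $\Punc$, which is the defining structural property of the uncertainty model considered in \Cref{sec:prelim-RMDP} (and inherited by ICSGs as per \Cref{def:ICSG}). Under this assumption, $\Punc$ decomposes as the Cartesian product $\prod_{(s,a)} \Punc_{sa}$, so selecting $P \in \Punc$ is equivalent to independently selecting each $P_{sa} \in \Punc_{sa}$. This is precisely the ingredient needed to push the infimum inside the sum.

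Concretely, I would proceed in three steps. First, rewrite the left-hand side by swapping the order of summation:
\[
\sum_{a\in A,\, s'\in S} \sigma_{sa} P_{sas'} V(s')
\ = \ \sum_{a\in A} \sigma_{sa} \Bigl( \sum_{s'\in S} P_{sas'} V(s') \Bigr),
\]
and observe that only the entries $\{P_{sa}\}_{a\in A}$ (for the fixed state $s$) appear in this expression, and the $a$-th summand depends solely on $P_{sa}$. Second, invoke the standard separability of infimum over product domains: for $f(x,y) = f_1(x) + f_2(y)$ with $(x,y) \in X \times Y$, one has $\inf_{(x,y)} f = \inf_x f_1(x) + \inf_y f_2(y)$, extended inductively to the finite product $\prod_{a\in A} \Punc_{sa}$. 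Third, note that the nonnegative weights $\sigma_{sa} \geq 0$ (which are mixed-strategy probabilities) may be pulled out of each per-$a$ infimum without changing the minimiser, yielding the right-hand side.

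A small bookkeeping point is that the infimum on the left is nominally over \emph{all} components of $P$, including $P_{s''a''}$ for states $s'' \neq s$, but these simply do not appear in the objective; under $(s,a)$-rectangularity their values impose no constraints on the relevant $P_{sa}$'s, so they can be chosen arbitrarily and the infimum reduces to one over $\prod_{a\in A}\Punc_{sa}$. The main conceptual obstacle, and the reason this lemma is worth stating, is exactly this reliance on $(s,a)$-rectangularity: for weaker models such as $s$-rectangular uncertainty, the sets $\Punc_{sa}$ for different actions $a$ at the same state are coupled, and the interchange of $\inf$ and $\sum$ would fail in general. Once rectangularity is invoked, however, the argument is a routine separability calculation rather than anything deeper.
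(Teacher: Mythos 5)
Your proposal is correct and rests on exactly the same key ingredient as the paper's proof, namely that $(s,a)$-rectangularity makes $\Punc$ a product of the sets $\Punc_{sa}$ so the infimum separates across the $a$-indexed summands. The only difference is stylistic: the paper argues by contradiction (exchanging a single non-locally-optimal component $P^*_{sa'}$ to strictly improve the objective), whereas you invoke the separability identity for infima over product domains directly, which if anything is cleaner since it does not presuppose that the infima are attained.
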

\begin{proof}
\label{prf:infh-nature-inner-problem}
As discussed in \Cref{sec:RCSG}, the ICSG $\csg$ is $(s,a)$-rectangular, meaning that $\Punc$ decomposes as $\Punc=\bigtimes_{(s,a)\in S\times A}{\Punc_{sa}}$. So for any transition function $P\in \Punc$, we have that ${P_{sa}\in \Punc_{sa}}$ for all $(s,a)\in S\times A$.

Fix a state $s\in S$, and let $f(P) = \sum_{a,s'}{C_{sas'}{P_{sas'}}}$ where $C_{sas'} := \sigma_{sa} \cdot V(s')$. 
Let $P^*\in \Punc$ be the globally optimal transition function that minimises the value of $\csg$ from state $s$. 
Suppose for contradiction that $P^*$ does not consist entirely of ``locally optimal'' distributions, i.e., there exists some $a' \in A$ such that $P^*_{sa'} \neq \arg\min_{P_{sa'}\in \Punc_{sa'}}{P_{sa's'} \cdot C_{sa's'}} =: \tilde{P}_{sa'}$.
However, we could then always reduce the value by replacing that $P^*_{sa'}$ with $\tilde{P}_{sa}$. Call the resulting transition function $\tilde{P}$, which is still in $\Punc$ by $(s,a)$-rectangularity. We thus have that:
\begin{align*}
    f(\tilde{P}) 
    &= \sum_{a\in A, s'\in S}{C_{sas'}\tilde{P}_{sas'}}\\
    &= \sum_{a\neq a', s'}{C_{sas'}{P^*_{sas'}}} + \sum_{s'}{C_{sa's'}\tilde{P}_{sa's'}}
    &&\text{\small(by definition of $\tilde{P}$)}\\
    &= \sum_{a\neq a', s'}{C_{sas'}{P^*_{sas'}}} + \min_{P_{sa'} \in \Punc_{sa'}}{\sum_{s'}{C_{sa's'}P_{sa's'}}}
    &&\text{\small(by definition of $\tilde{P}_{sa'}$)}\\
    &< 
    \sum_{a\neq a', s'}{C_{sas'}{P^*_{sas'}}} + \sum_{s'}{C_{sa's'}P^*_{sa's'}}
    &&\text{\small(by definition of $P^*_{sa'}$)}\\
    &= \sum_{a,s'}{C_{sas'}{P^*_{sas'}}}
    = f(P^*)
\end{align*}
which contradicts the global optimality of $P^*$ (for minimisation).
\end{proof}

\MemorylessSufficiency*
\begin{proof}
\label{prf:infh-memoryless-optimal-strategies}
We first prove the sufficiency of \emph{memoryless} strategies for optimality.
Let $P^*: FPaths_{\csg} \rightharpoonup \distr(S)$ be the witnessing optimal (minimising) transition function for nature. 
Further let $\pi_1, \pi_2 \in \fpathg$ be two finite paths which both end in $s\in S$. 
Then for any $\sigma\in \Sigma$:
\begin{align*}
    P^*(\pi_1)
    &= \arg\min_{P\in \Punc}{X(\pi_1) + \valg(s\mid \sigma,P)} 
    &&\text{\small(by additivity of objectives; see \AbbrCref{sec:prelim-CSG})}\\
    &= \arg\min_{P\in \Punc}{\valg(s\mid \sigma,P)} 
    &&\text{\small($X(\cdot)$ is independent of $P$)}\\
    &= \arg\min_{P\in \Punc}{X(\pi_2) + \valg(s\mid \sigma,P)}\\
    &= P^*(\pi_2)
\end{align*}
Therefore $P^*$ is memoryless. 
The same proof of memorylessness applies to player's strategies, except that nature's objective $\min_{P \in \Punc} \valg(s\mid \sigma,P)$ is replaced by that player's objective ($\max_{\sigma_1 \in \Sigma_1}$ for player 1 and $\min_{\sigma_2 \in \Sigma_2}$ for player 2).

Next, we prove that nature has a \emph{deterministic} memoryless optimal strategy. 
Consider a memoryless optimal strategy $P^*\in \Punc$ of nature, given by:
\begin{align*}
    P^* := \arg\min_{P\in\Punc}{\gs{\ev}^{\sigma,P}[X]}
    = \left\{ \arg\min_{P_{sa}\in \Punc_{sa}}{\sum_{s'}{P_{sas'}} V(s')} \right\}_{(s,a)\in S \times A} 
    \quad \text{\small(by \AbbrCref{lem:infh-nature-inner-problem})}.
\end{align*} 

Following the first part of the proof, we restrict ourselves to the set of memoryless policies of nature, denoted $\Sigma_3$.
Let $\eta^*\in \Sigma_3$ denote a deterministic strategy where nature deterministically and independently selects the distribution $P_{sa}^*$ at each $(s,a)\in S\times A$, i.e., $\eta^*(s,a)(P_{sa}) = \ind[P_{sa} = P_{sa}^*]$. Note that this is possible due to the $(s,a)$-rectangularity of the ICSG $\csg$.
Then at each state $s$ under $\eta^*$ we have that for any (possibly mixed) strategy $\sigma_3 \in \Sigma_3$ of nature:
\begin{align*}
    V(s\mid \eta^*) 
    &= \sum_{P_{sa} \in \Punc_{sa}}{\left( \eta^*((s,a),P_{sa}){\sum_{s'\in S}{P_{sas'} V(s')}} \right)}\\
    &= \sum_{P_{sa} \in \Punc_{sa}}{\left( \ind[P_{sa}=P_{sa}^*] \cdot \sum_{s'\in S}{P_{sas'} V(s')} \right)}\\
    &= \sum_{s'\in S}{P_{sas'}^* V(s')}
    = \min_{P_{sa}\in \Punc_{sa}}{\sum_{s'\in S}{P_{sas'} V(s')}} 
    \qquad \text{\small(by definition of $P^*$)}\\
    &\leq \sum_{P_{sa}\in \Punc_{sa}}{ \left( \sigma_3((s,a),P_{sa}) \cdot \sum_{s'\in S}{P_{sas'} V(s')} \right)}  
    = V(s\mid \sigma_3).
\end{align*}
The final inequality follows since the right-hand side expression is a convex combination of $\sum_{s'\in S}{P_{sas'} V(s')}$. 
Therefore $\eta^*$ is optimal.

\end{proof}

\subsection{Derivation of the Robust Bellman Equation}
\label{sec:zs-ICSG-supp}

In the following we write $\sigma=(\sigma_1,\sigma_2)$. 
By \Cref{def:adv-robust-determinacy-optimality-value} of the robust value:
\begingroup
\allowdisplaybreaks
\begin{align*}
    V(s) &= \sup_{\sigma_1 \in \Sigma_1} \inf_{\sigma_2\in \Sigma_2} \inf_{P\in \Punc}{\ev_s^{\sigma,P}[X]} \\
    &= \sup_{\sigma_1 \in \distr(A_1(s))} \inf_{\sigma_2\in \distr(A_2(s))} \inf_{P\in \Punc}{\ev_s^{\sigma,P}[X]} 
    &&\text{\small(by \AbbrCref{lem:infh-memoryless-optimal-strategies})}\\
    &= \sup_{\sigma_1} \inf_{\sigma_2}  \inf_{P}{ 
        \sum_{a\in A}{
        \sigma_{sa} \left( r_{sa} + \sum_{s'\in S}{P_{sas'} \cdot \ev_{s'}^{\sigma,P}[X]} \right)
     }}
     &&\text{\small($X$ is additive)}\\
     &= \sup_{\sigma_1} \inf_{\sigma_2}  \inf_{P} { \left\{
        \sum_{a}{ \sigma_{sa} r_{sa}} +
        {\sum_{a}{ \sigma_{sa} \sum_{s'}{P_{sas'} \cdot \ev_{s'}^{\sigma,P}[X]}}}
        \right\}
     }\\
     &= \sup_{\sigma_1} \inf_{\sigma_2} { 
        \left\{
        \sum_{a}{
        \sigma_{sa} r_{sa} + 
        \inf_{P\in \Punc}{\sum_{a,s'}{\sigma_{sa} {P_{sas'} \cdot \ev_{s'}^{\sigma,P}[X]}}}
     } \right\}} \\
    &= \sup_{\sigma_1} \inf_{\sigma_2} { 
        \left\{
        r_{s}^{\sigma} + 
        \sum_{a}{\sigma_{sa} \inf_{P_{sa}\in \Punc_{sa}} \sum_{s'}{ { P_{sas'} \cdot V(s')}}}
      \right\}}
      &&\text{\small(by \AbbrCref{lem:infh-nature-inner-problem})}
\end{align*}
\endgroup
which matches \Cref{eq:infh-adv-bellman}.

\subsection{Extended Player-first Adversarial Expansion}
\label{sec:infh-adv-supp}

For generality we first work with an infinite-action version of $\adv{\csg}$ in the following proofs, i.e., keeping all else the same in \Cref{def:infh-adv-resolution} except that $\adv{A_2}=A_2\union \bigcup_{s\in S,a\in A}{\Punc_{sa}}$ and $\Delta_2(\adv{s})=\Punc_{sa}$ if $\adv{s}=(s,a)\in S'$. We later reduce it to a \emph{finite-action} CSG in \Cref{sec:infh-fin-act-resolution}.

To distinguish nature's states $S'$ from players' states $S$ in an $\adv{\csg}$-path $\advP{\pi}$, we index $S'$-states with primed indices $\nats' := \{0', 1', \ldots\}$, such that index $j':= 2j+1$ for $j\in \nats$. This ensures $\advP{\pi}$ preserves the state and transition indices in $\pi$ (\Cref{prop:infh-path-index-preserved}).

\begin{definition}[Adversarial Expansion cont.]
\label{def:infh-adv-components}
Consider a $\csg$-\emph{profile} $\sigma=(\sigma_1,\sigma_2)$ under nature's selection $P\in \Punc$. Define $\advP{\sigma} := (\adv{\sigma_1}, \advP{\sigma_2})$ such that for any $\adv{s} \in S$: 
if $\adv{s}=s \in S$, then $\adv{\sigma_1}(\adv{s},a_1)=\sigma_1(s,a_1)$ and $\advP{\sigma_2}(\adv{s},a_2)=\sigma_2(s,a_2)$; 
else if $\adv{s}=(s,a) \in S'$ then $\adv{\sigma_1}(\adv{s},a_1)=\ind[a_1=\idle]$ and $\advP{\sigma_2}(\adv{s},a_2)= \ind \left[ a_2 = P_{sa} \right]$. 

We denote by $\adv{\Sigma}$ the set of admissible memoryless strategy profiles in $\adv{\csg}$.\\

Consider an infinite $\csg$-\emph{path}\footnote{For ICSGs, we use subscripts to denote player indices and superscripts to indicate time-steps in path notation.} $\pi = s_0 \xrightarrow{a^0} s_1 \xrightarrow{a^1} s_2 \rightarrow \ldots$ under nature's choice of $P \in \Punc$, and let $P_j = P_{s_j a^j}$. Define:
\begin{align}
\label{eq:infh-advexpand-pi}
\advP{\pi} 
:= s_0 \xrightarrow{a^0} (s_0,a^0) \xrightarrow{(\idle, P_0)} s_1 \xrightarrow{a^1} (s_1,a^1) \xrightarrow{(\idle,P_1)} s_2 \rightarrow \ldots
\end{align}
The converse mapping also holds. 
The same mapping applies to finite paths, with both $\pi$ and $\advP{\pi}$ ending in $last(\pi)\in S$.\\

Consider the \emph{reward structure} $r=(r_A,r_S)$ for $\csg$. Define $\adv{r} := (\adv{r_A}, \adv{r_S})$ for $\adv{\csg}$, where $\adv{r_A}(s,a) = r_A(s,a) \cdot \ind[s\in S]$ and $\adv{r_S}(s) = r_S(s) \cdot \ind[s\in S]$.
Correspondingly, we represent the reward associated with the $j$th step in $\adv{\pi}$ as 
\[
\adv{r}(\adv{\pi_j}) := \adv{r_S}(\adv{\pi}(j)) + \adv{r_A}(\adv{\pi}(j),\pi[j]) + \adv{r_S}(\adv{\pi}(j')) + \adv{r_A}(\adv{\pi}(j'),\adv{\pi}[j']).
\]

Finally, consider a player's \emph{objective} $X$. Define $\adv{X}$ as:
\begin{itemize}[nosep]
    \item Probabilistic reachability: 
    $\adv{X}(\adv{\pi}) 
    = \ind \left[\exists{j \in \nats}.{\left( \adv{\pi}(j) \in T \right)} \right]$;
    \item Expected reward reachability: 
    $\adv{X}(\adv{\pi}) = \sum_{i=0}^{k_{\min}-1}{\adv{r}(\adv{\pi_i})}$ if $\exists{j \in \nats}.\,{\adv{\pi}(j)\in T}$ and $\infty$ otherwise, where the value of $k_{\min}$ and $T$ are kept the same as in $X$.
\end{itemize}
\end{definition}

Since we are primarily concerned with the worst-case scenario, henceforth we write $\adv{\sigma} := \advPstar{\sigma}$ and $\adv{\pi} := \advPstar{\pi}$ unless otherwise stated.
In the following, let $IPaths_{\csg,s}^P$ and $FPaths_{\csg,s}^P$ denote the sets of infinite and finite paths, respectively, in the ICSG $\csg$ from state $s$ under uncertainty resolution $P\in \Punc$. We write $Paths_{\csg,s}^P := IPaths_{\csg,s}^P \union FPaths_{\csg,s}^P$ for the set of all such paths.

Using this definition, along with our use of primed indices for $S'$-states, leads to the following results:

\begin{proposition}[Path Bijection]
\label{prop:infh-path-bijection}
$Paths_{\adv{\csg},s}^{P} = \{ \advP{\pi} \mid \pi \in \gs{Paths}^{P} \}$ and 
$\gs{Paths}^{P} = \{ \pi \mid \advP{\pi} \in Paths_{\adv{\csg},s}^{P} \}$
\reforig{lem:infh-util-preserving-strategy-eq}
\end{proposition}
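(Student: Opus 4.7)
The plan is to prove the bijection by unpacking the two definitions: \Cref{def:infh-adv-resolution} of the adversarial expansion and \Cref{eq:infh-advexpand-pi} for the map $\pi\mapsto\advP{\pi}$. The proof splits naturally into forward containment, reverse containment, and bijectivity; I would do the two containments as the first and second steps, and remark on bijectivity as the third.

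For forward containment, I would take any $\pi=s_0\xrightarrow{a^0}s_1\xrightarrow{a^1}\ldots\in\gs{Paths}^{P}$ and verify that every transition of $\advP{\pi}$ is admissible in $\adv{\csg}$. The $S$-to-$S'$ steps $s_j\xrightarrow{a^j}(s_j,a^j)$ use $a^j\in\adv{\Delta}(s_j)=\Delta(s_j)$, which is inherited from validity of $\pi$, together with $\adv{P}(s_j,a^j,(s_j,a^j))=1$ by construction. The $S'$-to-$S$ steps $(s_j,a^j)\xrightarrow{(\idle,P_{s_ja^j})}s_{j+1}$ are admissible since $P_{s_ja^j}\in\Punc_{s_ja^j}=\adv{\Delta}((s_j,a^j))$ (here we invoke the infinite-action variant of $\adv{\csg}$ introduced at the start of \Cref{sec:infh-adv-supp}) and $\adv{P}((s_j,a^j),(\idle,P_{s_ja^j}),s_{j+1})=P_{s_ja^js_{j+1}}>0$, again by validity of $\pi$ under $P$.

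For reverse containment, the key observation is structural: $\adv{\Delta}$ forces a strict alternation between $S$-states and $S'$-states, since only the original player actions are enabled at $S$-states while only nature's choices in $\Punc_{sa}$ are enabled at the unique successor $(s,a)\in S'$. Hence any $\adv{\pi}\in Paths_{\adv{\csg},s}^{P}$ starting in $s\in S$ must have the shape $s_0\xrightarrow{a^0}(s_0,a^0)\xrightarrow{(\idle,P_0)}s_1\xrightarrow{a^1}\ldots$ with $a^j\in A(s_j)$ and $P_j=P_{s_ja^j}$ (the latter because under nature's fixed choice $P\in\Punc$ the memoryless strategy $\advP{\sigma_2}$ of \Cref{def:infh-adv-components} picks $P_{s_ja^j}$ deterministically at $(s_j,a^j)$). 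Collapsing the $S'$-states yields $\pi=s_0\xrightarrow{a^0}s_1\xrightarrow{a^1}\ldots$, and admissibility of $\adv{\pi}$ directly gives $P_{s_ja^js_{j+1}}>0$, so $\pi\in\gs{Paths}^{P}$ and $\advP{\pi}=\adv{\pi}$ by construction.

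Bijectivity then comes essentially for free: the map $\pi\mapsto\advP{\pi}$ is injective because the $s_j$ and $a^j$ appear literally in $\advP{\pi}$ at the even and odd indices respectively, and surjective by the previous paragraph. I do not expect any genuine obstacle here—the argument is almost definitional—so the only care needed is to handle the finite and infinite cases uniformly, which is ensured by the convention in \Cref{def:infh-adv-components} that finite paths end in a state of $S$ on both sides of the correspondence.
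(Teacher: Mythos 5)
Your proof is correct and follows exactly the route the paper intends: the paper states this proposition without an explicit proof, treating it as immediate from \Cref{def:infh-adv-resolution} and \Cref{def:infh-adv-components}, and your forward/reverse containment argument (including the correct appeal to the infinite-action variant of $\adv{\csg}$, the alternation of $S$- and $S'$-states, and the convention that finite paths end in $S$) is precisely the definitional unpacking that justifies it.
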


\begin{proposition}[Path index preservation]
\label{prop:infh-path-index-preserved}
    For any $\csg$-path $\pi\in Paths_{\adv{\csg},s}^{P}$, 
    $
    \forall{j\in\nats}. \ {\advP{\pi}(j) = \pi(j) \land \advP{\pi}[j] = (\pi[j], \idle)}.
    $
\end{proposition}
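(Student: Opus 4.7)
The plan is to establish both equalities by direct unfolding of the definition of $\advP{\pi}$ in \Cref{eq:infh-advexpand-pi}, combined with the primed/unprimed indexing convention introduced just above it. Since the statement is essentially a re-expression of how $\advP{\pi}$ is constructed, no additional machinery is required beyond making the indexing convention precise.

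First I would pin down the indexing. In the flat sequence of $\advP{\pi}$, the states alternate between $S$-states (at flat positions $0,2,4,\dots$) and $S'$-states (at flat positions $1,3,5,\dots$). By the convention $j' := 2j+1$, the unprimed index $j$ refers to the $j$-th $S$-state, which sits at flat position $2j$, and $j'$ refers to the $j$-th $S'$-state. Correspondingly, the $j$-th transition $\advP{\pi}[j]$ (unprimed) is the one emanating from $\advP{\pi}(j)$, i.e.\ the outgoing action at the $j$-th $S$-state.

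Next, I would read off both claims directly from \Cref{eq:infh-advexpand-pi}, which writes
\[
\advP{\pi} = s_0 \xrightarrow{a^0} (s_0,a^0) \xrightarrow{(\idle,P_0)} s_1 \xrightarrow{a^1} (s_1,a^1) \xrightarrow{(\idle,P_1)} s_2 \to \dots.
\]
By inspection, the $j$-th unprimed state is $s_j$, and by definition of $\pi$ this equals $\pi(j)$, giving the first equality. For the transition claim, the $j$-th action out of an $S$-state in $\advP{\pi}$ is $a^j$ in the underlying two player action slots of $\csg$, while the nature component is idle (since nature resolves uncertainty only at $S'$-states). Written as an ordered tuple this is exactly $(a^j,\idle) = (\pi[j],\idle)$, yielding the second equality.

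I do not expect any genuine obstacle: the proposition is a bookkeeping lemma that records the alignment between $\pi$ and its expansion. If a fully formal argument is desired, a trivial induction on $j$ suffices, with the base case $j=0$ giving $\advP{\pi}(0)=s_0=\pi(0)$ and $\advP{\pi}[0]=(a^0,\idle)$ directly from the construction, and the step case advancing two positions in the flat sequence of $\advP{\pi}$ to move from unprimed index $j$ to $j+1$. The only place where care is needed is to keep the primed/unprimed indexing bookkeeping consistent, but this is precisely what the convention $j'=2j+1$ was introduced to handle.
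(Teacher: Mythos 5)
Your proposal is correct and matches the paper's treatment: the paper states this proposition without a separate proof, taking it to follow immediately from the construction of $\advP{\pi}$ in \Cref{def:infh-adv-components} together with the primed-index convention $j' = 2j+1$, which is exactly the unfolding-plus-bookkeeping argument you give. Your reading of $(\pi[j],\idle)$ as the joint player action with an idle nature slot, and the optional induction on $j$ advancing two flat positions per step, are consistent with the intended argument.
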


\begin{proposition}[Path Probability Preservation]
\label{prop:infh-prob-preserved}
    Given a $\csg$-profile $\sigma$ and nature's selection $P\in \Punc$, for every finite path $\pi \in \stratgs{FPaths}$, we have that 
    $\prob^\sigma(\pi) = \prob^{\advP{\sigma}}(\advP{\pi})$. 
    \reforig{lem:infh-util-preserving-strategy-eq}
\end{proposition}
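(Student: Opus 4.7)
The plan is to prove the proposition by induction on the length of the finite path $\pi \in \stratgs{FPaths}$, leveraging the two-step correspondence between a single $\csg$-transition and the pair of transitions it induces in $\adv{\csg}$. The base case is a length-zero path consisting only of the initial state $s$; both $\prob^\sigma(\pi)$ and $\prob^{\advP{\sigma}}(\advP{\pi})$ equal $1$ by convention, since $\advP{\pi} = s$ by \Cref{eq:infh-advexpand-pi}.

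For the inductive step, I would take a path $\pi = \pi' \xrightarrow{a^n} s_{n+1}$ of length $n+1$ extending a path $\pi'$ of length $n$ ending in $s_n = last(\pi')$. Using the standard factorisation of path probabilities in CSGs, I have $\prob^\sigma(\pi) = \prob^\sigma(\pi') \cdot \sigma_{s_n a^n} \cdot P_{s_n a^n s_{n+1}}$ with $\sigma_{s_n a^n} = \sigma_1(s_n,a^n_1)\sigma_2(s_n,a^n_2)$. On the $\adv{\csg}$ side, by \Cref{def:infh-adv-resolution} and \Cref{eq:infh-advexpand-pi}, the path $\advP{\pi}$ extends $\advP{\pi'}$ by the two steps $s_n \xrightarrow{(a^n,\idle)} (s_n,a^n) \xrightarrow{(\idle,P_{s_n a^n})} s_{n+1}$. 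Applying the definitions of $\advP{\sigma}$ and $\adv{P}$ from \Cref{def:infh-adv-components,def:infh-adv-resolution} to each of these factors, the contribution of the first step is $\sigma_1(s_n,a^n_1)\sigma_2(s_n,a^n_2) \cdot 1 = \sigma_{s_n a^n}$, and the contribution of the second step is $1 \cdot 1 \cdot P_{s_n a^n s_{n+1}}$. Combining these with the inductive hypothesis $\prob^\sigma(\pi') = \prob^{\advP{\sigma}}(\advP{\pi'})$ yields $\prob^\sigma(\pi) = \prob^{\advP{\sigma}}(\advP{\pi})$.

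No genuine obstacle is expected, since the content of the proposition reduces to a bookkeeping verification that the single-step probability $\sigma_{sa}P_{sas'}$ in $\csg$ is exactly recovered as the product of the two deterministic-and-stochastic steps in $\adv{\csg}$. The only care needed is in unfolding $\advP{\sigma_1}$ and $\advP{\sigma_2}$ at the auxiliary $S'$-state, where \Cref{def:infh-adv-components} forces both to be Dirac on $\idle$ and on nature's choice $P_{sa}$ respectively, so that all multiplicative contributions from the intermediate step collapse to $1$. The primed indexing convention from \Cref{prop:infh-path-index-preserved} ensures that the bijection between factors in $\prob^\sigma(\pi)$ and $\prob^{\advP{\sigma}}(\advP{\pi})$ is well-defined and step-aligned, so the induction closes without further subtlety.
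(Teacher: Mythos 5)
Your proposal is correct and follows essentially the same argument as the paper: the paper writes out the full product $\prob^{\advP{\sigma}}(\advP{\pi})$ over all $m$ two-step blocks at once and collapses the Dirac factors at the auxiliary $S'$-states to $1$, which is exactly the content of your inductive step unrolled. The induction is merely a repackaging of the same telescoping product, with the same key observation that each $\csg$-transition contributes $\sigma_{sa}\cdot 1\cdot P_{sas'}$ via its two-step expansion.
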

\begin{proof}
\label{prf:infh-prob-preserved}
First note that in a standard CSG, a profile $\sigma$ induces a probability function $\prob^\sigma: \stratgs{FPaths} \rightarrow [0,1]$ as follows:
\begin{equation}
\label{eq:csg-path-prob}
\prob^\sigma(\pi) := \prod_{j=0}^{|\pi|-1}{\left[ \left( \prod_{i=1}^n{\sigma_i (\pi[0\ldots j])(\pi[j]_i)} \right)
\cdot P(\pi(j),\pi[j],\pi(j+1)) 
\right]}
\end{equation}
where $|\pi|$ is the number of transitions and $\pi[j]_i$ is player $i$'s action in the $j$th transition in $\pi$.

Consider a path $\pi= s_0 \xrightarrow{a^0} (s_0,a^0) \xrightarrow{(\idle,P_0)} s_1 \xrightarrow{a^1} (s_1,a^1) \xrightarrow{(\idle,P_1)} \ldots \allowbreak (s_{m-1},a^{m-1}) \xrightarrow{(\idle,P_{m-1})} s_m$ where $a^j = (a_1^j,a_2^j)$. 
The probability of $\advP{\pi}$ under profile $\advP{\sigma}$ is given by:
\begin{align*}
    \prob^{\advP{\sigma}}(\advP{\pi})
    &= \prod_{j=0}^{m-1}{ \left\{
            \begin{aligned}
                &\adv{\sigma_1} (s_j,a_1^j) 
                \advP{\sigma_2} (s_j,a_2^j) \\
                &\cdot \adv{\sigma_1} ((s_j,a^j),\idle)
                \advP{\sigma_2} ((s_j,a^j),P_j)
                \cdot P_j(s_{j+1})
            \end{aligned}
                \right\} } \\
    &= 
    \prod_{j=0}^{m-1}{ 
                \adv{\sigma_1} (s_j,a_1^j) 
                \cdot \advP{\sigma_2} (s_j,a_2^j) 
                \cdot P_{s_j a^j}(s_{j+1}) } \\
    &= 
    \prod_{j=0}^{m-1}{ 
                \sigma_1(s_j,a_1^j) 
                \cdot \sigma_2 (s_j,a_2^j) 
                \cdot P_{s_j a^j s_{j+1}} }
    \qquad \text{\small(by \AbbrCref{def:infh-adv-components} of $\advP{\sigma}$)} \\
    &= \prob^\sigma(\pi) 
\end{align*}
\end{proof}
This matches \Cref{eq:csg-path-prob}, with $\sigma$ being memoryless.

\begin{proposition}[Reward preservation]
\label{prop:infh-rew-preserved}
For any (finite or infinite) path $\pi$ and nature's choice of $P\in\Punc$, the adversarial expansion preserves per-step rewards, i.e., $\adv{r}(\advP{\pi_j}) = r(\pi_j)$ for any step $j$.
\end{proposition}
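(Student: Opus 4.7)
The plan is to decompose $\adv{r}(\advP{\pi_j})$ into its four constituent reward contributions, as given by its definition in \Cref{def:infh-adv-components}, namely the state and action rewards at the primary $S$-state $\advP{\pi}(j)$ and at the intermediate $S'$-state $\advP{\pi}(j')$. Each contribution can then be evaluated separately using the fact that $\adv{r_A}$ and $\adv{r_S}$ vanish on $S'$-states (by the indicator $\ind[s \in S]$ in their definitions) and coincide with $r_A$ and $r_S$ on $S$-states.

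First, I would invoke \Cref{prop:infh-path-index-preserved} to identify $\advP{\pi}(j) = \pi(j) \in S$ and $\advP{\pi}[j] = (\pi[j], \idle)$. Since only player-1 and player-2 component actions carry rewards in $\csg$, and the $\idle$ padding in the coalition action does not introduce any additional reward in $\adv{\csg}$ (as $\adv{r_A}$ is defined over $\adv{A}$ but with zero contribution from idle components on $S$-states), we obtain $\adv{r_S}(\advP{\pi}(j)) = r_S(\pi(j))$ and $\adv{r_A}(\advP{\pi}(j), \advP{\pi}[j]) = r_A(\pi(j), \pi[j])$.

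Second, by construction of $\advP{\pi}$ in \Cref{eq:infh-advexpand-pi}, the state $\advP{\pi}(j') \in S'$ is an auxiliary $(s,a)$-state, and the transition $\advP{\pi}[j'] = (\idle, P_{s_j a^j})$ is purely nature's resolution step. By the definition $\adv{r_S}(s) = r_S(s)\cdot\ind[s\in S]$ and $\adv{r_A}(s,a) = r_A(s,a)\cdot\ind[s\in S]$, both terms at the $S'$-state evaluate to zero.

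Summing the four contributions then yields $\adv{r}(\advP{\pi_j}) = r_S(\pi(j)) + r_A(\pi(j), \pi[j]) + 0 + 0 = r(\pi_j)$, as required. No real obstacle is anticipated: the result is essentially a bookkeeping consequence of the chosen indicator-based definition of $\adv{r}$ together with the index-preserving structure of the adversarial expansion. The only subtlety worth flagging explicitly is ensuring that the convention for how $\idle$-padded joint actions contribute to $\adv{r_A}$ is interpreted consistently, so that the expansion does not silently introduce spurious action rewards on either $S$- or $S'$-states.
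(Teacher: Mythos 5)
Your proposal is correct and follows essentially the same route as the paper's proof: expand $\adv{r}(\advP{\pi_j})$ into its four constituent terms, observe that the two terms at the auxiliary $S'$-state vanish by the indicator $\ind[s\in S]$ in the definition of $\adv{r}$, and identify the remaining $S$-state terms with $r_S(\pi(j))+r_A(\pi(j),\pi[j])$ via the index-preserving structure of $\advP{\pi}$. Your explicit appeal to \Cref{prop:infh-path-index-preserved} and your remark about the $\idle$-padded action argument are harmless refinements of the same bookkeeping argument the paper gives.
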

\begin{proof}
For any step $j$:
\begin{align*}
\adv{r}(\advP{\pi_j}) 
&= \adv{r_S}(\advP{\pi}(j)) + \adv{r_A}(\advP{\pi}(j), \pi[j]) \\
    &\qquad + \adv{r_S}(\advP{\pi}(j')) + \adv{r_A}(\advP{\pi}(j'), \advP{\pi}[j']) \\
&= r_S(s_j) + r_A(s_j, a^j) + r_S(s_j, a^j) + r_A((s_j, a^j), (\idle, P_j)) \\
&= r_S(s_j) + r_A(s_j, a^j)
= r_S(\pi(j)) + r_A(\pi(j), \pi[j]) 
= r(\pi_j)
\end{align*}
\end{proof}


\begin{proposition}[Objective Preservation]
\label{prop:infh-obj-preserved}
    For any (infinite or finite) $\csg$-path $\pi$ and nature's choice of $P\in \Punc$, we have that $X(\pi)=\adv{X}(\advP{\pi})$.
    \reforig{lem:infh-util-preserving-strategy-eq}
\end{proposition}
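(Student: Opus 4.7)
The plan is to do a case analysis on the two objective classes and reduce each case to the two preservation results already stated, namely \Cref{prop:infh-path-index-preserved} (path index preservation) and \Cref{prop:infh-rew-preserved} (per-step reward preservation). Throughout I would fix $\pi$ and $P\in\Punc$, let $\advP{\pi}$ be as in \Cref{eq:infh-advexpand-pi}, and exploit the observation that $T\subseteq S$, so no auxiliary $S'$-state can be a target.

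For probabilistic reachability, I would argue the equivalence $\exists j\in\nats.\,\advP{\pi}(j)\in T \iff \exists j\in\nats.\,\pi(j)\in T$. The forward direction follows from $\advP{\pi}(j)=\pi(j)$ at natural indices (\Cref{prop:infh-path-index-preserved}). For the reverse, I would note that while $\advP{\pi}$ also visits states at primed indices $j'=2j+1$, these are state-action pairs in $S'$, which are disjoint from $S\supseteq T$. Hence no ``extra'' target visits or missed ones occur, and indicator values coincide, giving $\adv{X}(\advP{\pi})=X(\pi)$.

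For reachability reward, I would first establish that the two notions of $k_{\min}$ agree: since target visits occur at exactly the same natural indices (by the above argument), the smallest $j$ with $\advP{\pi}(j)\in T$ equals the smallest $j$ with $\pi(j)\in T$. Thus the two definitions either both diverge (giving $\infty=\infty$) or both converge to the same $k_{\min}$. In the convergent case, I would expand
\[
\adv{X}(\advP{\pi}) \;=\; \sum_{i=0}^{k_{\min}-1} \adv{r}(\advP{\pi_i}) \;=\; \sum_{i=0}^{k_{\min}-1} r(\pi_i) \;=\; X(\pi),
\]
where the middle equality is exactly \Cref{prop:infh-rew-preserved}, applied step by step. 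It is important here that $\adv{r}$ is defined to assign zero reward at $S'$-states, so that the extra $(s,a)$-transitions inserted by the expansion contribute nothing and the per-step reward $\adv{r}(\advP{\pi_i})$ (which bundles both the $S$- and $S'$-component of step $i$) reduces cleanly to $r(\pi_i)$.

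I do not expect any serious obstacle: both objectives are defined pointwise on indices at which $\advP{\pi}$ agrees with $\pi$ on $S$-states, and the only subtle point is to check that the auxiliary $S'$-visits neither create spurious target hits nor contribute extraneous rewards. Both facts are immediate from $T\subseteq S$ and the zero-reward convention for $S'$-states in \Cref{def:infh-adv-components}, so the proof should be a short unfolding of definitions invoking the two preceding propositions.
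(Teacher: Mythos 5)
Your proposal is correct and follows essentially the same route as the paper's proof: a case analysis over the two objective classes, reducing probabilistic reachability to path-index preservation (\Cref{prop:infh-path-index-preserved}) and reachability reward to per-step reward preservation (\Cref{prop:infh-rew-preserved}). Your additional remarks --- that $S'$-states cannot lie in $T\subseteq S$ and that the two notions of $k_{\min}$ coincide --- are points the paper leaves implicit (its definition of $\adv{X}$ already quantifies only over unprimed indices $j\in\nats$), so they are harmless elaborations rather than a different argument.
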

\begin{proof}
\label{prf:infh-obj-preserved}
Using \Cref{prop:infh-path-index-preserved,prop:infh-rew-preserved} we have:
\begin{align*}
    \text{Prob. reachability:} \ 
    \adv{X}(\advP{\pi}) 
    &= \ind[\exists{j \in \nats}.{\advP{\pi}(j)\in T}] \\
    &= \ind[\exists{j \in \nats}.{\pi(j)\in T}] 
    = X(\pi)\\
    \text{Rew. reachability:} \ 
    \adv{X}(\advP{\pi}) 
    &= \begin{cases}
    \sum_{i=0}^{k_{\min}-1}{\adv{r}(\advP{\pi_i})} &\ift \exists{j \in \nats}.{\advP{\pi}(j)\in T},\\
    \infty & \otherwiset
    \end{cases}\\
    &= \begin{cases}
    \sum_{i=0}^{k_{\min}-1}{
        r(\pi_i)} &\ift \exists{j \in \nats}.{\pi(j)\in T},\\
    \infty & \otherwiset
    \end{cases}
    = X(\pi).
\end{align*}
\end{proof}

\subsection{Value Preservation under Adversarial Expansion}
\label{sec:value-preservation-proofs}



\begin{corollary}
\label{cor:infh-single-side-val-equiv}
    For any starting state $s\in S$:
    \begin{align}
    \label{eq:infh-supinfinf-eq-supinf}
    \sup_{\sigma_1\in \Sigma_1}{\inf_{\sigma_2\in \Sigma_2}{\inf_{P\in \Punc}{\gs{\ev}^{(\sigma_1, \sigma_2), P}}[X]}}
    &= 
    \sup_{\adv{\sigma_1} \in \adv{\Sigma_1}}{\inf_{\adv{\sigma_2}\in \adv{\Sigma_2}}{\ev_{\adv{\csg},s}^{\adv{\sigma_1}, \adv{\sigma_2}}[X]}}\\
    \inf_{\sigma_2\in \Sigma_2}{\sup_{\sigma_1\in \Sigma_1}{\inf_{P\in \Punc}{\gs{\ev}^{(\sigma_1, \sigma_2), P}}[\adv{X}]}}
    &=
    \inf_{\adv{\sigma_2}\in \adv{\Sigma_2}}{\sup_{\adv{\sigma_1}\in \adv{\Sigma_1}}{\ev_{\adv{\csg},s}^{\adv{\sigma_1}, \adv{\sigma_2}}[\adv{X}]}}
    \end{align}
    \label{eq:infh-infsupinf-eq-infsup}
\end{corollary}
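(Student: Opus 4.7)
The plan is to prove the two equalities by leveraging the utility-preserving strategy bijection of \Cref{lem:infh-util-preserving-strategy-eq} (together with \Cref{prop:infh-path-bijection,prop:infh-prob-preserved,prop:infh-obj-preserved}) and, for the second equation, the determinacy of the finitely-branching CSG $\adv{\csg}$.

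For the first equation, the key observation is that in $\adv{\csg}$ player~2 acts both at the $S$-states (playing the role of the original player~2) and at the auxiliary $S'$-states (playing the role of nature). Because the $S$- and $S'$-states are disjoint and $\adv{\Delta}$ assigns the two action sets on them independently, every memoryless $\adv{\sigma_2}\in\adv{\Sigma_2}$ decomposes uniquely as a pair $(\sigma_2, P)$ with $\sigma_2\in\Sigma_2$ and $P\in\Punc$ (using $(s,a)$-rectangularity to recover a single $P\in\Punc$ from nature's independent $S'$-state choices). Conversely, every $(\sigma_2,P)$ lifts to a unique $\adv{\sigma_2}=\advP{\sigma_2}$ via \Cref{def:infh-adv-components}, and player~1's strategies lift by $\adv{\sigma_1}(\adv{s},a_1)=\sigma_1(s,a_1)$ for $\adv{s}=s\in S$ and $\adv{\sigma_1}(\adv{s},\idle)=1$ for $\adv{s}\in S'$. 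Under this bijection, \Cref{lem:infh-util-preserving-strategy-eq} gives $\gs{\ev}^{(\sigma_1,\sigma_2),P}[X] = \ev_{\adv{\csg},s}^{(\adv{\sigma_1},\adv{\sigma_2})}[\adv{X}]$. Hence
\[
\inf_{\sigma_2\in\Sigma_2}\inf_{P\in\Punc}{\gs{\ev}^{(\sigma_1,\sigma_2),P}[X]}
\;=\;\inf_{(\sigma_2,P)}{\gs{\ev}^{(\sigma_1,\sigma_2),P}[X]}
\;=\;\inf_{\adv{\sigma_2}\in\adv{\Sigma_2}}{\ev_{\adv{\csg},s}^{(\adv{\sigma_1},\adv{\sigma_2})}[\adv{X}]},
\]
and applying $\sup_{\sigma_1}=\sup_{\adv{\sigma_1}}$ to both sides yields \Cref{eq:infh-supinfinf-eq-supinf}.

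For the second equation, the obstacle is that a naive application of the bijection produces $\inf_{\adv{\sigma_2}}\sup_{\adv{\sigma_1}}\ev[\adv{X}] = \inf_{(\sigma_2,P)}\sup_{\sigma_1}\gs{\ev}[X] = \inf_{\sigma_2}\inf_{P}\sup_{\sigma_1}\gs{\ev}[X]$, which differs from the LHS by a swap of $\inf_P$ and $\sup_{\sigma_1}$. To close the gap, I would sandwich the LHS between two expressions both equal to the RHS. The standard min-max inequality gives
\[
\inf_{\sigma_2}\sup_{\sigma_1}\inf_{P}\gs{\ev}^{(\sigma_1,\sigma_2),P}[X]
\;\leq\;\inf_{\sigma_2}\inf_{P}\sup_{\sigma_1}\gs{\ev}^{(\sigma_1,\sigma_2),P}[X]
\;=\;\inf_{\adv{\sigma_2}}\sup_{\adv{\sigma_1}}\ev_{\adv{\csg},s}^{(\adv{\sigma_1},\adv{\sigma_2})}[\adv{X}],
\]
using the bijection argument above for the final equality. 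For the reverse inequality, I appeal to determinacy of $\adv{\csg}$: since $\adv{\csg}$ is a finite-state, finitely-branching zero-sum CSG with the objectives considered, Martin's theorem~\cite{martin1998determinacy} yields $\sup_{\adv{\sigma_1}}\inf_{\adv{\sigma_2}}\ev[\adv{X}] = \inf_{\adv{\sigma_2}}\sup_{\adv{\sigma_1}}\ev[\adv{X}]$. Combining this with \Cref{eq:infh-supinfinf-eq-supinf} and the trivial inequality $\sup_{\sigma_1}\inf_{\sigma_2}\inf_P\gs{\ev}[X] \leq \inf_{\sigma_2}\sup_{\sigma_1}\inf_P\gs{\ev}[X]$ gives
\[
\inf_{\adv{\sigma_2}}\sup_{\adv{\sigma_1}}\ev_{\adv{\csg},s}^{(\adv{\sigma_1},\adv{\sigma_2})}[\adv{X}]
\;=\;\sup_{\sigma_1}\inf_{\sigma_2}\inf_{P}\gs{\ev}^{(\sigma_1,\sigma_2),P}[X]
\;\leq\;\inf_{\sigma_2}\sup_{\sigma_1}\inf_{P}\gs{\ev}^{(\sigma_1,\sigma_2),P}[X],
\]
which together with the first chain establishes \Cref{eq:infh-infsupinf-eq-infsup}.

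The main subtlety is ensuring that the decomposition $\adv{\sigma_2}\leftrightarrow(\sigma_2,P)$ is genuinely a bijection onto \emph{memoryless} strategies and a \emph{single} $P\in\Punc$; this uses $(s,a)$-rectangularity together with \Cref{lem:infh-memoryless-optimal-strategies}, which justifies restricting attention to memoryless $\adv{\sigma_2}$ without loss in either infimum. Everything else is then a routine combination of the utility-preservation lemma and Martin's determinacy result.
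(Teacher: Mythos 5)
Your proof of the first equality follows essentially the same route as the paper: fix $\sigma_1\leftrightarrow\adv{\sigma_1}$, identify the inner $\inf_{\sigma_2}\inf_P$ with $\inf_{\adv{\sigma_2}}$ via the utility-preserving correspondence $\adv{\sigma_2}\leftrightarrow(\sigma_2,P)$, then apply the outer supremum. (One nitpick: the decomposition of a randomised memoryless $\adv{\sigma_2}$ is not literally \emph{unique} as a pair $(\sigma_2,P)$ --- a mixture over nature's actions at an $S'$-state only determines its barycentre in $\Punc_{sa}$ --- but the induced $P$ is unique and utility depends only on it, so the surjectivity you actually use is fine.) Where you genuinely diverge is the second equality. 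The paper disposes of it with ``the proof is quite similar,'' but, as you correctly observe, transposing the argument gives $\inf_{\adv{\sigma_2}}\sup_{\adv{\sigma_1}}\ev[\adv{X}]=\inf_{\sigma_2}\inf_P\sup_{\sigma_1}\gs{\ev}[X]$, which differs from the left-hand side $\inf_{\sigma_2}\sup_{\sigma_1}\inf_P\gs{\ev}[X]$ by a nontrivial exchange of $\inf_P$ and $\sup_{\sigma_1}$; only the inequality $\sup_{\sigma_1}\inf_P\le\inf_P\sup_{\sigma_1}$ is free. Your sandwich --- bounding the left-hand side above by $\inf_{\sigma_2}\inf_P\sup_{\sigma_1}=\text{RHS}$ and below by $\sup_{\sigma_1}\inf_{\sigma_2}\inf_P$, which equals the RHS via \Cref{eq:infh-supinfinf-eq-supinf} and Martin determinacy of the finite, finitely-branching CSG $\adv{\csg}$ --- closes this gap cleanly and without circularity, since determinacy of $\adv{\csg}$ is obtained externally from \cite{martin1998determinacy} (exactly as the paper itself does in the proof of \Cref{thm:infh-nature-or-player-first-val-equiv}). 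The cost is that your proof of the second identity imports determinacy of $\adv{\csg}$, whereas the paper presents the corollary as a purely bijective consequence of \Cref{lem:infh-util-preserving-strategy-eq}; the benefit is that your version actually records the step the paper's one-line remark leaves implicit.
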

\begin{proof}
Without loss of generality we only prove the first equality \eqref{eq:infh-supinfinf-eq-supinf}, as the proof for \eqref{eq:infh-infsupinf-eq-infsup} is quite similar. 
We first show that for any $\sigma_1 \in \Sigma_1$:
\begin{equation}
\label{eq:inner-inf-equals}
\inf_{\sigma_2\in \Sigma_2}{\inf_{P\in \Punc}{\gs{\ev}^{(\sigma_1, \sigma_2), P}}[X]} = 
\inf_{\adv{\sigma_2}\in \adv{\Sigma_2}}{\ev_{\adv{\csg},s}^{\adv{\sigma_1}, \adv{\sigma_2}}[\adv{X}]}
\end{equation}
For brevity let 
\[
f(\sigma_2 \mid \sigma_1) := \inf_{P\in \Punc}{\gs{\ev}^{(\sigma_1, \sigma_2), P}[X]}
\quad \text{and} \quad
\adv{f}(\adv{\sigma_2} \mid \adv{\sigma_1}) := \ev_{\adv{\csg},s}^{\adv{\sigma_1}, \adv{\sigma_2}}[\adv{X}]
\]
and notice that $f(\sigma_2 \mid \sigma_1) = \adv{f}(\adv{\sigma_2} \mid \adv{\sigma_1})$ by \Cref{lem:infh-util-preserving-strategy-eq}.

Fix an arbitrary $\sigma_1 \in \Sigma_1$, and consider the corresponding $\sigma_2^* \in\Sigma_2$ that minimises $f(\sigma_2 \mid \sigma_1)$.
By \Cref*{lem:infh-util-preserving-strategy-eq}, there exists an equivalent $\adv{\csg}$-strategy $(\adv{\sigma_1}, \adv{{\sigma_2^*}})$ that preserves player 1's expected utility, i.e., $f(\sigma_2^*\mid \sigma_1) = \adv{f}(\adv{{\sigma_2^*}} \mid \adv{\sigma_1})$. 
Thus we have:
\[
\inf_{\sigma_2\in \Sigma_2}{f(\sigma_2 \mid \sigma_1)}
= 
f(\sigma_2^*\mid \sigma_1)
= \adv{f}(\adv{{\sigma_2^*}} \mid \adv{\sigma_1})
\geq  \inf_{\adv{\sigma_2}\in \adv{\Sigma_2}}{\adv{f}(\adv{\sigma_2} \mid \adv{\sigma_1})}.
\]

The other direction holds for the same reasoning. Therefore \eqref{eq:inner-inf-equals} holds for any $\sigma_1\in \Sigma_1$.
Further, by the strategy bijection between $\csg$ and $\adv{\csg}$ (\Cref*{lem:infh-util-preserving-strategy-eq}), this equality must hold for all $\sigma_1 \in \Sigma_1$ and $\adv{\sigma_1}\in \adv{\Sigma_1}$.

Next, we add on the outer $\sup$. 
Let 
\[
g(\sigma_1) := \inf_{\sigma_2\in \Sigma_2}{\inf_{P\in \Punc}{\gs{\ev}^{(\sigma_1, \sigma_2), P}}[X]}
\quad \text{and} \quad
\adv{g}(\adv{\sigma_1}) := \inf_{\adv{\sigma_2}\in \adv{\Sigma_2}}{\ev_{\adv{\csg},s}^{\adv{\sigma_1}, \adv{\sigma_2}}[\adv{X}]}.
\]
Consider $\sigma_1^*\in \Sigma_1$ which maximises $g(\sigma_1)$. 
Again by \Cref*{lem:infh-util-preserving-strategy-eq} there exists a $\adv{{\sigma_1^*}} \in \adv{\Sigma_1}$ such that $g(\sigma_1^*) = \adv{g}(\adv{{\sigma_1^*}})$. 
We thus have the following: 
\[
g(\sigma_1^*)
= \sup_{\sigma_1\in \Sigma_1}{g(\sigma_1)}
= \adv{g}(\adv{{\sigma_1^*}})
\leq  \sup_{\adv{\sigma_1} \in \adv{\Sigma_1}}{\adv{g}(\adv{\sigma_1})}.
\]
The other direction holds by the same reasoning.
Hence \Cref{eq:infh-supinfinf-eq-supinf} holds.

\end{proof}

\DetValPreservation*
\begin{proof}
\label{prf:infh-adv-value-preservation}
%
For the forward direction, by \Cref{def:adv-robust-determinacy-optimality-value} of the game value and \Cref{cor:infh-single-side-val-equiv}, for any $s\in S$ we have:
\begin{equation}
\label{eq:sup-inf-value-preservation}
\valg(s,X) 
= \sup_{\sigma_1\in \Sigma_1}{\inf_{\sigma_2\in \Sigma_2}{\inf_{P\in \Punc}{\gs{\ev}^{(\sigma_1, \sigma_2), P}}[X]}}
= \sup_{\adv{\sigma_1} \in \adv{\Sigma_1}}{\inf_{\adv{\sigma_2}\in \adv{\Sigma_2}}{\ev_{\adv{\csg},s}^{\adv{\sigma_1}, \adv{\sigma_2}}[\adv{X}]}}
\end{equation}
If $\csg$ is determined (see \Cref{def:adv-robust-determinacy-optimality-value}), the dual characterisation also holds. Hence:
\begin{align*}
\valg(s,X) 
&= \inf_{\sigma_2\in \Sigma_2}{\sup_{\sigma_1\in \Sigma_1}{\inf_{P\in \Punc}{\gs{\ev}^{(\sigma_1, \sigma_2), P}}[X]}} 
= \inf_{\adv{\sigma_2}\in \adv{\Sigma_2}}{\sup_{\adv{\sigma_1}\in \adv{\Sigma_1}}{\ev_{\adv{\csg},s}^{\adv{\sigma_1}, \adv{\sigma_2}}}[\adv{X}]} \\
&= \sup_{\adv{\sigma_1} \in \adv{\Sigma_1}}{\inf_{\adv{\sigma_2}\in \adv{\Sigma_2}}{\ev_{\adv{\csg},s}^{\adv{\sigma_1}, \adv{\sigma_2}}[\adv{X}]}} 
\qquad \text{\small(by \AbbrCref{eq:sup-inf-value-preservation})}\\
&= \vag(s,\adv{X}).
\end{align*}

The reverse direction holds by the same reasoning.
\end{proof}

\RNEnNEBijection*
\begin{proof}
\label{prf:infh-adv-rne-to-ne}
We prove both directions.

\startpara{RNE$_\csg$ $\Rightarrow$ NE$_{\adv{\csg}}$}
Consider an RNE $\sigma^* = (\sigma_1^*,\sigma_2^*) \in \Sigma$ in $\csg$, with the corresponding $\adv{\csg}$-profile $\adv{{\sigma^*}} \in \adv{\Sigma}$. 
Let $\adv{\sigma_1} \in \adv{\Sigma_1}$ be any deviation in $\adv{\csg}$ for player 1.
By \Cref{lem:infh-util-preserving-strategy-eq}, there exists $\sigma_1\in \Sigma$ in $\csg$ such that:
\begin{align*}
    \adv{u_1}(\adv{{\sigma^*}}) - \adv{u_1}((\adv{\sigma_1}, \adv{{\sigma_2^*}})) 
    &= \inf_{P\in \Punc}{u_1(\sigma_1^*, P)} 
    - \inf_{P\in \Punc}{u_1((\sigma_1,\sigma_2^*), P)} 
    \\
    &\geq \inf_{P\in \Punc}{[u_1(\sigma^*, P) - u_1((\sigma_1,\sigma_2^*),P)]} \\
    &\geq 0 
    \qquad \text{\small($\sigma^*$ is an RNE)}.
\end{align*}

Analogously, for any $\adv{\sigma_2}\in \adv{\Sigma_2}$:
\begin{align*}
    \adv{u_2}(\adv{{\sigma^*}}) - \adv{u_2}((\adv{{\sigma_1^*}}, \adv{\sigma_2})) 
    &= \sup_{P\in \Punc}{u_2(\sigma^*, P)} 
    - \sup_{P\in \Punc}{u_2((\sigma_1^*,\sigma_2), P)} 
    \\
    &\geq \inf_{P\in \Punc}{[u_2(\sigma^*, P) - u_2((\sigma_1^*,\sigma_2),P)]} \\
    &\geq 0 
    \qquad \text{\small($\sigma^*$ is an RNE)}.
\end{align*}
Hence $(\adv{{\sigma_1^*}}, \adv{{\sigma_2^*}})$ is an NE in $\adv{\csg}$. The value equivalence follows from \Cref{cor:infh-adv-value-preservation}.

\startpara{NE$_{\adv{\csg}}$ $\Rightarrow$ RNE$_\csg$}
Suppose for contradiction that $\adv{\sigma}$ is an $\adv{\csg}$-NE with value $\vag(s,\adv{X})$ equal to $\valg(s,X)$ (by \Cref{cor:infh-adv-value-preservation}), 
but $\sigma$ is not an $\csg$-RNE. 
Since $\sigma$ is not an RNE, by definition there exists a beneficial deviation $\sigma_i'\in \Sigma_i$ such that for:
\begin{itemize}[nosep]
    \item $i=1$: 
    There exists a $P\in \Punc$ such that $u_1(\sigma,P) < u_1((\sigma_1',\sigma_2),P)$, which implies that $\inf_{P\in \Punc}{[u_1(\sigma,P) - u_1((\sigma_1',\sigma_2),P)]} < 0$.
    This contradicts \Cref{def:adv-robust-determinacy-optimality-value} of the value of $\csg$, specifically 
    $\valg(s,X) = 
    \sup_{\sigma_1}{\inf_{\sigma_2}{\inf_{P}{\gs{\ev}^{(\sigma_1, \sigma_2), P}}[X]}}$, 
    as we could improve it by taking $\sigma_1',\sigma_2, P$.
    
    \item $i=2$: 
    There exists a $P\in \Punc$ such that $u_2(\sigma,P) < u_2((\sigma_1,\sigma_2'),P)$, which implies that $\inf_{P\in \Punc}{[u_2(\sigma,P) - u_2((\sigma_1,\sigma_2'),P)]} < 0$. Since $u_2(\sigma,P)=-u_1(\sigma,P)$, we have that $u_1(\sigma,P) > u_1((\sigma_1,\sigma_2'),P)$, which again contradicts the definition of the game's value, $\valg(s,X) = 
    \inf_{\sigma_2}{\sup_{\sigma_1}{\inf_{P}{\gs{\ev}^{(\sigma_1, \sigma_2), P}}[X]}}$, 
    as we could reduce it by taking $\sigma_2',\sigma_1,P$.
\end{itemize}
Hence $\sigma$ must be an RNE with a corresponding value equal to $\vag(s,\adv{X})$.
    
\end{proof}

\subsection{Finite-action Adversarial Expansion}
\label{sec:infh-fin-act-resolution}
The previous construction of $\adv{\csg}$ yields an \emph{infinite-action} CSG, as player 2's choices at the auxiliary $S'$-states may include uncountably many distributions, since $\adv{A_2}(s,a)=\Punc_{sa} = \bigtimes_{s' \in S}{[\Pcheck_{sas'}, \Phat_{sas'}]}$. 
However, since an ICSG is \emph{polytopic}, for each state-action pair $(s,a)$, $\Punc_{sa}$ forms a convex polytope which can be captured by randomising over its finite set of extrema $\verts[\Punc_{sa}]$. 
The following result stipulates that we can consider only the vertex distributions of $\Punc_{sa}$ without loss of optimality:

\begin{lemma}[Optimality of Extrema]
\label{lem:optimality-of-extrema}
    In a zero-sum ICSG, there exists a minimising transition function $P^* \in \Punc$ such that at any state $s$ under joint action $a$, $P^*_{sa}$ lies at a vertex of $\Punc_{sa}$.
\end{lemma}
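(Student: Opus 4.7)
The plan is to exploit two structural features of ICSGs --- the $(s,a)$-rectangularity of $\Punc$ and the polytopic form of each $\Punc_{sa}$ --- so that the global minimisation over $\Punc$ decouples into a family of independent linear programs, each attaining its minimum at a vertex. First, I would invoke \Cref{lem:infh-memoryless-optimal-strategies} to restrict attention to deterministic memoryless nature strategies, and then apply \Cref{lem:infh-nature-inner-problem} to reduce nature's global minimisation to independent per-pair choices $P^*_{sa} \in \arg\min_{P_{sa} \in \Punc_{sa}} \sum_{s' \in S} P_{sas'} V(s')$, where $V$ denotes the robust value satisfying the Bellman equation \eqref{eq:infh-adv-bellman}.

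Second, I would observe that each per-pair problem is a linear program: the objective $P_{sa} \mapsto \sum_{s' \in S} P_{sas'} V(s')$ is linear in the decision variable $P_{sa}$, and the feasible set $\Punc_{sa} = \{ P_{sa} \in \distr(S) \mid P_{sas'} \in [\Pcheck_{sas'}, \Phat_{sas'}] \ \forall s' \in S\}$ is a nonempty bounded convex polytope, being the intersection of the probability simplex with a hyperbox. By the fundamental theorem of linear programming, a linear function over such a polytope attains its minimum at some vertex, yielding a minimiser $P^*_{sa} \in \verts[\Punc_{sa}]$.

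Third, by $(s,a)$-rectangularity the transition function $P^*$ assembled from these per-pair vertex minimisers lies in $\Punc$ and is therefore a global minimiser for nature with the claimed vertex property. The argument contains no substantial obstacle beyond invoking standard LP theory; the only point I would verify carefully is that each $\Punc_{sa}$ is nonempty and closed, so the minimum is attained rather than merely approached. Nonemptiness follows from the graph-preservation condition $\Pcheck_{sas'} = 0 \iff \Phat_{sas'} = 0$ together with consistency of the interval bounds, ensuring at least one feasible distribution in $\Punc_{sa}$, and closedness follows from $\Punc_{sa}$ being defined by weak inequalities.
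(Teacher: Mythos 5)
Your proposal is correct and follows essentially the same route as the paper's proof: linearity of nature's objective (via the Bellman equation), decomposition into independent per-pair problems by $(s,a)$-rectangularity (which the paper's \Cref{lem:infh-nature-inner-problem} makes explicit), and the fundamental theorem of linear programming to place each minimiser at a vertex of $\Punc_{sa}$. Your additional check that each $\Punc_{sa}$ is nonempty, closed and bounded — so the minimum is attained — is a welcome bit of care that the paper leaves implicit, but it does not change the argument.
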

\begin{proof}
    The Bellman equation \eqref{eq:infh-adv-bellman} shows that nature's minimisation objective is linear in $P$. From linear programming theory \cite{dantzig2002LP}, a linear function minimised over a convex polytope attains its minimum at a vertex. Then by $(s,a)$-rectangularity, there must exist an optimal $P^*\in \Punc$ such that each $P^*_{sa}$ occurs at a vertex of the polytope $\Punc_{sa}$.
\end{proof}

This result allows us to consider the \emph{finite-action} version of $\adv{\csg}$ as defined in \Cref{def:infh-adv-resolution}, which restricts player 2's action set at the $S'$-states, i.e., nature's action set, from $\Punc_{sa}$ to $\verts[\Punc_{sa}]$.

\subsection{Player/Nature-first Value Equivalence}
\label{sec:nature-first-ICSG-supp}

\PlayerNatureEq*
\begin{proof} 
\label{prf:infh-nature-or-player-first-val-equiv}
Since a value exists for $\csg$, by determinacy and value preservation (\Cref{cor:infh-adv-value-preservation}), $\adv{\csg}$ is also determined with the same value. Then:
\begin{align*}
    LHS
    &= \sup_{\sigma_1} \inf_{\sigma_2} \inf_{P}{\gs{\ev}^{(\sigma_1, \sigma_2), P}[X]}
    = \sup_{\adv{\sigma_1}} \inf_{\adv{\sigma_2}} \ev^{\adv{\sigma_1}, \adv{\sigma_2}}_{\adv{\csg}, s}[\adv{X}] 
    &&\text{\small(by \AbbrCref{cor:infh-adv-value-preservation})} \\
    &= \inf_{\adv{\sigma_2}} \sup_{\adv{\sigma_1}} \ev^{\adv{\sigma_1}, \adv{\sigma_2}}_{\adv{\csg}, s}[\adv{X}] 
    && \text{\small(by determinacy of $\adv{\csg}$)} \\
    &= \inf_{(\sigma_2, P)} \sup_{\sigma_1} \gs{\ev}^{(\sigma_1, \sigma_2), P}[X] 
    && \text{\small(by construction of $\adv{\csg}$)} \\
    &= \inf_{P} \inf_{\sigma_2} \sup_{\sigma_1} \gs{\ev}^{(\sigma_1, \sigma_2), P}[X] \\
    &= \inf_{P} \sup_{\sigma_1} \inf_{\sigma_2} \gs{\ev}^{(\sigma_1, \sigma_2), P}[X] 
    = RHS
\end{align*}
The last line follows because, the CSG induced by fixing any $P\in\Punc$ is finite-state and finitely-branching, which is known to be determined for all the objectives we consider \cite{martin1998determinacy}.

\end{proof}


\section{Proofs for Finite-horizon Zero-sum ICSGs}
\label{sec:finh-supp}

We now add a parameter $h$ to the value function $V$, which denotes the number of steps remaining until the horizon of the objective is completed. The robust Bellman equation is then defined as:
\[
\resizebox{\linewidth}{!}{$
\begin{aligned}
V(s,h) 
= \sup_{\sigma_1\in \distr(A_1(s))} \inf_{\sigma_2\in \distr(A_2(s))} {
\left\{ r_{s}^{\sigma} + \sum_{a\in A(s)}{\sigma_{sa} \inf_{P_{sa}\in \Punc_{sa}}{\sum_{s'\in S}{P_{sas'}\cdot V(s', h-1)}}} \right\}
}.
\end{aligned}
$}
\]

Henceforth in the zero-sum setting, we let $\Sigma_i$ denote the set of \emph{memoryless} strategies for each player $i\in \{1,2\}$, and interpret $\Punc$ as the set of transition functions resulting from \emph{memoryless} nature strategies.
\begin{lemma}[Strategy class sufficient for optimality]
\label{lem:finh-memoryfull-optimal-strategies}
     Given a \emph{finite-horizon} objective $X$ for $\csg$, each player has a \emph{time-varying} robust optimal strategy that depends only on 
     \begin{enumerate*}[label=\arabic*)]
        \item the current state $last(\pi)$, and
        \item the current time-step $|\pi|$;
    \end{enumerate*}
    and nature has a \emph{deterministic time-varying} optimal strategy.
\end{lemma}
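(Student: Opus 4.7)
The plan is to adapt the proof of \Cref{lem:infh-memoryless-optimal-strategies} to the finite-horizon setting, in which the value function now carries a remaining-horizon parameter $h$. The key observation is that under a finite horizon, the continuation value at any history depends only on the current state together with the number of steps remaining, so histories collapse to pairs $(s,h)$ rather than to just the current state.

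First I would establish the finite-horizon analogue of nature's inner problem (\Cref{lem:infh-nature-inner-problem}): at each triple $(s,a,h)$, nature's minimisation decomposes across state--action pairs by $(s,a)$-rectangularity, giving
\[
\inf_{P\in \Punc}{\sum_{a,s'} \sigma_{sa} P_{sas'} V(s',h-1)} = \sum_{a}{\sigma_{sa} \inf_{P_{sa}\in \Punc_{sa}} \sum_{s'} P_{sas'} V(s',h-1)}.
\]
For the time-varying memoryless structure of player strategies, I would then argue by backward induction on $h$. Assuming an optimal strategy at horizon $h-1$ depends only on $(s,h-1)$, for any two finite histories of length $k-h$ ending in the same state $s$, the induced continuation subgames are identical: the remaining reward is additive over future steps, and both the value $V(\cdot,h-1)$ and the ensuing play depend only on $s$ and $h$, not on how $s$ was reached. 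Hence witnessing maximisers (resp.\ minimisers) can be chosen to coincide on all histories with the same $(last(\pi),|\pi|)$, collapsing any history-dependent optimal strategy into one indexed only by $(s,h)$.

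Finally, for nature's \emph{deterministic} time-varying strategy, the argument parallels the infinite-horizon case. By $(s,a)$-rectangularity, nature picks independently at each $(s,a,h)$ some $P_{sa}^{*,h} \in \arg\min_{P_{sa}\in \Punc_{sa}} \sum_{s'} P_{sas'} V(s',h-1)$; since this minimand is linear in $P_{sa}$ over the polytope $\Punc_{sa}$, a minimiser exists at a vertex (\Cref{lem:optimality-of-extrema} applied pointwise in $h$), so nature can commit deterministically to that vertex. The main obstacle I expect is to handle the time-indexing cleanly: one must verify that the induction hypothesis is strong enough to let the same history-collapsing argument go through when continuation values genuinely differ across $h$, and that the arg-min for nature can be chosen consistently so that the resulting strategy is simultaneously $(s,h)$-memoryless and deterministic.
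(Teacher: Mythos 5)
Your proposal is correct and takes essentially the same approach as the paper: both rest on the observation that, by additivity of the objective, any two histories of the same length ending in the same state contribute only a fixed additive offset and hence induce the same continuation optimisation problem, so optimal choices for nature and for each player collapse to pairs $(s,h)$, with nature's determinism following from the linearity of the inner minimisation over the convex set $\Punc_{sa}$. The only differences are presentational — you package the history-collapse as a backward induction on the remaining horizon where the paper argues directly on the witnessing optimal strategy, and the paper additionally exhibits an example showing the time-step cannot be dropped (a necessity claim the lemma itself does not require).
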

\begin{proof}
\label{prf:finh-memoryfull-optimal-strategies}
Let $P^*: FPaths_{\csg} \rightharpoonup \distr(S)$ be the witnessing optimal transition function for nature, i.e., one that attains the inner minimum in the previous expression for $\valg(s,k)$. 
Let $\pi_1, \pi_2 \in FPaths_{\csg}$ be two finite paths such that $|\pi_1|=|\pi_2|=h \in [0,k]$ and $last(\pi_1)=last(\pi_2)=s\in S$.

Then for any $\sigma\in \Sigma$ we have:
\begin{align*}
    P^*(\pi_1)
    &= \arg\min_{P\in \Punc}{X(\pi_1) + \valg^{\sigma,P}(s,k-h)} 
    &&\text{\small(by additivity of objectives in \AbbrCref{lst:objs})}\\
    &= \arg\min_{P\in \Punc}{\valg^{\sigma,P}(s,k-h)} 
    &&\text{\small($X(\pi_1)$ is fixed)}\\
    &= \arg\min_{P\in \Punc}{X(\pi_2) + \valg^{\sigma,P}(s,k-h)}\\
    &= P^*(\pi_2)
\end{align*}
This shows that $P^*$ assigns the same next-state distribution to any two (finite) paths ending in the same state and of the same length. Hence, finite memory on the current state and time-step is \emph{sufficient} for $P^*$ to be optimal. 

To show that this time-dependence is also \emph{necessary}, we refer to \Cref{eg:finh-needs-timestep}, which illustrates that if the time-step is not remembered, the resulting transition function may behave suboptimally. There, the optimal strategy at the same state $s$ changes depending on the number of steps remaining of the objective's horizon. Therefore, memory of the time-step cannot be omitted.

The same reasoning applies to each player's strategy, with nature's objective $\min_{P} \valg^{\sigma, P}(s,k)$ replaced by $\max_{\sigma_1} \valg^{\sigma, P}(s,k)$ for player 1 and $\min_{\sigma_2} \valg^{\sigma, P}(s,k)$ for player 2.

The proof that nature admits a \textit{deterministic}, time-varying optimal strategy is analogous to the proof of determinism in~\Cref{prf:infh-memoryless-optimal-strategies}.
\end{proof}

\begin{example}
\label{eg:finh-needs-timestep}
Consider the ICSG in \Cref{fig:finh-eg}, where player 1 aims to reach her target state $s_2$ within $k=2$ steps. We analyse nature's choice of transition probabilities from state $s_0$, under the objective of minimising player 1's probability of reaching $s_2$.
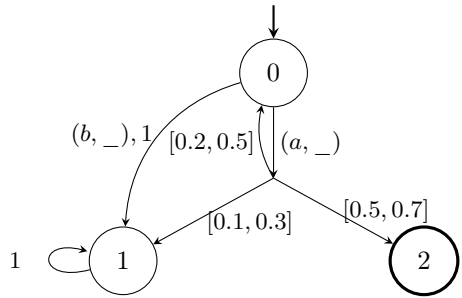
\begin{figure}[h]
    \centering
    \begin{tikzpicture}[->, >=stealth,
        every node/.style={draw, circle, minimum size=9mm, inner sep=1pt, font=\normalsize},
        labelstyle/.style={scale=1.0, fill=none, draw=none, font=\footnotesize}
      ]
    
      \node[draw, circle] (s0) at (0, 0) {0};
      \node[draw, circle] (s1) at (-2, -2.5) {1};
      \node[draw, circle, very thick] (s2) at (2, -2.5) {2};
    
      \draw[->, thick] (0, 0.9) -- (s0);
    
      \coordinate (mid) at (0, -1.4);
      \draw (s0) -- node[labelstyle, right] {$(a,\_)$} (mid);
    
      \draw (mid) to[bend left=20] node[labelstyle, left] {$[0.2, 0.5]$} (s0);
      \draw (mid) -- node[labelstyle, right, pos=0.6, yshift=-2pt] {$[0.1, 0.3]$} (s1);
      \draw (mid) -- node[labelstyle, pos=0.5, xshift=20pt] {$[0.5, 0.7]$} (s2);
    
      \draw (s0) to[bend right=35] node[labelstyle, left] {$(b,\_), 1$} (s1);
    
      \draw (s1) edge[loop left] node[labelstyle, left] {$1$} (s1);
    
    \end{tikzpicture}
    
    \caption{Zero-sum ICSG with player actions $A_1 = A_2 = \{a, b\}$. Player 1's objective is to maximise the probability of reaching a target set $T = \{s_2\}$ within $k=2$ steps.}
    \label{fig:finh-eg}
\end{figure}

At time-step 1 (with 1 step remaining), nature aims to minimise the immediate probability of transitioning to $s_2$. One optimal distribution is then:
\[
P_{s_0(a,\_)}(s) = \begin{cases}
    0.4 & s = s_0, \\
    0.1 & s = s_1, \\
    0.5 & s = s_2.
\end{cases}
\]

However, this distribution is \emph{suboptimal} at step 0 (with 2 steps remaining). In this case, nature must also account for the possibility that player~1 remains in $s_0$ for one step and reaches $s_2$ in the second. To minimise the total probability of reaching $s_2$, nature should increase the probability of transitioning to the sink state $s_1$, which prevents further progress. Thus the unique minimising distribution at time-step $0$ is:
\[
P_{s_0 (a,\_)}(s) = \begin{cases}
    0.2 & s = s_0, \\
    0.3 & s = s_1, \\
    0.5 & s = s_2.
\end{cases}
\]
\end{example}

\subsection{Player-first Adversarial Expansion}
\label{sec:finh-adv-resolution}

We extend the infinite-horizon adversarial expansion from \Cref{def:infh-adv-resolution} by augmenting states with the remaining horizon.

\begin{definition}[Adversarial Expansion]
\label{def:finh-adv-resolution}
    Given a \emph{$k$-step bounded} objective $X$, we define the \emph{adversarial expansion} of $\csg$ as a 2-player CSG $\advk{\csg} = (N, \advk{S}, \advk{\sbar}, \advk{A}, \advk{\Delta}, \advk{P})$ where:
    \begin{itemize}[nosep]
        \item $\advk{S}= \Si \union \Si'$, where $\Si = \bigcup_{h=0}^k{S_h}$ with $S_h = \{(s,h) \mid s\in S\}$, and 
        $\Si' = \bigcup_{h=0}^k{S_h'}$ with $S_h' = \{(s,a,h) \mid s\in S, a \in A(s)\}$.
        \item $\advk{\sbar} = (\sbar,k) \in S_k$;
        \item 
        $\advk{A} = (\advk{A_1} \union \idleset) \times (\advk{A_2} \union \idleset)$ where 
        $\advk{A_1}=A_1$ and $\advk{A_2}=A_2\union \bigcup_{s\in S, a\in A}{\verts[\Punc_{sa}]}$;

        \item $\advk{\Delta}: \advk{S} \rightarrow 2^{\advk{A_1}\union \advk{A_2}}$, such that if $\exists{h\in [0,k]}.\,\advk{s}=(s,h)\in \Si$ then $\advk{\Delta}(\advk{s})=\Delta_1(s)\union \Delta_2(s)$, else if $\exists{h\in [0,k]}.\,{\advk{s}=(s,a,h)\in \Si'}$ then $\advk{\Delta}(\advk{s})=\verts[\Punc_{sa}]$, else $\advk{\Delta}(\advk{s})=\emptyset$;
        \item 
        $\advk{P}: \advk{S} \times \advk{A} \rightarrow \distr(\advk{S})$ where 
        \[
        \resizebox{\linewidth}{!}{$
        \advk{P}(\advk{s},\advk{a},\advk{s'}) = 
        \begin{cases}
            1 & 
            \begin{aligned}
            &\ift \exists{h\in [1,k]}.\left[ 
                \begin{aligned}
                    &\advk{s} = (s,h) \in S_h
                    \land \advk{a} = a \in A(s) \\
                    &\land \advk{s'} = (s,a,h) \in S_h'
                \end{aligned} \right],
            \end{aligned}
            \\
            1 & \elseift {\advk{s}=\advk{s'}=(s,0) \in S_0 }, \\ 
            P_{sas'}
                    &\elseift \exists{h\in [1,k]}.\,
                    \left[
                    \begin{aligned}
                        &\advk{s} = (s,a,h) \in S_h' \\
                        &\land \advk{a} = \left(\idle, P_{sa} \right) \in \advk{A}((s,a,h)) \\
                        &\land \advk{s'} = (s',h-1) \in S_{h-1}
                    \end{aligned} \right], \\
            0 & \otherwiset.
        \end{cases}
        $}
        \]
    \end{itemize}
\end{definition}

With $h\in [1,k]$ steps remaining of the horizon, a $\csg$-transition $s \xrightarrow{a} s'$ under nature's choice of $P\in \Punc$, corresponds to a two-step $\advk{\csg}$-transition $(s,h) \xrightarrow{a} (s,a,h) \xrightarrow{( \idle, P_{sa,k-h})} (s',h-1)$. 
At or after the $k$th step, the game loops at a state $(s,0)\in S_0$. 

Denote by $\gs{FPaths}^k$ the set of finite paths of length $k$ in $\csg$ starting from $s$. Since both players share the same horizon $k$, the analysis can be restricted to finite paths of length at most $k$, i.e., prefixes of $\gs{FPaths}^k$. These correspond to finite paths of even length at most $2k$, i.e., even-length prefixes of $FPaths_{\advk{\csg}, (s,k)}^{2k}$.

As established in \Cref{lem:finh-memoryfull-optimal-strategies}, for optimality it suffices to consider \emph{time-varying} strategies for the players of the form $\sigma_i: S \times A \times H \rightharpoonup \distr(A_i \union \idleset)$.
For nature (i.e., player 2 at the $\Si'$-states), it suffices to consider \emph{time-varying deterministic} strategies of the form $P: S \times A \times H \rightharpoonup \distr(S)$.
Since the time-step is already encoded in the augmented state space $\advk{S}$, any time-varying $\csg$-strategy corresponds to a memoryless (i.e., stationary) strategy in $\advk{\csg}$.

The notion of adversarial expansion then extends to rewards, strategies, paths and objectives in a similar fashion as the infinite-horizon case (see \Cref{sec:infh-adv-resolution}). We only replace $P^*_{sa}$ with $P^*_{sa,h}$, and augment states with the remaining horizon $h$. 
Thus we only present the path and objective mapping here for clarity.

\begin{definition}[Adversarial Expansion cont.]
\label{def:finh-adv-cont}
Consider a finite $\csg$-\emph{path} $\pi = s_0 \xrightarrow{a^0} s_1 \xrightarrow{a^1} s_2 \rightarrow \ldots s_k$ under nature's selection of $P\in \Punc$, and let $P_j:= P_{s_j a^j, j}$ for $0\leq j \leq k-1$, we define: 
\begin{align}
\label{eq:finh-adv-path}
\begin{aligned}
\advkP{\pi} 
&:= (s_0,k) \xrightarrow{a^0} (s_0,a^0,k) \xrightarrow{(\idle, P_0)} (s_1,k-1) \\
&\quad \xrightarrow{a^1} (s_1,a^1,k-1) \xrightarrow{(\idle,P_1)} (s_2,k-2) \rightarrow \ldots \xrightarrow{(\idle,P_{k-1})} (s_k,0).
\end{aligned}
\end{align}

Consider a player's \emph{objective} $X$. We define $\advk{X}$ as:
\begin{itemize}[nosep]
\item Bounded probabilistic reachability:\\
$\advk{X}(\advkP{\pi}) := \ind \left[\exists{j \leq k, j\in \nats}.{\left( \advkP{\pi}(j)=(s,k-j)\in S_{k-j} \land s\in T \right)} \right]$;
\item Bounded cumulative reward:
$\advk{X}(\advkP{\pi}) := \sum_{i=0\in \nats}^{k-1}{ \advk{r}(\advkP{\pi_i}) }$.
\end{itemize}
\end{definition}

Under this definition, we again have the following preservation results:
\begin{proposition}[Path Bijection] 
\label{prop:finh-path-bijection}
For all $0\leq h \leq k$, 
$FPaths_{\advk{\csg},(s,k)}^{2h,P} = \{ \advkP{\pi} \mid \pi \in \gs{FPaths}^{h,P} \}$ and 
$\gs{FPaths}^{h,P} = \{ \pi \mid \advkP{\pi} \in FPaths_{\advk{\csg},(s,k)}^{2h,P} \}$.
\end{proposition}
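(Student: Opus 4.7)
The plan is to prove both set equalities simultaneously by induction on $h \in \{0,1,\ldots,k\}$, since together they assert that the path-expansion map $\pi \mapsto \advkP{\pi}$ defined in the adversarial expansion is a bijection from $\gs{FPaths}^{h,P}$ onto $FPaths_{\advk{\csg},(s,k)}^{2h,P}$. The base case $h=0$ is immediate: $\gs{FPaths}^{0,P} = \{s\}$ and the only zero-step path in $\advk{\csg}$ starting from $(s,k)$ is the singleton $(s,k) = \advkP{s}$, since no transitions have yet been taken.

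For the forward inclusion in the inductive step, given $\pi \in \gs{FPaths}^{h+1,P}$ with final transition $s_h \xrightarrow{a^h} s_{h+1}$ under $P_h := P_{s_h a^h}$, the inductive hypothesis guarantees that the length-$h$ prefix of $\pi$ maps to a valid $2h$-step path in $\advk{\csg}$ terminating at the $\Si$-state $(s_h, k-h)$. From the construction of $\advk{P}$, the deterministic transition $\advk{P}((s_h, k-h), a^h, (s_h, a^h, k-h)) = 1$ is available (since $k-h \geq 1$ whenever $h \leq k-1$), followed by $\advk{P}((s_h, a^h, k-h), (\idle, P_h), (s_{h+1}, k-h-1)) = P_{s_h a^h s_{h+1}} > 0$ by definition of $\pi$ being a path of $\csg$. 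Concatenating these two positive-probability transitions extends the prefix to a valid $2(h+1)$-step path in $\advk{\csg}$ from $(s,k)$, which by the path-expansion formula coincides exactly with $\advkP{\pi}$.

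For the reverse inclusion, I would exploit the structural constraint imposed by the support of $\advk{P}$: positive-probability transitions only go either from an $\Si$-state $(s,h') \in S_{h'}$ with $h' \geq 1$ to an intermediate $\Si'$-state $(s,a,h') \in S_{h'}'$, or from an $\Si'$-state at horizon $h'$ to an $\Si$-state at horizon $h'-1$. Consequently, any $2(h+1)$-step path starting from $(s,k) \in S_k$ must strictly alternate between $\Si$- and $\Si'$-states, with the horizon component decrementing by $1$ after each pair of transitions. Extracting the $\Si$-component states and the player actions from such a path recovers a candidate sequence $\pi = s_0 \xrightarrow{a^0} \ldots \xrightarrow{a^h} s_{h+1}$ in $\csg$; the graph-preservation property of ICSGs ($\Pcheck_{sas'} = 0 \iff \Phat_{sas'} = 0$) combined with the positivity of each intermediate transition's probability ensures every $P_{s_j a^j s_{j+1}} > 0$, placing $\pi \in \gs{FPaths}^{h+1,P}$ with $\advkP{\pi}$ equal to the given augmented path.

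I expect the main obstacle to be the index bookkeeping in the reverse direction: verifying that (i) the horizon coordinate behaves correctly so that a $2h$-step path from $(s,k)$ necessarily lands in an $\Si$-state of the form $(\cdot, k-h)$, and (ii) no spurious paths of odd length or with skipped horizons can arise. Both are enforced cleanly by the two-case support structure of $\advk{P}$, but need explicit verification rather than being a priori obvious. Once this bijectivity is established, the second equality of the proposition is a direct restatement of injectivity of $\pi \mapsto \advkP{\pi}$, which is immediate from the construction since the expansion preserves both the state and the action sequence of the original path verbatim.
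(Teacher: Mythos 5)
Your proof is correct and follows the same route the paper implicitly takes: the paper states this proposition without an explicit proof, treating it as an immediate consequence of the two-step correspondence built into \Cref{def:finh-adv-resolution,def:finh-adv-cont}, and your induction on $h$ is the natural formalization of exactly that correspondence (your appeal to graph preservation in the reverse direction is unnecessary, since positivity of $P_{s_j a^j s_{j+1}}$ is already forced by the expanded transition having positive probability, but this is harmless).
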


To explicitly distinguish nature states $S'$ from player states $S$ in an $\adv{\csg}$-path $\advP{\pi}$, we index $S'$-states with primed indices $\nats' := \{0', 1', \ldots\}$, such that index $j':= 2j+1$ for $j\in \nats$. This in turn allows the state and transition indices in $\pi$ to be preserved in $\advP{\pi}$:

\begin{proposition}[Path index preservation]
\label{prop:finh-path-index-preserved}
    $\advk{\csg}$ visits the same sequence of $S$-states as $\csg$:     $\forall{j\in\{0,1,\ldots,k \} }.{
        \left[ \advkP{\pi}(j) = (\pi(j), k-j) \land \advkP{\pi}[j] = (\pi[j], \idle)
        \right]}$.
\end{proposition}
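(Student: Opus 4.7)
The plan is to extract both equalities directly from the explicit construction of $\advkP{\pi}$ in \Cref{eq:finh-adv-path}, together with the primed/unprimed indexing convention introduced earlier in \Cref{sec:infh-adv-supp}. Under that convention, unprimed indices $j$ label positions holding $\Si$-states (actual position $2j$ in the path), while primed indices $j' := 2j+1$ label positions holding $\Si'$-states. This convention carries over verbatim to the finite-horizon setting, since $\advk{\csg}$ alternates between $\Si$-states and $\Si'$-states in exactly the same way that $\adv{\csg}$ does in the infinite-horizon case; the only difference is that each state also carries a remaining-horizon component.

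First I would proceed by induction on $j \in \{0, 1, \ldots, k\}$. For the base case $j = 0$, \Cref{def:finh-adv-resolution} gives $\advk{\sbar} = (\sbar, k)$, so $\advkP{\pi}(0) = (s_0, k) = (\pi(0), k-0)$; the action played at this $\Si$-state is $a^0 = \pi[0]$, which in $\advk{\csg}$'s action notation (where nature's slot is idle at $\Si$-states) is exactly $(\pi[0], \idle)$. For the inductive step, assume the claim at index $j$. By the transition function in \Cref{def:finh-adv-resolution}, the action $a^j = \pi[j]$ at $(s_j, k-j)$ deterministically transitions to the intermediate $\Si'$-state $(s_j, a^j, k-j)$, which under the primed-index convention sits at position $j'$. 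At that $\Si'$-state, player 2 acting as nature deterministically selects the minimising vertex $P_j := P_{s_j a^j, j}$ and transitions to $(s_{j+1}, k-j-1)$. Identifying this next $\Si$-state position with unprimed index $j+1$ yields $\advkP{\pi}(j+1) = (s_{j+1}, k-(j+1)) = (\pi(j+1), k-(j+1))$, and the action played there is $a^{j+1} = \pi[j+1]$, closing the induction.

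There is no significant obstacle here: the proposition is essentially a direct read-off of \Cref{eq:finh-adv-path}, and the only care point is the consistent application of the primed/unprimed convention to separate $\Si$-state positions (unprimed) from the interleaved $\Si'$-state positions (primed). The horizon augmentation is straightforward bookkeeping that tracks the step count along the path, consistent with the time-varying optimal strategies established in \Cref{lem:finh-memoryfull-optimal-strategies}; in particular, the decrement $k-j \mapsto k-j-1$ at each unprimed-to-unprimed transition is built into the definition of $\advk{P}$ and requires no separate argument.
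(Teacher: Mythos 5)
Your proof is correct and takes essentially the same route as the paper, which states this proposition without an explicit argument because it is an immediate read-off of the explicit form of $\advkP{\pi}$ in \Cref{def:finh-adv-cont} together with the primed/unprimed indexing convention ($j' := 2j+1$); your induction merely formalises that read-off. The only (harmless) over-specification is describing nature's choice as the \emph{minimising} vertex: the proposition holds for the expansion $\advkP{\pi}$ under an arbitrary $P \in \Punc$, not just the optimal one.
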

The preservation of path probability, rewards and objectives then follows immediately from \Cref{prop:finh-path-bijection,prop:finh-path-index-preserved}. 
%
%
Subsequently, all results for the infinite-horizon objectives --- namely \Cref{thm:infh-nature-or-player-first-val-equiv}, \Cref{lem:infh-util-preserving-strategy-eq}, \Cref{cor:infh-adv-value-preservation}, \Cref{thm:infh-adv-rne-to-ne} --- also apply to the finite-horizon objectives. 
Therefore, we can also perform robust model checking and strategy synthesis of zero-sum ICSGs for these finite-horizon objectives using established methods for standard zero-sum CSGs in \cite{kwiatkowska2021automatic}.

\section{Proofs for Infinite-horizon Nonzero-sum ICSGs}
\label{sec:nz-ICSG-supp}

\subsection{Extended Player-first Adversarial Expansion}
\label{sec:nz-infh-adv-resolution-supp}

Since players may now have distinct reward structures, we denote player $i$'s reward structure as $r_i = (r_{i,A}, r_{i,S})$ for $i\in \{1,2\}$, and write $r = (r_1, r_2)$.

\begin{definition}[Adversarial Expansion cont.]
\label{def:nz-infh-adv-components}
    Consider an infinite $\csg$-\emph{path} $\pi = s_0 \xrightarrow{a^0} s_1 \xrightarrow{a^1} s_2 \rightarrow \ldots$ under nature's choice of $P \in \Punc$, and let $P_j := P_{s_j a^j}$, we define $\advP{\pi}$ as:
    \[
    \advP{\pi} := 
    s_0 \xrightarrow{(*a^0,\idle)} (s_0,a^0) \xrightarrow{(\idle,\idle, P_0^*)} s_1 \xrightarrow{(*a^1,\idle)} (s_1,a^1) \xrightarrow{(\idle,\idle,P_1^*)} s_2 \rightarrow \ldots
    \]

    Consider a $\csg$-\emph{profile} $\sigma=(\sigma_1,\sigma_2)$ under nature's selection $P\in \Punc$. We define $\advP{\sigma} = (\adv{\sigma_1}, \adv{\sigma_2}, \advP{\sigma_3})$, where for any $s\in S$: 
    \begin{itemize}[nosep]
        \item for player $l\in \{1,2\}$: $\adv{\sigma_l}(\adv{s},a_l)=\sigma_l(s,a_l)$ if $\adv{s}=s\in S \land a_l \in A_l(s)$ else $\adv{\sigma_l}(\adv{s},a_l)=\ind\left[\adv{s}\in S' \land a_l=\idle \right]$;
        \item $\advP{\sigma_3}(\adv{s},a_3) = \ind \left[ a_3 = P_{sa}^* \right]$ if $\adv{s}=(s,a)\in S'$, else $\advP{\sigma_3}(\adv{s},a_3) = \\ \ind\left[ \adv{s}\in S \land a_3=\idle \right]$.
    \end{itemize}
    We continue to write $\adv{\pi} := \advPstar{\pi}$ and $\adv{\sigma} := \advPstar{\sigma}$.\\

    Consider the \emph{reward structure} $r=(r_1,r_2)$ for $\csg$. We define $\adv{r} = (\adv{r_1}, \adv{r_2}, \adv{r_3})$ where each $\adv{r_l} = (\adv{r_{l,A}},\adv{r_{l,S}})$ such that:
    \begin{itemize}[nosep]
        \item for $l\in \{1,2\}$: $\adv{r_{l,A}}(s,(*a,\idle)) = r_{l,A}(s,a) \cdot \ind[s\in S]$, $\adv{r_{l,S}}(s) = r_{l,S}(s) \cdot \ind[s\in S]$;
        \item $\adv{r_{3,A}}(\cdot)=\adv{r_{3,S}}(\cdot) = 0$.
    \end{itemize}

    Consider the \emph{objective} $X_l$ of a player $l\in\{1,2\}$. We define $\adv{X}$ identically to \Cref{def:infh-adv-components}, with $\adv{X_3} := -\adv{\Xsum}$.
\end{definition}

Following \Cref{def:nz-infh-adv-resolution,def:nz-infh-adv-components}, all preservation results previously established for zero-sum ICSGs still hold in the nonzero-sum setting. 

\subsection{Strategy Injection from $\epsRNE$ to $\advepsNE$}
\label{sec:nz-rne-injection}

\EpsRNEtoNEInjection*
\begin{proof}
\label{prf:nz-infh-rne-to-ne}
Suppose $\sigma$ is an RNE in $\csg$. 
We first consider a player~$l \in \{1,2\}$ and a unilateral deviation of that player, $\sigma_l'\in \Sigma_l$. 
By \Cref{lem:nz-infh-util-preserving-strategy-eq}, we~have:
\begin{align*}
    \adv{u_l}(\adv{\sigma}) - \adv{u_l}(\adv{\sigma_{-l}}[\adv{\sigma_l'}]) 
    &= u_l(\sigma, P^*) - u_l(\sigma_{-l}[\sigma_l'], P^*)\\
    &\geq \inf_{P \in \Punc} \left[ u_l(\sigma, P) - u_l(\sigma_{-l}[\sigma_l'], P) \right]
    \geq -\varepsilon
\end{align*}
where the last inequality follows from the definition of an $\varepsilon$-RNE.

Now consider player 3 (nature). By construction, $\adv{\sigma_3}$ deterministically selects $P^* \in \Punc$ which minimises the total utility $\usum(\sigma, P)$. Therefore, any deviation $\adv{\sigma_3'}$ corresponds to selecting a different transition function $P' \in \Punc$ that is necessarily less or equally optimal as $P^*$. Then:
\begin{align*}
    &\quad \adv{u_3}(\adv{\sigma}) - \adv{u_3}(\advPp{\sigma})\\ 
    &= - \left( \adv{u_1}(\adv{\sigma}) + \adv{u_2}(\adv{\sigma}) \right) + 
    \left( \adv{u_1}(\advPp{\sigma}) + \adv{u_2}(\advPp{\sigma}) \right) \\
    &= -\inf_{P\in \Punc}{[u_1(\sigma,P)+u_2(\sigma,P)]} + 
    [u_1(\sigma,P')+u_2(\sigma,P')] 
    \qquad\text{\small(by \AbbrCref{lem:nz-infh-util-preserving-strategy-eq})}\\
    &\geq 0 \geq -\varepsilon
\end{align*}
Hence $\adv{\sigma}$ is an $\varepsilon$-NE in~$\adv{\csg}$ by definition.

\end{proof}

\subsection{Non-bijection between $\advepsNE$ and $\epsRNE$}
\label{sec:nz-rne-no-bijection}

\begin{example}[$\varepsilon$-NE$_{\adv{\csg}}$ $\not\Rightarrow$ $\varepsilon$-RNE$_{\csg}$]
\label{eg:nz-infh-epsNE-neq-epsRNE}
Consider a one-shot ICSG $\csg$ with actions $\{A,B\}$ and payoffs: $u(A,A)=(1,1), \ u(A,B)=(0.2,0.2), \ u(B,A)=(0.2,0.7), \ \\u(B,B)=(2p,1-p)$ where $p\in[0.2,0.4]$. In $\adv{\csg}$, nature would pick $p=0.2$ to minimise $2p+(1-p)=p+1$. Thus for $0 \leq \varepsilon <0.1$, $(B,B)$ is an $\varepsilon$-NE in $\adv{\csg}$ because if player~1 (2) deviates to $A$, she gets $0.2+\varepsilon<0.4$ ($0.7+\varepsilon<0.8$), so neither player deviates. However, $\varepsilon$-RNE requires considering \textit{all} $p$: at $p=0.4$, player~2 profits by deviating ($0.7>0.6+\varepsilon$). Hence $(B,B)$ is an $\varepsilon$-NE in $\adv{\csg}$ but not an $\varepsilon$-RNE in $\csg$.  
    
\end{example}

\subsection{Finite-action Adversarial Expansion}
\label{sec:fin-act-nz-infh}




In the nonzero-sum setting, computing (subgame-perfect) RSWNE may involve solving \emph{nonlinear} programs~\cite{kwiatkowska2020multi}, to which standard duality theory does not apply.
We thus establish the finite-action reduction here using a more general topological argument.

\begin{lemma}[Optimality of Extrema]
\label{lem:nz-optimality-of-extrema}
    In a nonzero-sum ICSG, a profile $\sigma\in \Sigma$ is an $\varepsilon$-RSWNE when nature selects from $\Punc$ iff it is an $\varepsilon$-RSWNE when nature is restricted to select from its set of extrema at each state-action pair, $\union_{s\in S, a\in A}{\verts[\Punc_{sa}]}$.
\end{lemma}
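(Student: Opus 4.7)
The plan is to reduce this claim to two independent applications of the standard fact that a linear functional on a polytope attains its extrema at vertices, combined with the $(s,a)$-rectangular decomposition already used in the zero-sum analog (Lemma of optimality of extrema). Writing $\verts[\Punc] := \bigtimes_{s,a}\verts[\Punc_{sa}]$ for the set of transition functions induced by restricting nature to pick a vertex at each state-action pair, the goal is to show that both defining ingredients of an $\varepsilon$-RSWNE---the $\varepsilon$-RNE condition and the robust social welfare value---are invariant under replacing $\Punc$ by $\verts[\Punc]$ as nature's action set.

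First I would establish the key linearity observation. By the extension of \Cref{lem:infh-memoryless-optimal-strategies} to the nonzero-sum setting and the decomposition from \Cref{lem:infh-nature-inner-problem}, for any fixed memoryless profile $\sigma$ (and in particular for $\sigma_{-i}[\sigma_i']$ with a pure deviation $\sigma_i'\in\pureSigmai$), nature's inner minimization of $\usum(\sigma,\cdot)$ and maximization of $u_i^\Delta(\sigma_i',\cdot)$ decompose into independent per-$(s,a)$ problems whose per-step objective, for fixed continuation values, is linear in $P_{sa}$ over the polytope $\Punc_{sa}$. Standard LP theory then gives, for any such linear functional $\varphi$, $\inf_{P\in\Punc}\varphi(P) = \inf_{P\in\verts[\Punc]}\varphi(P)$ and analogously for $\sup$.

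Second I would apply this to the two quantities involved. For the social welfare value, linearity gives
\[
\inf_{P\in\Punc}\usum(\sigma,P) \;=\; \inf_{P\in\verts[\Punc]}\usum(\sigma,P),
\]
so every profile $\sigma\in\Sigma$ attains the same robust social welfare under either restriction. For the $\varepsilon$-RNE condition, \Cref{lem:nz-infh-pure-dev-to-filter-rne} lets me restrict attention to pure deviations, and the same linearity yields
\[
\sup_{P\in\Punc}\sup_{\sigma_i'\in\pureSigmai} u_i^\Delta(\sigma_i',P) \;=\; \sup_{P\in\verts[\Punc]}\sup_{\sigma_i'\in\pureSigmai} u_i^\Delta(\sigma_i',P),
\]
so $\sigma$ satisfies the $\varepsilon$-RNE condition against $\Punc$ iff it satisfies it against $\verts[\Punc]$. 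Combining the two, the sets $\epsRNE$ coincide under both nature action sets and the objective being maximized across them agrees profile-by-profile, so the argmax---which is exactly an $\varepsilon$-RSWNE---is the same set in both cases, establishing the ``iff''.

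The main obstacle is conceptual rather than computational: one must resist treating $u_i(\sigma,P)$ as a globally linear function on the product polytope $\Punc$, since through the Bellman fixed-point the continuation values themselves depend on $P$. The argument has to be pushed through the rectangular decomposition, where the linearity in each $P_{sa}$ (holding everything else fixed at its optimum) is genuine and the vertex-attainment follows. A secondary point to check is that both objective classes treated in the paper (probabilistic reachability and reachability reward under the stopping-game assumption) yield well-defined finite $u_i(\sigma,P)$ for every $P\in\Punc$, so that the $\inf$/$\sup$ manipulations above are justified; this is immediate from the stopping-game assumption noted at the start of \Cref{sec:nz-ICSG}.
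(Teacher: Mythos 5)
Your proposal is correct and establishes the lemma, but by a genuinely different route from the paper. The paper's proof (\Cref{sec:fin-act-nz-infh}) splits the equivalence into two directions: the backward direction is argued by monotonicity over $\verts(\Punc)\subseteq\Punc$, and the forward direction asserts that $-u_i^\Delta(\sigma_i',\cdot)$ and $\usum(\sigma,\cdot)$ are \emph{linear} (hence convex and continuous) in $P$ over the whole product polytope and invokes Bauer's maximum principle to place the optima at extreme points. You instead prove the two equalities $\inf_{P\in\Punc}\usum(\sigma,P)=\inf_{P\in\verts(\Punc)}\usum(\sigma,P)$ and $\sup_{P\in\Punc}\sup_{\sigma_i'}u_i^\Delta(\sigma_i',P)=\sup_{P\in\verts(\Punc)}\sup_{\sigma_i'}u_i^\Delta(\sigma_i',P)$ outright, via the $(s,a)$-rectangular decomposition and per-step linearity in each $P_{sa}$ at the Bellman fixed point (mirroring the zero-sum \Cref{lem:optimality-of-extrema}), together with the pure-deviation reduction of \Cref{lem:nz-infh-pure-dev-to-filter-rne}; both directions of the ``iff'' then follow at once. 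This buys something real: as you note, $u_i(\sigma,P)$ is \emph{not} globally linear in $P$ (the continuation values themselves depend on $P$ through the fixed point), so the paper's global-linearity step needs precisely the rectangular unwinding you carry out; your version also sidesteps the paper's monotonicity argument for the $\varepsilon$-RNE condition, where the inequality $\sup_{P\in\Punc}u_i^\Delta\geq\sup_{P\in\verts(\Punc)}u_i^\Delta$ points the wrong way for concluding $\sup_{P\in\Punc}u_i^\Delta\leq\varepsilon$. The one step you should make explicit is why the optimisation over $P$ of the \emph{difference} $u_i^\Delta(\sigma_i',P)=u_i(\sigma_{-i}[\sigma_i'],P)-u_i(\sigma,P)$ itself decomposes rectangularly --- this is not automatic for a difference of two $P$-dependent value functions, and is exactly what the deviation IMDP (\Cref{def:deviation-mdp}, \Cref{lem:rve-util-correctness}, \Cref{cor:opt-rve-val-equals-max-ui-delta}) supplies by recasting $u_i^\Delta$ as the value of a single IMDP to which the vertex-attainment argument applies.
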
 
\begin{proof}
\label{prf:nz-optimality-of-extrema}
First note that $\verts(\Punc)=\bigtimes_{s\in S,a\in A}{\verts[\Punc_{sa}]}$ by $(s,a)$-rectangularity of an ICSG. Thus, selecting a transition function $P\in \Punc$ is equivalent to selecting from $P\in \bigtimes_{s\in S,a\in A}{\verts[\Punc_{sa}]}$, which is in turn equivalent to selecting a distribution $P_{sa}\in \Punc_{sa}$ independently at each state-action pair $(s,a)\in S\times A$.

For a profile $\sigma=(\sigma_1,\sigma_2)$ and a unilateral deviation
$\sigma_i'\in\Sigma_i$, define the deviation gain
\[
u_i^\Delta(\sigma_i',P)
:= u_i(\sigma_{-i}[\sigma_i'],P) - u_i(\sigma,P).
\]

\paragraph{$\boldsymbol{(\Leftarrow)}$}
Since $\verts(\Punc)\subseteq\Punc$, the result follows by monotonicity.
Indeed, assume that $\sigma$ is an $\varepsilon$-RSWNE under $\verts(\Punc)$.
We verify that $\sigma$ satisfies the two $\varepsilon$-RSWNE conditions (see \Cref{def:rsw-rsc}) under the full uncertainty set $\Punc$:
\begin{enumerate}
    \item \textbf{$\varepsilon$-RNE condition}: 
    For all $i\in N=\{1,2\}$ and $\sigma_i'\in\Sigma_i$,
    \[
    \sup_{P\in \Punc}{u_i^\Delta(\sigma_i',P)}
    \geq  
    \sup_{P\in \verts(\Punc)}{u_i^\Delta(\sigma_i',P)}
    \geq -\varepsilon,
    \]
    so $\sigma$ remains an $\varepsilon$-RNE profile under $\Punc$.
    
    \item \textbf{RSW optimality condition}: For any $\sigma\in\RNE$,
    \[
    \inf_{P\in\verts(\Punc)} \usum(\sigma,P)
    \ge \inf_{P\in\Punc} \usum(\sigma,P),
    \]
    which implies
    \begin{align*}
        \max_{\sigma\in \RNE}\inf_{P\in \verts(\Punc)}{\usum(\sigma,P)} 
        &\geq \max_{\sigma\in \RNE}\inf_{P\in \Punc}{\usum(\sigma,P)} \\
        \implies 
        \arg\max_{\sigma\in \RNE}\inf_{P\in \verts(\Punc)}{\usum(\sigma,P)} 
        &\subseteq \arg\max_{\sigma\in \RNE}\inf_{P\in \Punc}{\usum(\sigma,P)} 
    \end{align*}
    Hence $\sigma$ is RSW-optimal under $\Punc$.
\end{enumerate}

\paragraph{$\boldsymbol{(\Rightarrow)}$}
Let $\sigma$ be an $\varepsilon$-RSWNE under $\Punc$. 
Each local uncertainty set $\Punc_{sa}$ is convex and compact, being a product of closed intervals. 
Since $S$ and $A$ are finite, their Cartesian product $\Punc := \bigtimes_{(s,a)\in S\times A} \Punc_{sa}$ is likewise convex and compact, by Tychonoff’s theorem \cite{tychonoff1930topologische} and preservation of convexity under Cartesian products \cite{boyd2004convex}. 
Hence, by Bauer's maximum principle \cite{bauer1958minimalstellen}, any continuous convex function $f$ on $\Punc$ attains its infimum at some extreme point, i.e.,
\begin{equation}
\label{eq:inf-Punc-equals-inf-verts-Punc}
    \inf_{P \in \verts(\Punc)}{f(P)} = \inf_{P\in \Punc}{f(P)}.
\end{equation}

We again verify that $\sigma$ satisfies the $\varepsilon$-RSWNE conditions under $\verts(P)$:
\begin{enumerate}

\item \textbf{$\varepsilon$-RNE condition}: 
Observe that $u_i(\sigma,P \mid s)=\sum_{a}{\sigma_{sa}\sum_{s'}{P_{sas'}u_i(\sigma,P\mid s')}}$ is linear in $P$.
Thus for fixed $i$ and $\sigma_i'$, the function $f(P) := -u_i^\Delta(\sigma_i',P)=u_i(\sigma, P) - u_i(\sigma_{-i}[\sigma_i'], P)$ is also linear, hence convex and continuous in $P$. 
By \Cref{eq:inf-Punc-equals-inf-verts-Punc}, we have that for all $i\in\{1,2\}, \sigma_i'\in \Sigma_i$:
\[
\inf_{P \in \verts(\Punc)}{-u_i^\Delta(\sigma_i',P)} = \inf_{P\in \Punc}{-u_i^\Delta(\sigma_i',P)} \geq -\varepsilon
\]
which confirms that $\sigma$ is an $\varepsilon$-RNE under $\verts(\Punc)$. 
Moreover, combined with the backward direction, we conclude that $\sigma$ is an $\varepsilon$-RNE iff it is an $\varepsilon$-RNE over $\verts(\Punc)$, i.e., $\epsRNE(\Punc) = \epsRNE(\verts(\Punc))$ where $\epsRNE(K)$ is the set of $\varepsilon$-RNE profiles under a set $K$.

\item \textbf{RSW optimality condition}:
The function $f(P):=\usum(\sigma,P)$ is linear in $P$, and therefore substituting in \Cref{eq:inf-Punc-equals-inf-verts-Punc} yields 
\[
\inf_{P \in \verts(\Punc)}{\usum(\sigma,P)}
= \inf_{P\in\Punc}{\usum(\sigma,P)}.
\]
Hence $\sigma$ attains the same robust social welfare under $\Punc$ and $\verts(\Punc)$, and RSW optimality is preserved.
\end{enumerate}

We conclude that $\sigma$ is an $\varepsilon$-RSWNE under $\verts(\Punc)$.
\end{proof}

\subsection{Player/Nature-first Value Equivalence}
\label{sec:nz-nature-first-supp}

A similar invariance result to \Cref{thm:infh-nature-or-player-first-val-equiv} holds in the nonzero-sum setting.
Note that we focus on memoryless strategies of the players and nature, as the same proof for \Cref{lem:infh-memoryless-optimal-strategies} holds in the nonzero-sum case.

\begin{restatable}[Player/nature-first Value Equivalence]{theorem}{PlayerNatureEqNZ}
\label{thm:nz-infh-nature-or-player-first-val-equiv}
From any $s \in S$, $\valg(s)$ is invariant under the player-first or nature-first semantics: 
\[
\sup_{\sigma \in \epsRNE}\inf_{P\in \Punc}{\gs{\ev}^{\sigma, P}[\Xsum]}
=
\inf_{P\in \Punc} \sup_{\sigma \in \epsRNE}{\gs{\ev}^{\sigma, P}[\Xsum]}.
\]
\end{restatable}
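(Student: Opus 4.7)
The plan is to extend the zero-sum proof strategy (Theorem thm:infh-nature-or-player-first-val-equiv) to the nonzero-sum setting via the 3-player adversarial expansion $\adv{\csg}$ of \Cref{def:nz-infh-adv-resolution}. The key structural observation is that in $\adv{\csg}$, players 1 and 2 form a coalition maximising $\adv{\Xsum}$, while nature (player 3) has objective $\adv{X_3} = -\adv{\Xsum}$ (see \Cref{def:nz-infh-adv-components}). Thus, with respect to social welfare, $\adv{\csg}$ is a zero-sum game between coalition $\{1,2\}$ and nature, which puts us back into the analytical framework of the zero-sum proof.

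First I would apply the utility-preserving bijection (\Cref{lem:nz-infh-util-preserving-strategy-eq}), together with \Cref{lem:nz-infh-rne-to-ne} and the $\varepsilon$-RNE filtering argument of \Cref{sec:nz-infh-filter-ne-for-rne}, to lift both sides of the claimed equality into $\adv{\csg}$. The LHS becomes
\[
\sup_{\adv{\sigma} \in \advepsRNE} \inf_{\adv{\sigma_3}} \ev^{(\adv{\sigma_1},\adv{\sigma_2},\adv{\sigma_3})}_{\adv{\csg},s}[\adv{\Xsum}],
\]
where the $\inf$ over $\adv{\sigma_3}$ absorbs the $\inf_{P\in\Punc}$ via the construction of $\adv{\csg}$, and the outer $\sup$ ranges over coalition strategies whose $\csg$-projection is an $\varepsilon$-RNE. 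The RHS rewrites analogously with the quantifiers swapped.

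Second, determinacy carries the argument: since $\adv{\csg}$ is finite-state and finitely-branching, its zero-sum coalitional reformulation with respect to $\adv{\Xsum}$ is determined~\cite{martin1998determinacy}, justifying exchange of $\sup_{(\adv{\sigma_1},\adv{\sigma_2})}$ and $\inf_{\adv{\sigma_3}}$. Translating $\inf_{\adv{\sigma_3}}$ back into $\inf_{P\in\Punc}$ via the construction of $\adv{\csg}$ yields the RHS. The display chain follows the zero-sum proof almost verbatim, with coalition-level determinacy replacing player-level determinacy, and the nonzero-sum variants of \Cref{thm:infh-nature-or-player-first-val-equiv}'s supporting lemmas (already noted in this excerpt as extending to the nonzero-sum setting) doing the rest of the work.

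The main obstacle is ensuring that the outer supremum ranges over the same profile set on both sides of the exchange: on the LHS, $\epsRNE$ is defined against every $P\in\Punc$, and the RHS must quantify over this same robust set rather than over the strictly larger $\varepsilon$-NE set of any induced CSG $\csg_P$. Because the theorem is stated this way, the only genuine swap is between coalition maximisation within $\epsRNE$ and nature's minimisation, which is handled by coalitional zero-sum determinacy. The delicate technical step is verifying that the filtering from $\advepsNE$ to $\advepsRNE$ commutes with nature's choice $\adv{\sigma_3}$, so the equilibrium set on $\adv{\csg}$ inherits cleanly from $\epsRNE$ on $\csg$ under either semantics; I would handle this by a case analysis analogous to \Cref{lem:nz-infh-rne-to-ne}, leaning on the fact that under $(s,a)$-rectangularity nature's optimal memoryless choice at each $(s,a)$ depends only on the local continuation value and not on the rest of the coalition's profile.
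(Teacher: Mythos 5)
Your proposal is correct and follows essentially the same route as the paper's proof: lift both sides into the adversarial expansion $\adv{\csg}$ via the utility-preserving bijection (\Cref{lem:nz-infh-util-preserving-strategy-eq}), observe that with respect to $\adv{\Xsum}$ the game is zero-sum between the coalition $\{1,2\}$ and nature, and invoke determinacy of the resulting finite, finitely-branching coalition game to justify exchanging the coalition's supremum over $\advepsRNE$ with nature's infimum. The concern you flag about the outer supremum ranging over the robust set $\epsRNE$ on both sides is resolved exactly as you suggest — the paper keeps $\advepsRNE$ fixed (defined against all $P\in\Punc$) throughout the exchange, so no commutation of the filtering step with nature's choice is actually needed.
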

\begin{proof}
\label{prf:nz-infh-nature-or-player-first-val-equiv}
Recall that, under our notation, the optimal (robust) social welfare value is $\Vsum(s,X) = \sup_{\sigma\in \epsRNE}\inf_{P\in\Punc}{\gs{\ev}^{\sigma, P}[\Xsum]} = \sup_{\sigma\in \epsRNE}\inf_{P\in\Punc}{\usum(\sigma,P)}$. 
For brevity, we write $\advP{\sigma}$ for $(\adv{\sigma_1},\adv{\sigma_2},\advP{\sigma_3})$ in the following.
\begin{align}
\nonumber
LHS &= 
\sup_{\sigma \in \epsRNE}\inf_{P\in \Punc}{\usum(\sigma,P)}
= \sup_{\adv{\sigma} \in \advepsRNE}{[\adv{u_1}(\adv{\sigma}) + \adv{u_2}(\adv{\sigma}) ]}
\qquad \text{\small(by \AbbrCref{lem:nz-infh-util-preserving-strategy-eq})} \\ \nonumber
&= \sup_{(\adv{\sigma_1}, \adv{\sigma_2}, \_) \in {\advepsRNE}} \inf_{\advP{\sigma_3} \in \adv{\Sigma_3}}{\left[\adv{u_1}\left( \advP{\sigma} \right) + \adv{u_2}\left( \advP{\sigma} \right) \right]} 
\\ \nonumber
&\quad \text{\small($\adv{\sigma_3}$ minimises $\usum$ by definition; see \AbbrCref{def:nz-infh-adv-resolution})}\\ 
&= \inf_{\advP{\sigma_3}\in \adv{\Sigma_3}} \sup_{(\adv{\sigma_1}, \adv{\sigma_2}, \_) \in {\advepsRNE}} {\left[ \adv{u_1}\left( \advP{\sigma} \right) + \adv{u_2}\left( \advP{\sigma} \right) \right]}
\tag{$\ddagger$} \label{eq:nz-use-zs-minimax-thm}\\ \nonumber
&= \inf_{P\in \Punc} \sup_{\sigma \in \epsRNE}{[u_1(\sigma, P) + u_2(\sigma,P)]} 
= RHS \\ \nonumber
&\quad\text{\small(by construction of $\adv{\csg}$ and \AbbrCref{lem:nz-infh-util-preserving-strategy-eq})} 
\end{align}

In the above derivation, step~\eqref{eq:nz-use-zs-minimax-thm} follows from the structure of the adversarial expansion $\adv{\csg}$. Specifically, the zero-sum relationship between the coalition of players $N = \{1, 2\}$ and nature (player 3) allows us to reduce the finitely-branching CSG $\adv{\csg}$ into a 2-coalition CSG $\clt{\csg}$. This coalition game is defined as in \cite[Definition~13]{kwiatkowska2021automatic} for $\adv{\csg}$, and is effectively a \emph{zero-sum} 2-player CSG between a player $\clt{1}$ representing $N$ and another player $\clt{2}$ representing nature.

It follows from the finiteness of $\adv{\csg}$ that the coalition game $\clt{\csg}$ is also finitely-branching, and therefore \emph{determined} (i.e., value exists) for the class of objectives we consider \cite{martin1998determinacy}. Furthermore, as given in \cite[Definition~13]{kwiatkowska2021automatic}, the coalition reward structure $\clt{r}$ aggregates the individual rewards of players 1 and 2 in $\adv{\csg}$, i.e., $\clt{r} = \adv{r_1} + \adv{r_2}$. Consequently, by linearity of the expected value $\Vsum$ in the rewards (see \Cref{eq:nz-infh-bellman}), the value of $\clt{\csg}$, i.e., the expected utility of player $\clt{1}$, is exactly $\usum=\adv{u_1}+\adv{u_2}$. Finally, applying the definition of determinacy (see Definition 9 of \cite{kwiatkowska2021automatic}) justifies the $\sup$-$\inf$ swap in step \eqref{eq:nz-use-zs-minimax-thm}.

\end{proof}

\subsection{Filtering $\advepsNE$ for $\epsRNE$}
\label{sec:nz-infh-filter-ne-for-rne-supp}

\PureDevFiltersRNE*
\begin{proof}
\label{prf:nz-infh-pure-dev-to-filter-rne}
First note that the $\varepsilon$-RNE condition can be reformulated~as:
\begin{alignat*}{3}
    & &\inf_{P\in \Punc}{-u_i^\Delta(\sigma_i',P)} &\geq -\varepsilon 
    \quad \forall{i\in N, \sigma_i' \in \Sigma_i} \\
    &\iff&
    \inf_{P\in \Punc}\inf_{\sigma_i' \in \Sigma_i}{-u_i^\Delta(\sigma_i',P)} &\geq -\varepsilon \  
    \iff
    \sup_{P\in \Punc}\sup_{\sigma_i' \in \Sigma_i}{u_i^\Delta(\sigma_i',P)} &\leq \varepsilon
\end{alignat*}

It thus remains to show that the following holds for any $P\in \Punc$:
\begin{equation}
\label{eq:filter-rne-sup-sup}
    \sup_{\sigma_i' \in \Sigma_i}{u_i^\Delta(\sigma_i',P)} \leq \varepsilon
    \iff 
    \sup_{\eta_i \in \pureSigmai}{u_i^\Delta(\eta_i,P)} \leq \varepsilon
\end{equation}

In what follows, consider an arbitrary $P\in \Punc$.

\paragraph{$\boldsymbol{(\Rightarrow)}$}
This direction is straightforward: since every deterministic strategy is also a mixed strategy, i.e., $\pureSigmai \subseteq \Sigma_i$, we have that:
\[
\sup_{\eta_i \in \pureSigmai}{u_i^\Delta(\eta_i,P)} 
\leq
\sup_{\sigma_i' \in \Sigma_i}{u_i^\Delta(\sigma_i',P)}
\leq \varepsilon.
\]

\paragraph{$\boldsymbol{(\Leftarrow)}$}
For a fixed candidate $\varepsilon$-NE profile $\sigma \in \Sigma$, deviator $i \in \{1,2\}$ and an initial state $s \in S$, the expected utility under $\sigma$ is also fixed as:
\[
u_i(\sigma,P)=V_i(s,X\mid \sigma,P) =: V'.
\]

Assume the RHS of \eqref{eq:filter-rne-sup-sup} holds, or equivalently ${u_i(\sigma_{-i}[\eta_i],P) \leq V'+\varepsilon}$ for all $\eta_i \in \pureSigmai$. 
We aim to show that this implies the following:
\begin{equation}
\label{eq:mixed-dev-rne-cond}
u_i(\sigma_{-i}[\sigma_i'],P) \leq V'+\varepsilon \quad \forall{\sigma_i'\in \Sigma_i}.
\end{equation}

Consider an arbitrary mixed strategy $\sigma_i' \in \Sigma_i$. Define the deviated profile as $\sigma' := \sigma_{-i}[\sigma_i']$, and the shorthand notations $\sigma'_{sa}=\sigma'(s,a)$, $P_{sas'} = P_{sas'}$ and $V_i(s\mid \sigma') = V_i(s,X\mid \sigma, P)$. 
Further denote the other player as $j := i \mod 2 + 1$, and write $a = (a_i, a_j)$ if $i < j$, or $a = (a_j, a_i)$ otherwise.

The expected utility of player~$i$ under $\sigma'$ and $P$ is then:
\begin{align*}
u_i(\sigma',P)
&= V_i(s \mid \sigma')
= r_{s}^{\sigma'} + 
  \sum_{a\in A(s)} \sigma'_{sa}
    \sum_{s'\in S} P_{sas'} \, V_i(s' \mid \sigma') \\
&= \sum_{a\in A(s)} \sigma'_{sa}\, r_{sa}
  + \sum_{a\in A(s)} \sigma'_{sa}
    \sum_{s'\in S} P_{sas'} \, V_i(s' \mid \sigma') \\
&= \sum_{a_i\in A_i(s)} \sum_{a_j\in A_j(s)}
     \sigma_j(s,a_j)\, \sigma_i'(s,a_i)
     \Bigl[r_{sa} + \sum_{s'\in S} P_{sas'} \, V_i(s' \mid \sigma')\Bigr] \\
&= \sum_{a_i\in A_i(s)} \sigma_i'(s,a_i)
     \underbrace{\sum_{a_j\in A_j(s)} \sigma_j(s,a_j)
       \Bigl[r_{sa} + \sum_{s'\in S} P_{sas'} \, V_i(s' \mid \sigma')\Bigr]}_{=: C_{a_i}}.
\end{align*}
Hence
\[
u_i(\sigma',P)
= \sum_{a_i\in A_i(s)} \sigma_i'(s,a_i)\, C_{a_i}
\le \max_{a_i\in A_i(s)} C_{a_i},
\]
since $\sigma_i'$ is a probability distribution. 
Further let $a_i^* \in \arg\max_{a_i\in A_i(s)} C_{a_i}$, and let $\eta_i^*$ be a deterministic strategy s.t. $\eta_i^*(s,a_i^*)=1$. Then
\[
\max_{a_i} C_{a_i} = \sum_{a_i\in A_i(s)} \eta_i^*(s,a_i)\, C_{a_i}
= u_i(\sigma_{-i}[\eta_i^*],P) \le V' + \varepsilon
\]
where the last inequality is by assumption.

Since this applies to any $\sigma_i'\in \Sigma_i'$, \Cref{eq:mixed-dev-rne-cond} holds.
\end{proof}

\subsubsection{The Deviation IMDP.}
\label{sec:dev-IMDP-supp}
Let $j := i \bmod 2 + 1$ denote the other player, who continues to play their strategy $\sigma_j$ from the candidate profile $\sigma$. For notational convenience, we write $a = (a_i, a_j)$ if $i < j$ and $a = (a_j, a_i)$ with otherwise. In either case, let $\sigma' = \sigma_{-i}[\sigma_i']$ denote the $\csg$-profile in which player $i$ deviates unilaterally from $\sigma$.

\begin{definition}[Deviation IMDP]
\label{def:deviation-mdp}
    Consider a starting state $s$, deviator $i \in N$ and a $\csg$-profile $\sigma \in \Sigma$ such that $\adv{\sigma}\in \advNE$. We define the corresponding \emph{deviation IMDP} as an IMDP $\rve{\csg_{i,\sigma}} = (\adv{S}, s, \rve{A}, \rve{\Phat}, \rve{\Pcheck}, \rve{r})$ in which player $i$ acts as the maximising agent, and where:

    \begin{itemize}[nosep]
        \item $\rve{A} = A_i \union \idleset$, and let 
        $\rve{A}(\adv{s}) = A_i(s)$ if $\adv{s}=s \in S$ and $\idleset$ otherwise;
        \item $\rve{\Phat}, \rve{\Pcheck}: \adv{S} \times \rve{A} \times \adv{S} \rightharpoonup [0,1]$, such that 
        if $\adv{s}=s \in S \land \rve{a_i}=a_i\in A_i(s) \land s'=(s,(a_i,a_j))\in S'$ then $\rve{\Phat}(\adv{s},\rve{a_i},s')=\rve{\Pcheck}(\adv{s},\rve{a_i},s') = \sigma_j(s,a_j)$, 
        else if $\adv{s}=(s,a)\in S' \land \rve{a_i}=\idle \land s'\in S$ then $\rve{\Phat}(\adv{s},\rve{a_i},s')= \Phat_{sas'}$ and $\rve{\Pcheck}(\adv{s},\rve{a_i},s')=\Pcheck_{sas'}$, and otherwise $0$.
        \item $\rve{r}: \adv{S} \times \rve{A} \rightarrow \reals$, where $\rve{r}(\adv{s},\rve{a_i}) = \sum_{a_j\in A_j(s)}{\sigma_j(s,a_j) r_{sa}} - r_{s}^{\sigma}$ if $\adv{s}=s\in S \land \rve{a_i} = a_i \in A_i(s)$, and $\rve{r}(\adv{s},\rve{a_i}) = 0$ otherwise.
    \end{itemize}
\end{definition}

Given a strategy $\sigma_i'$ for player~$i$ and a transition function $P \in \Punc$ chosen by nature, the expected value $\rve{V}(s)$ from state $s \in \adv{S}$ satisfies the following recursion:
\begin{align}
\nonumber
\rve{V}(s \mid \sigma_i') 
&:= \rve{V}(s,\rve{X} \mid \sigma_i',P) \\
&= \rve{r}(s,\sigma_i') + \sum_{a_i\in \rve{A}(s)}{\sigma_i'(s,a_i) \sum_{s'\in \adv{S}}{\rve{P_{s a_i s'}}} \cdot \rve{V}(s'\mid \sigma_i')}
\label{eq:nz-infh-rve-val-fn}
\end{align}

It can be shown by induction on $\rve{V}(s\mid \sigma_i')$ over the structure of $\rve{\csg}$, that this definition coincides with player $i$'s deviation gain $u_i^\Delta$ \eqref{eq:rve-ui-delta}:

\begin{lemma}[$\rve{u}$ matches $u_i^\Delta$]
\label{lem:rve-util-correctness}
    Consider the deviation IMDP $\rve{\csg}$ for a given deviator $i\in \{1,2\}$ and $\csg$-profile $\sigma\in \Sigma$. 
    Under nature's chosen transition function $P\in \Punc$ and agent strategy $\sigma_i'$, $\rve{\csg}$ satisfies that $\rve{u}(\sigma_i',P) = u_i^\Delta(\sigma_i',P)$.
\end{lemma}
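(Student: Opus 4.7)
The plan is to prove the identity pointwise on the state space of $\rve{\csg}$ by showing that both sides satisfy the same Bellman-style recursion on the original states $S$, and then invoking uniqueness of the fixed point. Concretely, let $\sigma' := \sigma_{-i}[\sigma_i']$ and define the \emph{deviation gain} as a function of state, $D(s) := u_i(\sigma', P \mid s) - u_i(\sigma, P \mid s)$, with the natural extension $D((s,a)) := \sum_{s'} P_{sas'}\, D(s')$ to the auxiliary states of $\rve{\csg}$. The goal is to show $\rve{V}(\cdot \mid \sigma_i') \equiv D(\cdot)$ on $\adv{S}$, whose evaluation at the initial state gives the claimed utility identity.

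First, I would collapse the two-step IMDP dynamics into a one-step recursion over $S$. At an auxiliary state $(s,a) \in S'$ the only enabled action is $\idle$ with zero reward, so $\rve{V}((s,a) \mid \sigma_i') = \sum_{s'} P_{sas'}\,\rve{V}(s' \mid \sigma_i')$. Substituting this into \eqref{eq:nz-infh-rve-val-fn} at an $S$-state $s$, and using that the IMDP's $S \to S'$ transition weights are $\sigma_j(s, a_j)$, I obtain
\begin{equation*}
\rve{V}(s \mid \sigma_i') \;=\; \bigl(r_s^{\sigma'} - r_s^{\sigma}\bigr) \;+\; \sum_{a \in A(s)} \sigma'_{sa} \sum_{s' \in S} P_{sas'}\,\rve{V}(s' \mid \sigma_i'),
\end{equation*}
where the first term arises by expanding $\rve{r}(s, \sigma_i')$ via its definition in \Cref{def:deviation-mdp}.

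Second, I would verify that $D$ obeys the same recursion. This amounts to expanding both $u_i(\sigma', P \mid s)$ and $u_i(\sigma, P \mid s)$ by one step of the standard ICSG Bellman equation, taking the difference, and reorganising the continuation terms so that only $D(s')$ appears under the expectation. The construction of $\rve{r}$ as the $\sigma_j$-weighted immediate reward minus $r_s^{\sigma}$ is engineered precisely so that the residual arising from the mismatched action weightings $\sigma_{sa}$ versus $\sigma'_{sa}$ is absorbed into the scalar reward term $r_s^{\sigma'} - r_s^{\sigma}$.

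Finally, I would conclude by fixed-point uniqueness. In the stopping-game regime assumed for the nonzero-sum ICSGs of this section (targets almost-surely reached under all profiles, as footnoted in \Cref{sec:nz-ICSG}), the one-step Bellman operator induced above is contracting on bounded functions with the appropriate boundary values on absorbing target states, so its fixed point is unique and must equal both $\rve{V}(\cdot \mid \sigma_i')$ and $D(\cdot)$. A backward induction on the remaining horizon replaces the contraction argument in the finite-horizon variant. The main obstacle will be Step~2: carefully tracking how the cross term $\sum_a (\sigma'_{sa} - \sigma_{sa}) \sum_{s'} P_{sas'}\,u_i(\sigma, P \mid s')$ arising when differencing the two Bellman equations is neutralised by the specific form of $\rve{r}$ and the $\sigma_j$-weighted IMDP transitions, so that both recursions close with only $\rve{V}$ (resp.\ $D$) on the right-hand side.
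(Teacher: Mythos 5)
Your overall strategy --- collapse the two-step dynamics of $\rve{\csg}$ into a one-step recursion over the $S$-states, show that the deviation gain satisfies the same recursion, and conclude by uniqueness of the fixed point --- is essentially the route the paper takes (the paper phrases the final step as an induction over the structure of $\rve{\csg}$; your contraction/backward-induction closing is, if anything, the cleaner way to finish that part). Your Step~1 matches the paper's derivation of $\rve{V'}(s) = \rve{r}_{s\sigma_i'} + \sum_{a}\sigma'_{sa}\sum_{s'}P_{sas'}\rve{V'}(s')$ with $\rve{r}_{s\sigma_i'} = r_s^{\sigma'} - r_s^{\sigma}$ exactly.

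The difficulty is that you have deferred precisely the step on which everything turns. Writing $D(s)$ for your deviation-gain function and $V_i(s')$ for player $i$'s value under $\sigma$ and $P$, differencing the two Bellman equations gives
\[
D(s) = \bigl(r_s^{\sigma'}-r_s^{\sigma}\bigr) + \sum_{a}\sigma'_{sa}\sum_{s'}P_{sas'}\,D(s') + \sum_{a}\bigl(\sigma'_{sa}-\sigma_{sa}\bigr)\sum_{s'}P_{sas'}\,V_i(s'),
\]
and for $D$ and $\rve{V}(\cdot\mid\sigma_i')$ to obey the same recursion the last (cross) term must vanish. You assert that it is ``neutralised by the specific form of $\rve{r}$,'' but $\rve{r}$ averages under $\sigma_i'$ to the purely immediate quantity $r_s^{\sigma'}-r_s^{\sigma}$ and carries no continuation-value information, so it cannot cancel a term involving $V_i(s')$. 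That term is zero only in special circumstances (e.g.\ when $\sum_{a_j}\sigma_j(s,a_j)\sum_{s'}P_{s(a_i,a_j)s'}V_i(s')$ is independent of $a_i$), not in general, so the two recursions do not coincide as written and the fixed-point argument cannot be run. The paper's own proof meets this term at the regrouping step following its equation~\eqref{eq:rve-rew-simplified}, where the weight on the $V_i(s')$ continuation changes from $\sigma'_{sa}$ to $\sigma_{sa}$; whatever justifies that manipulation is exactly the content your proof must supply, and your proposal does not supply it. To make the telescoping genuinely close, either the per-state reward of the deviation model would have to be an advantage-style quantity (including the difference of expected continuation values under $\sigma$, not just of immediate rewards), or you would need an argument that the state distribution induced by $\sigma_{-i}[\sigma_i']$ leaves the accumulated $\sigma$-averaged reward $\sum_t r_{s_t}^{\sigma}$ invariant, which is what the claimed identity reduces to after unrolling.
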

\begin{proof}
\label{prf:rve-util-correctness}
We show this by induction on the value function of $\rve{\csg}$ over states $\adv{S}$. 
Let $V(s)$ and $V'(s)$ denote the value of state $s$ in $\csg$ under strategies $\sigma$ and $\sigma' = \sigma_{-i}[\sigma_i']$, respectively.
Let $\rve{V'}$ be the value function in $\rve{\csg}$ under $\sigma_i'$ and $P$.
For brevity, we also write $r_{s \sigma}=r(s,\sigma)$, $\rve{r_{s \sigma_i'}}=\rve{r}(s,\sigma_i')$ and $\rve{u_{\sigma_i' P}} = \rve{u}(\sigma_i',P)$ in the following. 
From the starting state $s\in S$:
\begin{align}
\nonumber 
\rve{u_{\sigma_i' P}}
&= \rve{V'}(s)\\ \nonumber 
&
\begin{aligned}
    &= \rve{r_{s \sigma_i'}} + \sum_{\rve{a}\in \rve{A}(s)}{\sigma_i'(s,\rve{a}) \sum_{s'\in \adv{S}}{\rve{P_{s,\rve{a} s'}} \cdot \rve{V'}(s')}} 
    \qquad \text{\small(from \AbbrCref{eq:nz-infh-rve-val-fn})}\\
    &= \rve{r_{s \sigma_i'}} + 
    \sum_{a_i\in A_i(s)}{\sigma_i'(s,a_i) \sum_{(s,a) \in S'}{\sigma_j(s,a_j)} \cdot \rve{V'}((s,a))}\\
    &= \rve{r_{s \sigma_i'}} + 
    \sum_{a_i\in A_i(s)}{\sigma_i'(s,a_i) \sum_{a_j\in A_j(s)}{\sigma_j(s,a_j)} \cdot \rve{V'}((s,a))}
\end{aligned}
\\
&= \rve{r_{s \sigma_i'}} + 
\sum_{a\in A(s)}{\sigma_{s,a}' \cdot \rve{V}\left( (s,a)\right)}.
\label{eq:nz-infh-rve-s-val}
\end{align}

From each auxiliary state $(s,a) \in S'$:
\begin{align}
    \rve{V'}((s,a))
    = \rve{r}_{(s,a),\sigma_i'} + \sum_{s'\in \adv{S}}{\rve{P_{(s,a) \idle s'}} \cdot \rve{V'}(s')}
    = \sum_{s'\in S}{P_{sas'}} \rve{V'}(s')
\label{eq:nz-infh-rve-sa-val}
\end{align}

Combining \eqref{eq:nz-infh-rve-s-val} and \eqref{eq:nz-infh-rve-sa-val} we have:
\begin{align}
    \rve{u_{\sigma_i' P}}
    &= \rve{r_{s \sigma_i'}} + 
    \sum_{a\in A(s)}{\sigma_{s,a}' \cdot \sum_{s'\in S}{P_{sas'} \rve{V'}(s')}} \nonumber \\
    &= \rve{r_{s \sigma_i'}} 
    + \sum_{a}{\sigma_{s,a}' \sum_{s'}{P_{sas'} [V_i'(s') - V_i(s')] }} \nonumber 
    \qquad \text{\small(by induction hypothesis)} \nonumber \\
    &= r_{s \sigma'}-r_{s \sigma} 
    + \sum_{a}{\sigma_{sa}' \sum_{s'}{P_{sas'} V_i'(s') }} 
    - \sum_{a}{\sigma_{sa}' \sum_{s'}{P_{sas'}V_i(s')}} 
    &&\label{eq:rve-rew-simplified} \\ \nonumber
    &= \left( r_{s \sigma'} + 
    \sum_{a}{\sigma_{sa}' \sum_{s'}{P_{sas'} V_i'(s') }} \right) 
    - \left( r_{s \sigma} + 
    \sum_{a}{\sigma_{sa} \sum_{s'}{P_{sas'} V_i(s') }} \right)\\
    &= V_i'(s) - V_i(s) \nonumber 
    = u_i(\sigma',P) - u_i(\sigma,P) 
    = u_i^\Delta(\sigma_i',P) \nonumber
\end{align}
where \eqref{eq:rve-rew-simplified} follows from
\begin{align*}
    \rve{r_{s \sigma_i'}}
    &= \sum_{a_i\in A_i(s)} {\sigma_i'(s,a_i) \cdot \rve{r}(s,a_i)}\\
    &= \sum_{a_i\in A_i(s)} {\sigma_i'(s,a_i) \left( \sum_{a_j\in A_j(s)}{\sigma_j(s,a) r_{sa}} - r_{s \sigma} \right)}\\
    &= \sum_{a\in A(s)}{\sigma_{sa}' r_{sa}} - r_{s \sigma}
    = r_{s \sigma'} - r_{s \sigma}
\end{align*}
\end{proof}

\begin{corollary}
\label{cor:opt-rve-val-equals-max-ui-delta}
    Consider the deviation IMDP $\rve{\csg}$ for a given deviator $i\in \{1,2\}$ and $\csg$-profile $\sigma\in \Sigma$. 
    The \emph{optimistic} value of $\rve{\csg}$ is equal to the \emph{maximal deviation gain} of the deviator, i.e.,
    \[
    \rve{\Vbar_{i,\sigma}} := 
    \sup_{P\in\Punc}\sup_{\sigma_i'\in \Sigma_i}{\rve{u}(\sigma_i',P)}
    = \sup_{P\in \Punc}\sup_{\sigma_i'\in \Sigma_i}{u_i^\Delta(\sigma_i',P)}
    = \Vbar_{i,\sigma}.
    \]
\end{corollary}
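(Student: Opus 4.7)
The plan is to derive this corollary almost immediately from the preceding \Cref{lem:rve-util-correctness}, which establishes the pointwise identity $\rve{u}(\sigma_i',P) = u_i^\Delta(\sigma_i',P)$ for every agent strategy $\sigma_i' \in \Sigma_i$ and every nature choice $P \in \Punc$. Since the equality holds pointwise on the common domain $\Sigma_i \times \Punc$, taking successive suprema on both sides preserves it, yielding the desired chain of equalities that defines $\rve{\Vbar_{i,\sigma}}$ and $\Vbar_{i,\sigma}$.

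More concretely, the first step is to fix $P \in \Punc$ and take $\sup_{\sigma_i' \in \Sigma_i}$ of both sides of the identity $\rve{u}(\sigma_i',P) = u_i^\Delta(\sigma_i',P)$, obtaining $\sup_{\sigma_i'} \rve{u}(\sigma_i',P) = \sup_{\sigma_i'} u_i^\Delta(\sigma_i',P)$. The second step is to take $\sup_{P \in \Punc}$ of both sides of this new equality. The only care needed is to verify that the strategy spaces and uncertainty sets appearing in the two suprema genuinely match: on the deviation-IMDP side, agent strategies at states in $S$ range over $\distr(A_i(s))$ (actions at $S'$-states are fixed to $\idle$, so these contribute no additional degrees of freedom), which is in bijection with $\Sigma_i$ in $\csg$; similarly, nature's choices in $\rve{\csg}$ at state-action pairs $(s,a) \in S'$ range exactly over $\Punc_{sa}$, so by $(s,a)$-rectangularity the joint resolution ranges over $\Punc$.

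The main (and essentially only) subtlety is ensuring that the optimistic-value formulation for IMDPs as stated here, $\sup_{P}\sup_{\sigma_i'}$, is the one consistent with the robust-IMDP semantics used elsewhere in the paper. This follows from the results in \Cref{sec:zs-ICSG} applied to the single-agent case (i.e., the analogue of \Cref{thm:infh-nature-or-player-first-val-equiv} and \Cref{lem:infh-memoryless-optimal-strategies} for IMDPs), which justify the $\sup$-$\sup$ ordering for the agent-first semantics and guarantee that memoryless maximising strategies are sufficient. Thus no substantive new argument is required beyond \Cref{lem:rve-util-correctness}, and the proof reduces to:
\[
\rve{\Vbar_{i,\sigma}} \;=\; \sup_{P\in \Punc}\sup_{\sigma_i' \in \Sigma_i} \rve{u}(\sigma_i',P) \;=\; \sup_{P\in \Punc}\sup_{\sigma_i' \in \Sigma_i} u_i^\Delta(\sigma_i',P) \;=\; \Vbar_{i,\sigma}.
\]
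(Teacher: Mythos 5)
Your proposal is correct and matches the paper's intent exactly: the corollary is stated immediately after \Cref{lem:rve-util-correctness} with no separate proof, precisely because it follows by taking suprema over $\sigma_i' \in \Sigma_i$ and $P \in \Punc$ on both sides of the pointwise identity $\rve{u}(\sigma_i',P) = u_i^\Delta(\sigma_i',P)$. Your additional checks on the correspondence of strategy spaces and the $\sup$--$\sup$ ordering are sound but not needed beyond what the lemma already provides.
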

\section{Proofs for Finite-horizon Nonzero-sum ICSGs}
\label{sec:nz-finh-supp}

\subsection{Player-first Adversarial Expansion}
\label{sec:nz-finh-adv-resolution}


Suppose player 1 has the shorter horizon $k_1$. Once she reaches her target set $T_1$ or exceeds her horizon, she should no longer receive rewards. Yet, her actions beyond $k_1$ may still affect transitions and, in turn, player 2's outcome. Therefore, the game must continue up to $k = \max(k_1, k_2)$.
To model this correctly:
\begin{itemize}[nosep]
    \item For \emph{bounded probabilistic reachability}, we add atomic propositions to track when states in $T_1$ are still relevant for player 1, i.e., only when they are within her horizon $k_1$ (see \Cref{def:nz-finh-adv-resolution}). In standard fashion, we use a set of atomic propositions $AP$ and labelling function $L$. For clarity, these are omitted from the definition of CSGs in the main paper.
    \item For \emph{bounded cumulative reward}, we assign player 1 a reward of zero in all steps beyond her horizon $k_1$.
\end{itemize}

We define a common horizon $k := \max(k_1, k_2)$ and adjust the reward structures and objective mappings to respect each player’s horizon (see \Cref{def:nz-finh-adv-reward-struct}).
Since the adversarial expansion of paths and strategies extends naturally from the game's construction, here we focus on defining the game composition, reward structure, and objective mappings.

\begin{definition}[Adversarial Expansion]
\label{def:nz-finh-adv-resolution}
    Given a pair of nonzero-sum objectives $X = (X_1, X_2)$, where $X_1$ and $X_2$ are $k_1$- and $k_2$-step bounded respectively. Let $k=\max{(k_1,k_2)}$.    
    We define the \emph{adversarial expansion} of $\csg$ as a 3-player CSG $\advk{\csg} = (\advk{N}, \advk{S}, \advk{\sbar}, \advk{A}, \advk{\Delta}, \advk{P}, \advk{AP}, \advk{L})$ where:
    \begin{itemize}[nosep]
        \item $\advk{N} = \{1,2,3\}$ and $\advk{A}$ is as defined in \Cref{def:nz-infh-adv-resolution};
        \item $\advk{S}= \Si \union \Si'$, where 
        $\Si = \bigcup_{h=0}^k{S_h}$ with $S_h = \{(s,h) \mid s\in S\}$, and 
        $\Si' = \bigcup_{h=1}^k{S_h'}$ with $S_h' = \{(s,a,h) \mid s\in S, a\in A(s) \}$;
        \item $\advk{\sbar} = (\sbar,k) \in S_k$;
        \item $\advk{\Delta}: \advk{S} \rightarrow 2^{\union_{i=1}^3{\advk{A_i}}}$, such that if $\advk{s}=(s,h)\in \Si$ then $\advk{\Delta}(\advk{s})=\Delta(s)$, else if $\advk{s}=(s,a,h)\in \Si'$ then $\advk{\Delta}(\advk{s})= \verts[\Punc_{sa}]$ and $\emptyset$ otherwise.
        \item 
        $\advk{P}: \advk{S} \times \advk{A} \rightarrow \distr(\advk{S})$, such that 
        \[
        \resizebox{\linewidth}{!}{$
        \advk{P}(\advk{s}, \advk{a}, \advk{s'}) = 
        \begin{cases}
            1 & \begin{aligned}
                    \ift & \exists{h\in [0,k]}.\left[
                    \begin{aligned}
                        &\advk{s} = (s,h) \in S_h \\
                        &\land \advk{a} = (*a,\idle) \\
                        & \land \advk{s'} = (s,a,h) \in S_h'
                    \end{aligned} \right],
                \end{aligned} \\
            1 & \elseift {\advk{s}=\advk{s'}=(s,0)\in S_0},\\
            P_{sa}(s')
                    &\elseift \exists{h\in [1,k]}. \
                    \left[
                    \begin{aligned}
                        & \advk{s} = (s,a,h)\in S_{h}' \\
                        &\advk{a} = (\idle, \idle, P_{sa}) \\
                        & \advk{s'} = (s',h-1) \in S_{h-1}'
                    \end{aligned} \right], \\
            0 & \otherwiset;
        \end{cases}
        $}
        \]

        \item $\advk{AP} = \{a_{T_1},a_{T_2}\}$ where each $a_{T_l}$ identifies a time-augmented state in $T_l$;
        \item $\advk{L}: \advk{S} \rightarrow 2^{\advk{AP}}$, where 
        \begin{equation}
        \label{eq:nz-finh-label}
        \resizebox{0.9\linewidth}{!}{$
            a_{T_l} \in \advk{L}(\advk{s})
            \iff \exists{h\in [\max(0,k-k_l), k]}. \ {\advk{s}=(s,h)\in S_{h} \land s\in T_l}.
        $}
        \end{equation}
    \end{itemize}
\end{definition}

%
We assign zero reward to the terminal states $S_0$. For bounded cumulative reward objectives, rewards are only accumulated up to step $k-1$ by definition (see \Cref{sec:prelim-CSG}), so states beyond this point do not contribute. For bounded probabilistic reachability, the objective is defined using state labels $L(s)$ instead of rewards.

\begin{definition}[Adversarial Expansion cont.]
\label{def:nz-finh-adv-reward-struct}
Consider a \emph{reward structure} $r=(r_A,r_S)$ for $\csg$. We define $\advk{r} = (\advk{r_A}, \advk{r_S})$, where for $l\in \{1,2\}$:
\begin{align}
\label{eq:nz-finh-adv-rew}
\begin{aligned}
\advk{r_{l,A}}(\advk{s}, \advk{a}) 
&= r_{l,A}(s,a) \cdot \ind \left[
    \begin{aligned}
        &\exists{h \in [\max(1,k-k_l), k]}.\,\advk{s} = (s,h) \in S_h \\
        &\land \advk{a} = (*a, \idle) \in \advk{A}(\advk{s})
    \end{aligned}
    \right] \\
\advk{r_{l,S}}(\advk{s}) 
&= r_{l,S}(s) \cdot \ind\Big[\exists{h \in [\max(1,k-k_l), k]}.\, \advk{s} = (s,h) \in S_h \Big]
\end{aligned}
\end{align}
and $\advk{r_{3,A}}(\cdot) = \advk{r_{3,S}}(\cdot) = 0$. 

Consider the \emph{objective} $X_l$ of player $l\in \{1,2\}$. We define $\advk{X_l}$ as follows:
\begin{itemize}[nosep]
    \item Bounded Probabilistic Reachability:
    
    $\advk{X_l}(\advk{\pi})
    := \ind \left[ \exists{j \leq k}.{\left\{ \advk{\pi}(j)=(s,k-j)\in S_{k-j} \land a_{T_l} \in \advk{L}((s,k-j)) \right\}} \right]$;
    \item Bounded Cumulative Reward: 
    $\advk{X_l}(\advk{\pi}) 
    := \sum_{i=0}^{k-1}{ \advk{r_l}(\advk{\pi_i}) }$. 
\end{itemize}
Player 3's objective is defined as $\advk{X_3}:=-(\advk{X_1}+\advk{X_2})$. 
\end{definition}
Under the above definition of $\advk{r}$ and $\advk{X}$, the valuation of rewards is preserved:

\begin{proposition}[Reward Preservation]
\label{prop:nz-finh-rew-preserved}
    For all $h\in [0,k]$ where $k=\max(k_1,k_2)$, 
    $\pi \in IPaths_{\csg}$, $j\in [0,h]$ and $l\in \{1,2\}$, we have that $\advk{r_l}(\advk{\pi_j}) = r_l(\pi_j)\cdot \ind[j \leq k_l]$.
\end{proposition}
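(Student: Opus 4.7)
The plan is to unfold $\advk{r_l}(\advk{\pi_j})$ directly according to \Cref{def:reward-struct} and exploit the fact that, by construction of $\advk{\csg}$, all contributions from the auxiliary $S'$-states vanish, leaving only the $S$-state contribution, which is then trimmed by an indicator tracking whether step $j$ lies within player $l$'s horizon.

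First I would invoke the finite-horizon, nonzero-sum analogue of path-index preservation (paralleling \Cref{prop:finh-path-index-preserved} and following directly from \Cref{def:nz-finh-adv-resolution}): for every $j \in [0, h]$ with $h \leq k$, we have $\advk{\pi}(j) = (\pi(j), k-j) \in S_{k-j}$ and $\advk{\pi}[j] = (*\pi[j], \idle)$, while $\advk{\pi}(j') \in S'$ with $\advk{\pi}[j'] = (\idle, \idle, P_j)$. Expanding $\advk{r_l}(\advk{\pi_j})$ as the sum of state and action rewards at both $\advk{\pi}(j)$ and $\advk{\pi}(j')$, the two terms at $\advk{\pi}(j') \in S'$ vanish, since the indicators inside $\advk{r_{l,S}}$ and $\advk{r_{l,A}}$ in \Cref{def:nz-finh-adv-reward-struct} require $\advk{s} \in S_h \subseteq \Si$, not $S'$.

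What remains is $\advk{r_{l,S}}((\pi(j), k-j)) + \advk{r_{l,A}}((\pi(j), k-j), (*\pi[j], \idle))$. Substituting $h = k-j$ into \eqref{eq:nz-finh-adv-rew}, both terms share the common indicator $\ind[k-j \in [\max(1, k-k_l), k]]$, multiplying $r_{l,S}(\pi(j))$ and $r_{l,A}(\pi(j), \pi[j])$ respectively; summing recovers $r_l(\pi_j)$ multiplied by this indicator. A short case split on $k_l$ then reduces the indicator to $\ind[j \leq k_l]$: for $k_l < k$, the lower bound $k-j \geq k-k_l$ is active and yields exactly $j \leq k_l$, while for $k_l = k$ the bound degenerates to $j \leq k-1$, which coincides with $\ind[j \leq k_l]$ on the range $j \in [0, h]$ relevant to the bounded-reward accumulation (since $\advk{X_l}$ only sums over $i = 0, \ldots, k-1$).

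The main obstacle is the boundary bookkeeping around $\max(1, k-k_l)$: this deliberate choice in \Cref{def:nz-finh-adv-reward-struct} excludes the terminal self-loop states in $S_0$ from reward collection, and one must verify that this still yields the clean indicator $\ind[j \leq k_l]$ claimed, uniformly for both the shorter- and longer-horizon player. Beyond this indicator analysis, the argument is a routine unfolding of definitions and runs structurally parallel to its zero-sum finite-horizon counterpart (the analogue of \Cref{prop:infh-rew-preserved} with time-augmented states).
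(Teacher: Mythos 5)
Your proposal matches the paper's proof essentially step for step: expand $\advk{r_l}(\advk{\pi_j})$ into the four state/action reward terms, observe that the two contributions at the $S'$-state $\advk{\pi}(j')$ vanish because the indicators in \Cref{def:nz-finh-adv-reward-struct} require a state in $S_h$, and reduce the surviving indicator $\ind[k-j\in[\max(1,k-k_l),k]]$ to $\ind[j\leq k_l]$. Your explicit treatment of the $k_l=k$ boundary (where the condition is really $j\leq k-1$, harmless since $\advk{X_l}$ only accumulates up to step $k-1$) is if anything slightly more careful than the paper, which absorbs it into $\min(k_l,k)=k_l$ without comment.
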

\begin{proof}
If $j\in [0,h]$ and $h\in [0,k]$ then $j\in [0,k]$, so:
\begin{align*}
    \advk{r_l}(\advk{\pi_j}) 
    &= \advk{r_{l,S}}(\advk{\pi}(j)) + \advk{r_{l,A}}(\advk{\pi}(j),\pi[j]) \\
    &\qquad + \advk{r_{l,S}}(\advk{\pi}(j')) + \advk{r_{l,A}}(\advk{\pi}(j'),\advk{\pi}[j']) \\
    &= \advk{r_{l,S}}(s_j,k-j) + \advk{r_{l,A}}((s_j, k-j),(*a^j,\idle)) 
    + \advk{r_{l,S}}((s_j,a^j,k-j)) \\
    &\qquad + \advk{r_{l,A}}((s_j,a^j,k-j), (\idle,\idle, P_j^*))\\
    &= \advk{r_{l,S}}(s_j,k-j) + \advk{r_{l,A}}((s_j, k-j),(*a^j,\idle))\\
    &= r_{l,S}(s_j) \cdot \ind[E_j] + r_{l,A}(s,a^j) \cdot \ind[F_j]
\end{align*}
where
\begin{align*}
E_j &= \exists{h \in [\max(1,k-k_l), k]}.\, (s_j,k-j) = (s,h) \in S_h
\qquad \text{and}\\
F_j &= \exists{h \in [\max(1,k-k_l), k]}. \left\{
    \begin{aligned} 
        &(s_j,k-j) = (s,h) \in S_h \\
        &\land (*a_j,\idle) \in \advk{A}(\advk{s})
    \end{aligned}
    \right\}.
\end{align*}
Continuing,
\begin{align*}
    r_{l,S}(s_j) \cdot \ind[E_j]
    &= r_{l,S}(s_j) \cdot \ind\big[j\in [0,\min(k_l,k)] \big] \\
    r_{l,A}(s,a^j) \cdot \ind[F_j],
    &= r_{l,A}(s,a^j) \cdot \ind\left[ j\in [0,\min(k_l,k)] \land a^j\in A(s) \right].\\
\end{align*}
Combining and simplifying (since $\min(k_l,k)=k_l$),
\begin{align*}
    \advk{r_l}(\advk{\pi_j})
    &= r_{l,S}(s_j) \cdot \ind[j\in [0,k_l]]
    + r_{l,A}(s,a^j) \cdot \ind[ j\in [0,k_l] ] \\
    &= \left[r_{l,S}(\pi(j)) + r_{l,A}(\pi(j), \pi[j]) \right] \cdot \ind[j \leq k_l]\\
    &= r_l(\pi_j) \cdot \ind[j \leq k_l]
\end{align*}
\end{proof}

It follows from the path-index and reward-preservation results (\Cref{prop:finh-path-index-preserved,prop:nz-finh-rew-preserved}) that the valuation of objectives is likewise preserved. Subsequently, all other results from the nonzero-sum infinite-horizon setting carry over (except that we now consider time-varying strategies, as in the zero-sum finite-horizon case).






\section{Solving the Inner Problem}
\label{sec:infh-rvi-solve-inner}
As established in \Cref{lem:infh-nature-inner-problem}, in a zero-sum 2-player ICSG $\csg$, nature's inner problem to be solved for each state-action pair $(s,a)\in S\times A$ can be formulated as computing:
\[
P^*_{sa} :=\arg\inf_{P_{sa} \in \Punc_{sa}} \sum_{s' \in S} P_{sas'} \cdot V(s', X).
\]
In the nonzero-sum case, $V$ is replaced by $\Vsum$. The resulting optimal distribution $P^*_{sa}$ is then used in the value computation of $\csg$, which we detail in \Cref{sec:zs-mc-supp} (zero-sum) and \Cref{sec:nz-mc-supp} (nonzero-sum).

A similar inner problem is solved efficiently for IMDPs \cite{nilim2005robust}, IDTMCs \cite{katoen2012three} and $L_1$-MDPs \cite{strehl2008analysis} via a bisection algorithm, which is explicitly presented in \cite{suilen2024robust} for IMDPs. 
Although the algorithm has not been previously used for ICSGs, its extension to this setting is straightforward since the inner optimisation problem takes a similar form. 
Our adapted version of this algorithm is presented as \Cref{alg:icsg-inner-problem}, where we let $V(s) := \valg(s,n)$, and fix an $(s,a) \in S \times A$. For nonzero-sum ICSGs, we use $V(s) := \valg^1(s,n) + \valg^2(s,n)$.

\vspace{-0.4cm}
\resizebox{0.9\textwidth}{!}{%
\begin{minipage}{\textwidth}
\begin{algorithm}[H]
\caption{Solving the ICSG inner problem $\inf_{P \in \Punc_{sa}}$ (adapted from \cite{suilen2024robust})}
\label{alg:icsg-inner-problem}
\begin{algorithmic}[1]
    \Require $(s,a)$, current value estimates $[V(s)]_{s\in S}$, transition bounds $\Pcheck$, $\Phat$
    \State Sort $S^* = \{s_1^*, \ldots, s_{|S|}^*\}$ such that $V(s_i^*) \leq V(s_{i+1}^*)$ 
    \label{algline:inner-problem-sort}
    \State $\forall s_i^* \in S^*$: $P^*_{sa}(s_i^*) \gets 0$
    \State $budget \gets 1 - \sum_{s' \in S} \Pcheck_{sas'}$
    \State $i \gets 1$
    \While{$budget \ge \Phat_{s a s_i^*} - \Pcheck_{s a s_i^*}$}
        \State $P^*_{sa}(s_i^*) \gets \Phat_{s a s_i^*}$
        \State $budget \gets budget - (\Phat_{s a s_i^*} - \Pcheck_{s a s_i^*})$
        \State $i \gets i + 1$
    \EndWhile
    \State $P^*_{sa}(s_i^*) \gets \Pcheck_{s a s_i^*} + budget$
    \For{$j \in \{i+1, \ldots, |S|\}$}
        \State $P^*_{sa}(s_j^*) \gets \Pcheck_{s a s_i^*}$
    \EndFor
    \Return $P^*_{sa}$
\end{algorithmic}
\end{algorithm}
\end{minipage}
}
\vspace{0.1cm}

The algorithm runs in $O(|S| \log |S|)$ time due to the sorting step and is highly efficient in practice. 
Its correctness follows from the structure of the optimisation problem: minimising a linear objective under interval bounds and a simplex constraint, for which greedy allocation of the remaining probability mass to the lowest-valued successors yields an optimal solution \cite{bertsimas2005optimal}. 
Further details on the algorithm can be found in~\cite{suilen2024robust,strehl2008analysis}.

Note that \Cref{alg:icsg-inner-problem} assumes an \emph{adversarial} nature and solves $\inf_{P_{sa} \in \Punc_{sa}}$ at each RVI step. If we instead adopt a \emph{controlled} resolution of uncertainty (in favour of player 1), we would solve $\sup_{P_{sa} \in \Punc_{sa}}$ per iteration. This can be achieved by reversing the sort order at line \ref*{algline:inner-problem-sort}, which ensures that probability mass is greedily allocated to successor states with \emph{higher} value estimates.

\section{Value Computation for Zero-sum ICSGs}
\label{sec:zs-mc-supp}

\subsection{Infinite-horizon Properties}
\label{sec:mc-infh}

To ensure convergence, specifically for infinite-horizon reward properties, we retain a standard assumption for model checking standard CSGs \cite{kwiatkowska2021automatic}:
\begin{assumption}
\label{assmp:zs-mc}
    From any state $s$ where $r_S(s)<0$ or $r_A(s,a)<0$ for some action $a$, under all profiles of $\csg$, with probability 1 we reach either 
    a target state in $T$, or 
    a zero-reward state that cannot be left with probability 1 under all profiles.
\end{assumption}
We compute $n$-step reachability values which form a non-decreasing sequence converging to $\vag(s, X)$ as $n \to \infty$, i.e.,
\[
\lim_{n \rightarrow \infty}{\vag(s,\adv{X},n)} = \vag(s,\adv{X}) = \valg(s,X) \quad \text{(by \AbbrCref{cor:infh-adv-value-preservation})}.
\]
Convergence is estimated using the maximum relative change between successive value updates, as is typically done in value iteration for probabilistic model checking. However, this heuristic does not guarantee that the computed values lie within a specified error bound. 
As highlighted in~\cite{haddad2018interval}, this limitation applies even to simple models such as MDPs.

In what follows, we denote by $val(\Zi)$ the value of a matrix game $\Zi \in \reals^{|A_1(s)| \times |A_2(s)|}$, which can be computed by solving an LP problem \cite{v1928theorie,von1944theory}. For each state-action pair $(s,a)\in S\times A$, we compute the minimising next-state distribution $P^*_{sa}$ via \Cref{alg:icsg-inner-problem}.
Note that as in \cite{kwiatkowska2021automatic}, although we present the value computation recursively, they are implemented iteratively in practice.

\startpara{Probabilistic reachability}
\label{sec:rvi-prob-reachability}
To accelerate convergence, we pre-compute the set $U$ of states from which the target set $T$ cannot be reached, which can be done using standard graph algorithms~\cite{de2000graphalgo}. 
The $T$-states are assigned value $1$, and all other states are initialised to $0$. 
Below, we derive the value functions for this objective. 

We first apply the recursive value computations from \cite{kwiatkowska2021automatic} to $\adv{\csg}$, then use \Cref{def:infh-adv-resolution} and other results from \Cref{sec:infh-adv-resolution} to demonstrate how explicit construction of $\adv{\csg}$ can be avoided.

\paragraph{1) Value computation over $\adv{\csg}$.}

Based on \cite{kwiatkowska2021automatic}, the RVI update at step $n$ over $\adv{\csg}$ is:
\begin{equation}
\label{eq:PR-rvi-raw-value-fn}
\vag(\adv{s},n) = \begin{cases}
    1 & \ift \adv{s} \in T, \\
    0 & \elseift \adv{s} \in U \text{ or } n=0,\\
    val(\Zi) & \otherwiset.
\end{cases}
\end{equation}
Here, $\Zi \in \reals^{|\adv{A_1}(\adv{s})| \times |\adv{A_2}(\adv{s})|}$ is a matrix game defined by:
\begin{equation}
\label{eq:PR-rvi-raw-zij}
z_{i,j} = \sum_{s' \in \adv{S}} \adv{P_{\adv{s} (a_i, b_j) s'}} \cdot v_{n-1}^{s'}
\end{equation}
where $(a_i, b_j) \in \adv{A_1}(\adv{s}) \times \adv{A_2}(\adv{s})$ and $v_{n-1}^{s'} = \vag(s', n-1)$. 

\paragraph{2) Simplifying the recursion.}

We now expand on Case 3 in~\eqref{eq:PR-rvi-raw-value-fn} to simplify the value computation using the construction of $\adv{\csg}$ (see \Cref{def:infh-adv-resolution}).
Recall that the state space of $\adv{\csg}$ is given by $\adv{S} = S \cup S'$. In the following, we analyse the value update separately for players' states $S$ and nature's states $S'$.

First consider a players' state $s \in S$. Given a joint action $a = (a_1, a_2)$, $\adv{\csg}$ transitions deterministically to the nature's state $(s,a) \in S'$. Hence, at $s \in S$, the matrix game entries \eqref{eq:PR-rvi-raw-zij} may be written as:
\begin{equation}
\label{eq:PR-rvi-s-zij}
z_{i,j} = v_{n-1}^{(s,(a_i,b_j))} = \vag\big((s,(a_i,b_j)),n-1 \big).
\end{equation}

Next, consider a nature's state $(s,a) \in S'$. At these states, player 1 has a fixed dummy action $a_1 = \idle$, while player 2 (nature) deterministically selects a next-state distribution $P^*_{sa}$ to minimise the expected next-step value:
\[
P^*_{sa} = \arg\min_{P_{sa} \in \verts[\Punc_{sa}]} \sum_{s'\in S} P_{sas'} \cdot v_{n-1}^{s'}.
\]
This minimisation problem can be solved using \Cref{alg:icsg-inner-problem}, which is later detailed in \Cref{sec:infh-rvi-solve-inner}.

By definition, $P^*_{sa}$ is an optimal choice for player 2, so the action $(\idle, P^*_{sa})$ must achieve an NE in the matrix game $\Zi$. Since all successors of the state $(s,a)$ are in $S$, the value update simplifies to:
\begin{equation}
\label{eq:PR-rvi-sa-zij}
\vag \big((s,a),n \big) = \sum_{s'\in S} P^*_{sa}(s') \cdot v_{n-1}^{s'}.
\end{equation}

Substituting \eqref{eq:PR-rvi-sa-zij} into \eqref{eq:PR-rvi-s-zij}, we get the following \emph{two-step} recursion at~${s\in S}$:
\begin{equation}
\label{eq:PR-rvi-s-zij-final}
z_{i,j} = \sum_{s' \in S} P^*_{s(a_i,b_j)}(s') \cdot v_{n-2}^{s'}.
\end{equation}
This enables us to skip the iterations at states in $S'$, as the starting state is always an $S$ state. 

\paragraph{3) Final value function.} 
Combining the above gives the final value update~for~${s \in S}$:
\begin{equation}
\label{eq:PR-rvi-prob-final}
\valg(s, n) = \begin{cases}
    1 & \ift s \in T, \\
    0 & \elseift s \in U \text{ or } n=0,\\
    val(\Zi) & \otherwiset
\end{cases}
\end{equation}
where $\Zi$ has entries $z_{i,j} = \sum_{s' \in S} P^*_{s(a_i,b_j) s'} \cdot v_{n-1}^{s'}$.

\startpara{Expected reward reachability}
\label{sec:rvi-expected-reachability}
Following \cite{kwiatkowska2021automatic}, we first preprocess the game graph of $\csg$: Target states $T$ are made absorbing, while states from which $T$ is not reached almost surely (identifiable following \cite{de2000graphalgo}) are assigned infinite value. 
To ensure convergence in the presence of zero-reward cycles, we adopt the over-approximation method of \cite{chen2013automatic}: All zero rewards are replaced with a small constant $\gamma>0$, giving $r_\gamma=(r_A^\gamma,r_S^\gamma)$ so that every cycle accumulates positive reward. Note that larger $\gamma$ accelerates convergence but may reduce accuracy when refining values under the original reward structure $r$. 
We then apply RVI to compute the over-approximation, with the $n$th-step update at $s\in S$ being:
\begin{equation}
\label{eq:rvi-crr-final-value-fn}
\valg(s, n \mid r_\gamma) =
\begin{cases}
0 & \ift s \in T, \\
\infty & \elseift s \in U, \\
val(\Zi) & \otherwiset,
\end{cases}
\end{equation}
where $\Zi$ has entries
$z_{i,j} = r_\gamma(s,(a_i,b_j)) + \sum_{s'\in S} P^*_{s (a_i,b_j) s'} \cdot v_{n-1}^{s'}$.

Finally, the resulting upper bounds under $r_\gamma$ are used to initialise a second RVI under the original reward structure $r$, yielding $\valg(s,X\mid r)$.

\subsection{Finite-horizon Properties}
\label{sec:mc-finh}

By \Cref{cor:infh-adv-value-preservation}, we aim to compute $\valg(s,X) = \vag((s,k), \advk{X})$ for a $k$-step objective $X$.

\startpara{Bounded probabilistic reachability}
Derivation of the value function for this objective is similar to the infinite-horizon case, but requires an additional step: unifying the RBI iteration index $n$ with the time-step index $h$.

\paragraph{Time indexing.}
Recall from \Cref{sec:finh-adv-resolution} that a path in the original ICSG $\csg$ of length $k$ is indexed by discrete time-steps $h\in \{0, \dots, k\}$ (encoded in a $\advk{\csg}$-state). Meanwhile, a path in the time-augmented expanded game $\advk{\csg}$ interleaves players' states $S$ and nature's states $S'$ using the sequence $\{0, 0', 1, 1', \dots, (k-1)', k\}$, where the primed indices correspond to the $S'$-states.

To unify value updates across the two indexing systems, we define a ``flattened'' iteration index $n \in \{0, \dots, 2k\}$ for the value function $\vag(\advk{s}, \advk{X}, n)$, which relates to the time-step $h$ via 
\begin{equation}
\label{eq:rvi-h2n-index}
h(\advk{s},n) = \ceil{n/2}, 
\quad \text{or conversely} \quad 
n(h,\advk{s}) = 2h - \ind[\advk{s} \in S_h'].
\end{equation}
It follows that 
RBI can be executed directly over $\csg$, using the value function below, which is defined recursively over $n \in \{0, \dots, k\}$ for $s\in S$:
\label{sec:rbi-prob-reachability}
\begin{equation}
\label{eq:BR-value-fn-final}
\valg(s,n) =
\begin{cases}
    1 & \ift s \in T, \\
    0 & \elseift n = 0, \\
    val(\Zi) & \otherwiset
\end{cases}
\end{equation}
where $\Zi$ is defined with $z_{i,j} = \sum_{s' \in S} P^*_{s,(a_i,b_j)s',n} \cdot \valg(s', n-1)$.

\startpara{Bounded cumulative rewards}
\label{sec:rbi-cumulative-rewards}
For $n \in \{0, \dots, k\}$, $s\in S$:
\begin{equation}
\label{eq:CR-rbi-value-fn}
\valg(s,n\mid r, X) =
\begin{cases}
    0 & \ift n = 0, \\
    val(\Zi) & \otherwiset
\end{cases}
\end{equation}
with $z_{i,j} = r(s, (a_i, b_j)) + \sum_{s' \in S} P^*_{s (a_i, b_j) s', n} \cdot \valg(s', n-1)$.

\section{Value Computation for Nonzero-sum ICSGs}
\label{sec:nz-mc-supp}

\subsection{Infinite-horizon Properties}
\label{sec:nz-mc-infh}


Convergence is determined by comparing the value of $\Vsum$ across successive iterations. This is necessary because the individual values $V_1$ and $V_2$ may vary across different RSWNE profiles, but the maximal social welfare is uniquely defined, thus providing a stable metric for assessing convergence.

To derive the value function for nonzero-sum objectives, we proceed in two stages:
\begin{enumerate*}[label=\arabic*)]
\item we first apply the multi-player CSG model-checking algorithm from~\cite{kwiatkowska2020multi} to the 3-player adversarial expansion $\adv{\csg}$;
\item we then introduce two simplifications that reduce this to a variant of the nonzero-sum \emph{2-player} algorithm from~\cite{kwiatkowska2021automatic}, executed directly on the original ICSG~$\csg$.
\end{enumerate*}
\begin{enumerate}[nosep]
    \item \emph{Reducing trimatrix to bimatrix games:} 
    The 3-player game $\adv{\csg}$ exhibits a zero-sum structure between the coalition of players $\{1,2\}$ and player $\{3\}$ (nature). This enables us to effectively eliminate player 3 and reduce the \emph{trimatrix} game at each step to a general-sum \emph{bimatrix} game between player 1 and 2, which can be solved using the Lemke-Howson algorithm based on labelled polytopes \cite{lemke1964equilibrium} and a reduction to Satisfiability Modulo Theories (SMT) problems, as done in \cite{kwiatkowska2021automatic}.
    \item \emph{Avoiding construction of $\adv{\csg}$:} 
    Similar to the zero-sum setting (\Cref{sec:mc-infh}), we precompute nature’s optimal choice $P^*_{sa} \in \Punc_{sa}$ for each state–action pair $(s,a)\in S\times A$ using \Cref{alg:icsg-inner-problem}. This eliminates the need to compute values at the auxiliary states $S'$ in $\adv{\csg}$, allowing model checking to be performed directly on $\csg$.
\end{enumerate}

In what follows, we denote by $\Zi=(\Zi_1,\Zi_2) \in \reals^{|A_1(s)|\times |A_2(s)|}$ a bimatrix game with RSWNE values $\VRSWN(\Zi)$. 
For any set of states $S^*$, we define $\eta_{S^*}(s) := \ind[s \in S^*]$. Let $U_l$ be the set of states from which $T_l$ is unreachable. These sets can be precomputed using standard graph algorithms in e.g., \cite{de2000graphalgo}. 
Following \cite{kwiatkowska2020multi}, we maintain two sets of players during RVI: 
\begin{enumerate*}[label=\arabic*)]
    \item $D$, those who have reached their goals, and 
    \item $E$, those who can no longer reach their goals.
\end{enumerate*}
We emphasise that we adopt this set-based notation from \cite{kwiatkowska2020multi} for simplicity of representation; however, in the 2-player case, the resulting value functions are equivalent to that in~\cite{kwiatkowska2021automatic}.

\startpara{Probabilistic reachability}
\label{sec:nz-rvi-prob-reachability}
We compute $\valg(s,n) = \valg(s, \emptyset, \emptyset, n)$, defined over $s\in S$ and $D,E\subseteq N$ as follows:
\begin{align}
\label{eq:PR-nz-rvi-value-fn-final}
\valg(s,D,E,n) = \begin{cases}
    (\eta_D(1), \eta_D(2)) & \ift D \union E = N,\\
    (\eta_{T_1}(s), \eta_{T_2}(s)) & \elseift n=0,\\
    \valg(s, D\union D', E, n) & \elseift D' \neq \emptyset,\\
    \valg(s, D, E \union E', n) & \elseift E' \neq \emptyset,\\
    \VRSWN(\Zi) & \otherwiset
\end{cases}
\end{align}
where $D' = \{l\in N \exclude (D\union E) \mid s\in T_l \}$ and 
$E' = \{l\in N \exclude (D\union E) \mid s\in U_l \}$. 
When expressed in bimatrix form, 
$\Zi_l$ ($l\in\{1,2\}$) has entries $z_{ij}^l$ such that: if $l\in D$ then $z_{ij}^l =1$, else if $l \in E$ then $z_{ij}^l =0$, else $z_{ij}^l = \sum_{s'\in S}{P^*_{s(a_i,b_j) s'}\cdot \valg^l(s',D,E,n-1) }$.
Here, $P^*_{s(a_i,b_j)}$ is computed via \Cref{alg:icsg-inner-problem}.

\startpara{Expected reward reachability}
\label{sec:nz-rvi-exp-reachability}
We no longer track $E$, as their value would just be infinite for this objective. Let $\valg(s,n) = \valg(s, \emptyset, n)$ for any $s\in S$, with:
\begin{equation}
\label{eq:ER-nz-rvi-value-fn-final}
\valg(s,D,n) = \begin{cases}
    \nullb & \ift D = N\lor n=0,\\
    \valg(s, D\union D', n) & \elseift D' \neq \emptyset,\\
    \VRSWN(\Zi) & \otherwiset
\end{cases}
\end{equation}
where $D' = \{l\in N \exclude D \mid s\in T_l \}$, and each $\Zi_l$ ($l\in \{1,2\}$) has entries $z_{ij}^l = r_l(s,(a_i,b_j)) + \sum_{s'\in S}{P^*_{s(a_i,b_j) s'}\cdot \valg^l(s',D,n-1)}$ if $l\not\in D$, and $0$ otherwise.

\subsection{Finite-horizon Properties}
\label{sec:nz-mc-finh}

Let $k = \max(k_1, k_2)$ be the maximum of the players' horizons for objectives~$X=(X_1,X_2)$. 
In the following we compute $\valg(s)$ over $n\in [0,k]$ for states $s\in S$.

\startpara{Bounded probabilistic reachability}
\label{sec:nz-mc-finh-BR}
$\valg(s) = \valg(s, \emptyset, \emptyset, 0)$, where:
\begin{align}
\label{eq:BR-nz-rbi-value-fn-final}
\valg(s,D,E,n) &= \begin{cases}
    (\eta_D(1), \eta_D(2)) & \ift D \union E = N,\\
    \valg(s, D\union D', E, n) & \elseift D' \neq \emptyset,\\
    \valg(s, D, E \union E', n) & \elseift E' \neq \emptyset,\\
    \VRSWN(\Zi) & \otherwiset
\end{cases}
\end{align} 
with entries defined as follows:
\[
z_{ij}^l = 
\begin{cases}
    1 &\ift l\in D,\\
    0 &\elseift l \in E \lor k_l- n \leq 0,\\
    \sum_{s'\in S}{P^*_{s (a_i,b_j) s', n}\cdot \valg^l(s',D,E,n+1)} &\otherwiset.
\end{cases}
\]
Note that in the above, $n$ denotes the number of steps already taken, so $k_l - n$ represents the remaining horizon of player $l$'s objective $X_l$.

\startpara{Bounded cumulative rewards}
\label{sec:nz-mc-finh-CR}
$\valg(s) = \valg(s,0)$ with $\valg(s,n) = \VRSWN(\Zi)$ where 
\[
z_{ij}^l = 
\begin{cases}
    r_l(s,(a_i,b_j)) + \sum_{s'\in S}{P^*_{s (a_i,b_j) s', n}\cdot \valg^l(s',n+1)} &\ift k_l-n > 0,\\
    0 &\otherwiset.
\end{cases}
\]

\vspace{-0.2cm}
\section{Supplementary Experimental Results}
\label{sec:experiment-results-supp}

\Cref{tab:model-stats} reports model statistics for the benchmarks used, while \Cref{tab:mc-stats-full,tab:nz-mc-stats-full} present extended experimental results for zero-sum and nonzero-sum verification, respectively, covering a broader set of benchmarks than in the main paper.

The setup of \Cref{tab:mc-stats-full,tab:nz-mc-stats-full} is as follows. For each case study, we recorded the maximum and average number of actions per coalition in the matrix games solved during RVI or RBI, along with the total number of iterations performed. For infinite-horizon reachability properties of the form \texttt{F $\phi$} (``eventually $\phi$ is satisfied'')\footnote{See \cite{kwiatkowska2021automatic} for the interpretation of the properties expressed in rPATL.} where RVI is applied in two phases (see \Cref{sec:mc-infh}), iteration counts are reported for both phases. 
Further, following \cite{kwiatkowska2021automatic}, we report the verification timing divided into two components: 
\begin{enumerate*}[label=\arabic*)]
    \item \emph{qualitative} analysis (``Qual.''), which involves pre-computing the set of states from which the target is reached with probability 0 or 1, and is inapplicable to finite-horizon properties; and 
    \item \emph{quantitative} verification (``Quant.''), which includes the time spent solving NFGs.
\end{enumerate*}


\begin{table}[!t]
\renewcommand{\arraystretch}{0.95}
\centering
\caption{Model statistics for the (interval) CSG case studies. Model construction time is mostly the same for the CSGs and ICSGs.}
\label{tab:model-stats}

\begin{adjustbox}{width=0.9\textwidth}
\begin{tabular}{|c|r||c|r|r|r|r|}
\hline
\multicolumn{1}{|c|}{\textbf{Case study}} & 
\multicolumn{1}{c||}{\textbf{Param.}} & 
\multicolumn{1}{c|}{\textbf{Players}} & 
\multicolumn{1}{c|}{\textbf{States}} & 
\multicolumn{1}{c|}{\textbf{Trans.}} & 
\multicolumn{2}{c|}{\textbf{Constr. time (s)}} \\
\cline{6-7}
\textbf{[params]} & \textbf{values} & & & & \multicolumn{1}{c|}{\textbf{CSG}} & \multicolumn{1}{c|}{\textbf{ICSG}} \\
\hline

\multirow{5}{*}{\shortstack{\textbf{Robot coordination} \\ $[l]$}} 
 & 4  & \multirow{5}{*}{2} & 226     & 6,610       & 0.10 & 0.11 \\
 & 8  & & 3,970    & 201,650     & 0.47 & 0.55 \\
 & 12 & & 20,450   & 1,221,074  & 1.15 & 1.6 \\
 & 16 & & 65,026   & 4,198,450  & 3.65 & 3.86 \\
 & 24 & & 330,626  & 23,049,650  & 18.01 & 24.73 \\
\hline

\multirow{5}{*}{\shortstack{\textbf{Future markets} \\ $[months]$}} 
 & 6  & \multirow{5}{*}{3} & 33,338   & 143,487     & 1.148 & 0.839 \\
 & 12 & & 254,793  & 1,257,112 & 3.20 & 3.16 \\
 & 24 & & 826,617  & 4,315,864 & 11.66 & 12.58 \\
 & 36 & & 1,398,441 & 7,374,616    & 23.45 & 20.80 \\
 & 48 & & 1,970,265 & 10,433,368   & 27.80 & 29.60 \\
\hline

\multirow{4}{*}{\shortstack{\textbf{User centric network} \\ $[td,K]$}} 
 & 3  & \multirow{4}{*}{7} & 32,214   & 121,659     & 1.215 & 1.554 \\
 & 4  & & 104,897  & 433,764     & 2.982 & 2.957 \\
 & 5  & & 294,625  & 1,325,100    & 6.977 & 24.357 \\
 & 6  & & 714,849  & 3,465,558    & 17.891 & 24.132 \\
\hline

\multirow{4}{*}{\shortstack{\textbf{Aloha (deadline)} \\ $[b_{\max}, D]$}} 
 & 2  & \multirow{4}{*}{3} & 14,230   & 28,895      & 0.48 & 0.61 \\
 & 3  & & 72,566   & 181,438     & 1.25 & 1.31 \\
 & 4  & & 413,035  & 1,389,128    & 5.35 & 5.92 \\
 & 5  & & 2,237,981 & 9,561,201    & 26.51 & 29.13 \\
\hline

\multirow{4}{*}{\shortstack{\textbf{Aloha} \\ $[b_{\max}]$}} 
 & 2  & \multirow{4}{*}{3} & 14,230   & 28,895      & 0.46 & 0.48 \\
 & 3  & & 72,566   & 181,438     & 1.40 & 1.14 \\
 & 4  & & 413,035  & 1,389,128    & 4.89 & 5.70 \\
 & 5  & & 2,237,981 & 9,561,201    & 29.37 & 29.91 \\
\hline

\multirow{4}{*}{\shortstack{\textbf{Intrusion detection} \\ $[K,rounds]$}} 
 & 25  & \multirow{4}{*}{2} & 75 & 483 & 0.05 & 0.07 \\
 & 50  & & 150 & 983 & 0.09 & 0.10 \\
 & 100 & & 300 & 1,983 & 0.09 & 0.12 \\
 & 200 & & 600    & 3,983       & 0.10 & 0.12 \\
\hline

\multirow{4}{*}{\shortstack{\textbf{Jamming radio systems} \\ $[chans,slots]$}} 
 & 4,6 & \multirow{4}{*}{2} & 531 & 45,004 & 0.39 & 0.44 \\
 & 4,12 & & 1,623 & 174,796 & 0.77 & 0.90 \\
 & 6,6  & & 531 & 45,004 & 0.46 & 0.41 \\
 & 6,12 & & 1,623  & 174,796 & 0.76 & 0.87 \\
\hline

\multirow{4}{*}{\shortstack{\textbf{Medium access} \\ $[e_{\max}]$}} 
 & 10  & \multirow{4}{*}{3} & 10,591 & 135,915 & 1.17 & 0.80 \\
 & 15  & & 33,886 & 457,680 & 2.44 & 1.91 \\
 & 20  & & 78,181 & 1,083,645 & 4.53 & 4.91 \\
 & 25  & & 150,226 & 2,115,060 & 8.77 & 8.51 \\
\hline

\multirow{4}{*}{\shortstack{\textbf{Medium access} \\ $[e_{\max},s_{\max}]$}} 
 & 4,2 & \multirow{4}{*}{3} & 14,723 & 129,097 & 0.74 & 1.17 \\
 & 4,4  & & 18,751 & 147,441 & 0.87 & 1.22 \\
 & 6,4  & & 122,948 & 1,233,976 & 6.31 & 5.81 \\
 & 6,6  & & 138,916 & 1,315,860 & 5.36 & 7.21 \\
\hline

\multirow{4}{*}{\shortstack{\textbf{Power control} \\ $[e_{\max},pow_{\max}]$}} 
 & 40,8 & \multirow{4}{*}{2} & 32,812 & 260,924 & 0.57 & 0.88 \\
 & 40,16 & & 34,590 & 291,766 & 0.59 & 0.96 \\
 & 80,8 & & 193,396 & 1,469,896 & 2.24 & 2.38 \\
 & 80,16 & & 301,250 & 2,627,278 & 3.26 & 3.81 \\
\hline

\end{tabular}

\end{adjustbox}

\end{table}

\begin{table}[!t]
\centering
\caption{\centering Full verification statistics for \emph{zero-sum} CSGs and ICSGs. 
Values are shown to 2 decimal places, so smaller changes may be present but not visible at this precision.}
\label{tab:mc-stats-full}

\begin{adjustbox}{width=\textwidth}
\begin{tabular}{|c|r||c|r|r|r|r|r|r|r|r|}
\hline
\multicolumn{1}{|c|}{\textbf{Case study}} & 
\multicolumn{1}{c||}{\textbf{Param.}} & 
\multicolumn{1}{c|}{\textbf{Actions}} & 
\multicolumn{2}{c|}{\textbf{Val. Iters}} & 
\multicolumn{4}{c|}{\textbf{Verif. time (s)}} & 
\multicolumn{2}{c|}{\textbf{Value}} \\
\cline{6-9}
\multicolumn{1}{|c|}{\textbf{Property}} &
\multicolumn{1}{c||}{\textbf{values}} & 
\multicolumn{1}{c|}{\textbf{max/avg}} & 
\multicolumn{1}{c}{} & 
\multicolumn{1}{c|}{} & 
\multicolumn{2}{c|}{\textbf{Qual.}} & 
\multicolumn{2}{c|}{\textbf{Quant.}} & 
\multicolumn{1}{c}{} & 
\multicolumn{1}{c|}{} \\
\cline{4-11}
\multicolumn{1}{|c|}{\textbf{[params], $\epsilon$}} &
\multicolumn{1}{c||}{} & 
\multicolumn{1}{c|}{} & 
\multicolumn{1}{c|}{\textbf{CSG}} & 
\multicolumn{1}{c|}{\textbf{ICSG}} & 
\multicolumn{1}{c|}{\textbf{CSG}} & 
\multicolumn{1}{c|}{\textbf{ICSG}} & 
\multicolumn{1}{c|}{\textbf{CSG}} & 
\multicolumn{1}{c|}{\textbf{ICSG}} & 
\multicolumn{1}{c|}{\textbf{CSG}} & 
\multicolumn{1}{c|}{\textbf{ICSG}} \\
\hline \hline

\multirow{3}{*}{\shortstack{\textbf{Robot coordination} \\ $\llangle{rbt_1}\rrangle \Pt_{\max=?} [ \lnot{c} \until^{\leq k} g_1 ]$ \\ $[l,k], \ 0.01$}} 
 & 4,4 & 3,3/2.07,2.07 & 4 & 4 & n/a & n/a & 0.05 & 0.09 & 0.81 & 0.77 \\
 & 8,8 & 3,3/2.52,2.52 & 8 & 8 & n/a & n/a & 0.59 & 0.97 & 0.66 & 0.58 \\
 & 12,12 & 3,3/2.68,2.68 & 12 & 12 & n/a & n/a & 3.90 & 7.21 & 0.52 & 0.42 \\
\hline

\multirow{3}{*}{\shortstack{\textbf{Robot coordination} \\ $\llangle{rbt_1}\rrangle \Rt_{\min=?} [\eventually g_1]$ \\ $[l], \ 0.01$}} 
 & 4 & 3,3/2.07,2.07 & 10;9 & 9;9 & 0.03 & 0.04 & 0.14 & 0.22 & 4.55 & 4.39 \\
 & 8 & 3,3/2.52,2.52 & 15;14 & 15;14 & 0.43 & 0.61 & 2.44 & 3.54 & 8.89 & 8.63 \\
 & 12 & 3,3/2.68,2.68 & 20;19 & 19;18 & 1.64 & 3.27 & 16.11 & 31.59 & 13.15 & 12.84 \\
\hline

\multirow{5}{*}{\shortstack{\textbf{Future markets} \\ $\llangle{i_1}\rrangle \Rt_{\max=?}[\eventually c_1]$ \\ $[months], \ 0.01$}} 
 & 6 & 8,2/1.55,1.22 & 14 & 14 & 1.76 & 1.63 & 2.02 & 2.49 & 4.58 & 4.46 \\
 & 12 & 8,2/1.68,1.27 & 26 & 26 & 14.78 & 20.63 & 31.77 & 46.19 & 4.80 & 4.66 \\
 & 24 & 8,2/1.73,1.29 & 50 & 50 & 71.04 & 101.39 & 261.39 & 437.02 & 4.85 & 4.70 \\
 & 36 & 8,2/1.74,1.29 & 72 & 72 & 133.60 & 180.46 & 721.55 & 831.31 & 4.85 & 4.70 \\
 & 48 & 8,2/1.75,1.29 & 80 & 80 & 208.65 & 288.32 & 2181.12 & 2263.58 & 4.85 & 4.70 \\
\hline

\multirow{4}{*}{\shortstack{\textbf{User centric network} \\ $\llangle{usr}\rrangle  \Rt_{\min=?} [\eventually f]$ \\ $[K], \ 0.01$}} 
 & 3 & 16,8/2.11,1.91 & 45;15 & 44;15 & 0.83 & 1.40 & 788.73 & 833.36 & 0.04 & 0.03 \\
 & 4 & 16,8/2.31,1.92 & 60;21 & 60;21 & 3.52 & 3.82 & 3521.83 & 3725.29 & 4.00 & 4.00 \\
 & 5 & 16,8/2.46,1.94 & 79;25 & 79;23 & 15.64 & 21.51 & 12960.40 & 12789.04 & 4.04 & 4.03 \\
 & 6 & 16,8/2.60,1.96 & 98;31 & 98;29 & 64.67 & 97.75 & 39082.23 & 39676.62 & 7.00 & 7.00 \\
\hline

\multirow{4}{*}{\shortstack{\textbf{Aloha (deadline)} \\ $\llangle{u_2},{u_3}\rrangle \Pt_{\max=?}[\eventually (s_{2,3} \land t\leq D)]$ \\ $[b_{\max},D], \ 1/257$}} 
 & 2,8 & 2,4/1.00,1.00 & 24 & 24 & 0.83 & 1.06 & 1.22 & 1.30 & 0.94 & 0.94 \\
 & 3,8 & 2,4/1.00,1.00 & 23 & 23 & 4.58 & 4.12 & 1.84 & 2.65 & 0.92 & 0.91 \\
 & 4,8 & 2,4/1.00,1.00 & 23 & 23 & 32.22 & 49.47 & 5.31 & 9.27 & 0.92 & 0.91 \\
 & 5,8 & 2,4/1.00,1.00 & 23 & 23 & 294.96 & 440.13 & 21.28 & 37.11 & 0.92 & 0.91 \\
\hline

\multirow{4}{*}{\shortstack{\textbf{Aloha} \\ $\llangle{u_2},{u_3}\rrangle \Rt_{\min=?} [\eventually s_{2,3}]$ \\ $[b_{\max}], \ 1/257$}} 
 & 2 & 4,2/1.00,1.00 & 58;47 & 57;46 & 0.37 & 0.47 & 5.27 & 5.57 & 4.34 & 4.28 \\
 & 3 & 4,2/1.00,1.00 & 71;57 & 69;56 & 1.19 & 1.71 & 18.43 & 26.28 & 4.54 & 4.46 \\
 & 4 & 4,2/1.00,1.00 & 109;86 & 106;84 & 13.35 & 20.69 & 225.42 & 291.32 & 4.62 & 4.53 \\
 & 5 & 4,2/1.00,1.00 & 193;150 & 184;143 & 350.42 & 386.75 & 4667.43 & 4257.41 & 4.65 & 4.54 \\
\hline

\multirow{4}{*}{\shortstack{\textbf{Intrusion detection} \\ $\llangle{policy}\rrangle  \Rt_{\min=?} [\eventually (r={rounds})]$ \\ $[rounds], \ 0.01$}} 
 & 25 & 2,2/1.96,1.96 & 26;26 & 26;26 & 0.03 & 0.08 & 0.18 & 0.25 & 14.58 & 14.23 \\
 & 50 & 2,2/1.98,1.98 & 51;51 & 51;51 & 0.10 & 0.13 & 0.50 & 0.64 & 29.99 & 29.24 \\
 & 100 & 2,2/1.99,1.99 & 101;101 & 101;101 & 0.26 & 0.31 & 1.41 & 1.57 & 60.80 & 59.25 \\
 & 200 & 2,2/2.00,2.00 & 201;201 & 201;201 & 0.49 & 0.61 & 4.19 & 5.03 & 122.43 & 119.28 \\
\hline

\multirow{4}{*}{\shortstack{\textbf{Jamming radio systems} \\ $\llangle{u}\rrangle  \Pt_{\max=?} [ \eventually ({sent}\geq {slots}/2) ]$ \\ $[chans,slots], \ 0.01$}} 
 & 4,6 & 3,3/2.17,2.17 & 7 & 7 & 0.21 & 0.31 & 0.11 & 0.14 & 0.84 & 0.80 \\
 & 4,12 & 3,3/2.49,2.49 & 13 & 13 & 1.14 & 2.30 & 0.25 & 0.64 & 0.77 & 0.71 \\
 & 6,6 & 3,3/2.17,2.17 & 7 & 7 & 0.18 & 0.31 & 0.08 & 0.15 & 0.84 & 0.80 \\
 & 6,12 & 3,3/2.49,2.49 & 13 & 13 & 1.13 & 1.62 & 0.34 & 0.75 & 0.77 & 0.71 \\
\hline

\multirow{4}{*}{\shortstack{\textbf{Jamming radio systems} \\ $\llangle{u}\rrangle  \Rt_{\max=?} [ \eventually (t={slots}+1) ]$ \\ $[chans,slots], \ 0.01$}} 
 & 4,6 & 3,3/2.17,2.17 & 8;8 & 8;8 & 0.10 & 0.20 & 0.39 & 0.65 & 3.23 & 3.12 \\
 & 4,12 & 3,3/2.49,2.49 & 14;14 & 14;14 & 0.74 & 1.04 & 2.06 & 3.47 & 6.33 & 6.11 \\
 & 6,6 & 3,3/2.17,2.17 & 8;8 & 8;8 & 0.11 & 0.15 & 0.56 & 0.69 & 3.23 & 3.12 \\
 & 6,12 & 3,3/2.49,2.49 & 14;14 & 14;14 & 0.51 & 1.17 & 2.12 & 3.50 & 6.33 & 6.11 \\
\hline
\end{tabular}
\end{adjustbox}
\end{table}

\begin{table}[!t]
\renewcommand{\arraystretch}{1.2}
\centering
\caption{\centering Full verification statistics for \emph{nonzero-sum} CSGs and ICSGs verified within 6 hours, following the same setup as \Cref{tab:mc-stats-full}.}
\label{tab:nz-mc-stats-full}

\begin{adjustbox}{width=\textwidth}
\begin{tabular}{|c|r||c|r|r|r|r|r|r|r|r|}
\hline
\multicolumn{1}{|c|}{\textbf{Case study: [params], $\epsilon$}} & 
\multicolumn{1}{c||}{\textbf{Param.}} & 
\multicolumn{1}{c|}{\textbf{Actions}} & 
\multicolumn{2}{c|}{\textbf{Val. Iters}} & 
\multicolumn{4}{c|}{\textbf{Verif. time (s)}} & 
\multicolumn{2}{c|}{\textbf{Value}} \\
\cline{6-9}
\multicolumn{1}{|c|}{\textbf{Property}} &
\multicolumn{1}{c||}{\textbf{values}} & 
\multicolumn{1}{c|}{\textbf{max/avg}} & 
\multicolumn{1}{c}{} & 
\multicolumn{1}{c|}{} & 
\multicolumn{2}{c|}{\textbf{Qual.}} & 
\multicolumn{2}{c|}{\textbf{Quant.}} & 
\multicolumn{1}{c}{} & 
\multicolumn{1}{c|}{} \\
\cline{4-11}
\multicolumn{1}{|c|}{} &
\multicolumn{1}{c||}{} & 
\multicolumn{1}{c|}{} & 
\multicolumn{1}{c|}{\textbf{CSG}} & 
\multicolumn{1}{c|}{\textbf{ICSG}} & 
\multicolumn{1}{c|}{\textbf{CSG}} & 
\multicolumn{1}{c|}{\textbf{ICSG}} & 
\multicolumn{1}{c|}{\textbf{CSG}} & 
\multicolumn{1}{c|}{\textbf{ICSG}} & 
\multicolumn{1}{c|}{\textbf{CSG}} & 
\multicolumn{1}{c|}{\textbf{ICSG}} \\
\hline \hline

\multirow{3}{*}{\shortstack{\textbf{Robot coordination}: $[l,k], \ 0.01$ \\ $\llangle{r_1:r_2}\rrangle_{\max=?} \left( \Pt [ \lnot{c} \until^{\leq k} g_1 ] + \Pt [ \lnot{c} \until^{\leq k} g_2 ] \right)$}}
& 4,4 & 3,3/2.07,2.07 & 4 & 4 & 0.04 & 0.03 & 0.22 & 0.58 & 1.55 & 1.50 \\
& 8,8 & 3,3/2.52,2.52 & 8 & 8 & 0.26 & 0.71 & 0.77 & 54.03 & 0.92 & 0.84 \\
& 12,12 & 3,3/2.68,2.68 & 12 & 12 & 3.46 & 13.51 & 5.08 & 2882.09 & 0.49 & 0.40 \\
\hline

\multirow{3}{*}{\shortstack{\textbf{Robot coordination}: $[l,k_1,k_2], \ 0.01$ \\ $\llangle{r_1:r_2}\rrangle_{\max=?} \left( \Pt [ \lnot{c} \until^{\leq k_1} g_1 ] + \Pt [ \lnot{c} \until^{\leq k_2} g_2 ] \right)$}} 
& 4,4,6 & 3,3/2.07,2.07 & 6 & 6 & 0.02 & 0.03 & 0.09 & 0.42 & 1.96 & 1.95 \\
& 8,8,10 & 3,3/2.52,2.52 & 10 & 10 & 0.26 & 0.95 & 0.97 & 65.45 & 1.87 & 1.82 \\
& 12,12,14 & 3,3/2.68,2.68 & 14 & 14 & 3.61 & 17.33 & 4.42 & 4308.64 & 1.75 & 1.66 \\
\hline

\multirow{2}{*}{\shortstack{\textbf{Robot coordination}: $[l,k], \ 0.01$ \\ $\llangle{r_1:r_2}\rrangle_{\max=?} \left( \Pt [ \lnot{c} \until^{\leq k} g_1 ] + \Pt [ \lnot{c} \until\, g_2 ] \right)$}}
& 4,8 & 3,3/2.10,2.04 & 14 & 11 & 0.68 & 0.18 & 4.36 & 17.36 & 2.00 & 2.00 \\
& 4,16 & 3,3/2.12,2.05 & 14 & 11 & 0.23 & 0.34 & 1.86 & 74.89 & 2.00 & 2.00 \\
& 8,8 & 3,3/2.53,2.51 & 23 & 24 & 2.40 & 3.06 & 15.11 & 18432.63 & 1.52 & 1.45 \\
\hline

\multirow{2}{*}{\shortstack{
    \textbf{Robot coordination}: $[l], \ 0.01$ \\
    $\llangle{r_1:r_2}\rrangle_{\min=?} \left( \Rt [ \eventually g_1 ] + \Rt [ \eventually g_2 ] \right)$}} 
& \multicolumn{10}{c|}{\multirow{2}{*}{no RNE}} \\
& \multicolumn{10}{c|}{} \\
\hline

\multirow{2}{*}{\shortstack{\textbf{Future markets}: $[months], \ 0.01$ \\ $\llangle{i_1:i_2}\rrangle_{\max=?} \left( \Rt[\eventually c_1] + \Rt[\eventually c_2] \right)$}}
 & 3 & 2,2/1.12,1.12 & 6 & 6 & 0.33 & 0.04 & 0.89 & 1.74 & 10.00 & 9.86 \\
 & 6 & 2,2/1.20,1.20 & 12 & 12 & 0.28 & 0.37 & 5.11 & 224.33 & 10.25 & 10.19 \\
\hline

\multirow{3}{*}{\shortstack{\textbf{Aloha (deadline)}: $[b_{\max},D], \ 1/257$\\ $\llangle u_1:u_2,u_3 \rrangle_{\max=?} \left( \Pt[\eventually s_1] + \Pt[\eventually s_{2,3}] \right)$}} 
 & 1,8 & 2,4/1.00,1.01 & 23 & 23 & 0.12 & 0.13 & 0.33 & 5.00 & 1.99 & 1.99 \\
 & 2,8 & 2,4/1.00,1.00 & 23 & 23 & 0.32 & 0.30 & 0.85 & 107.33 & 1.98 & 1.97 \\
 & 3,8 & 2,4/1.00,1.00 & 22 & 22 & 0.70 & 1.15 & 3.25 & 3305.08 & 1.97 & 1.97 \\
\hline

\multirow{2}{*}{\shortstack{\textbf{Aloha}: $[b_{\max}], \ 1/257$ \\ $\llangle u_1:u_2,u_3 \rrangle_{\min=?} \left(\Rt[\eventually s_1] + \Rt[\eventually s_{2,3}] \right)$}} 
 & 2 & 2,4/1.00,1.00 & 55 & 57 & 0.04 & 0.07 & 1.16 & 46.98 & 1.00 & 1.00 \\
 & 3 & 2,4/1.00,1.00 & 66 & 68 & 0.21 & 0.26 & 3.59 & 1201.46 & 1.00 & 1.00  \\
\hline

\multirow{2}{*}{\shortstack{\textbf{Medium access}: $[e_{\max},k], \ 0.01$ \\ $\llangle p_1:p_2,p_3 \rrangle_{\max=?} \left(\Rt[\cumreward{k}] + \Rt[\cumreward{k}] \right)$}} 
 & 10,25 & 2,4/1.91,3.63 & 25 & 25 & n/a & n/a & 579.92 & 813.10 & 26.10 & 25.80 \\
 & 15,25 & 2,4/1.94,3.75 & 25 & 25 & n/a & n/a & 1493.55 & 6878.30 & 34.35 & 33.90 \\
\hline

\multirow{2}{*}{\shortstack{\textbf{Medium access}: $[e_{\max},k_1,k_2], \ 0.01$ \\ $\llangle p_1:p_2,p_3 \rrangle_{\max=?} \left(\Rt[\cumreward{k_1}] + \Rt[\cumreward{k_2}] \right)$}} 
 & 10,20,25 & 2,4/1.91,3.63 & 20 & 20 & 0.27 & 0.16 & 577.33 & 614.30 & 26.10 & 25.88 \\
 & 15,20,25 & 2,4/1.94,3.75 & 20 & 20 & 0.26 & 0.34 & 1147.76 & 6108.81 & 34.35 & 34.06 \\
\hline

\multirow{2}{*}{\shortstack{\textbf{Medium access}: $[e_{\max},s_{\max}], \ 0.01$ \\ $\llangle p_1:p_2,p_3 \rrangle_{\max=?} \left(\Pt[\eventually m_1] + \Pt[\eventually m_{2,3}] \right)$}} 
 & 4,2 & 2,4/1.70,2.88 & 10 & 10 & 0.50 & 0.38 & 37.23 & 199.41 & 2.00 & 2.00 \\
 & 4,4 & 2,4/1.64,2.70 & 12 & 12 & 0.93 & 0.47 & 26.40 & 392.65 & 1.48 & 1.48 \\
\hline

\multirow{2}{*}{\shortstack{\textbf{Power control}: $[e_{\max},pow_{\max}], \ 0.01$ \\ $\llangle p_1:p_2 \rrangle_{\max=?} \left(\Rt[\eventually e_1=0] + \Rt[\eventually e_2=0] \right)$}} 
 & 40,8 & 2,2/1.91,1.91 & 40 & 40 & 2.10 & 0.90 & 13.69 & 2613.72 & 13357.80 & 13354.41 \\
 & 40,16 & 2,2/1.95,1.95 & 40 & 40 & 1.99 & 0.75 & 11.68 & 3260.81 & 13357.80 & 13354.41 \\
\hline


\multirow{2}{*}{\shortstack{\textbf{Power control}: $[e_{\max},pow_{\max},k_1,k_2], \ 0.01$ \\ $\llangle p_1:p_2 \rrangle_{\max=?} \left(\Rt[\instreward{k_1}] + \Rt[\instreward{k_2}] \right)$}} 
 & 80,4,15,20 & 2,2/1.69,1.69 & 24 & 24 & 0.80 & 1.00 & 7.50 & 3400.22 & 2.00 & 2.00 \\
 & 80,4,20,20 & 2,2/1.69,1.69 & 24 & 24 & 0.87 & 1.70 & 7.61 & 3304.92 & 2.00 & 2.00 \\
\hline
\end{tabular}
\end{adjustbox}
\end{table}

\fi

\end{document}